\newtheorem{definition}{Definition}
\newtheorem{theorem}{Theorem}
\newtheorem{lemma}[theorem]{Lemma}
\newtheorem{corollary}[theorem]{Corollary}
\newtheorem{example}{Example}
\begin{document}

\newcommand{\be}{\begin{equation}}
\newcommand{\ee}{\end{equation}}
\newcommand{\bea}{\begin{eqnarray}}
\newcommand{\eea}{\end{eqnarray}}
\newcommand{\beaa}{\begin{eqnarray*}}
\newcommand{\eeaa}{\end{eqnarray*}}

\title{Two-way source coding with a helper}%We'll see

\author{Haim Permuter,  Yossef Steinberg and Tsachy Weissman \\
\thanks{The work of H.
Permuter and T. Weissman is supported by NSF grants 0729119 and
0546535. The work of Y. Steinberg is supported by the ISF (grant No.
280/07).
Author's emails: haimp@bgu.ac.il,  ysteinbe@ee.technion.ac.il, and tsachy@stanford.edu}%
%\thanks{H. Permuter and T. Weissman are with the Department of Electrical Engineering, Stanford University, Stanford, CA 94305, USA.
%(Email: \{haim1, tsachy\}@stanford.edu). J. Chen is with the
%Department of Electrical and Computer Engineering, McMaster
%University, (Email: \{junchen@ece.mcmaster.ca) }
%\thanks{This work was
%supported by the NSF through the CAREER award and TFR-0729119
%grant.}
%\thanks{M. Shell is with the Georgia Institute of Technology.}}
}%
%
%\markboth{Submitted to IEEE Transactions on Information Theory,
%Aug. 2007, revised Aug 2008}{Shell \MakeLowercase{\textit{et
%al.}}: Capacity Region of the Finite-State Multiple Access Channel
%with and without Feedback}

\maketitle \vspace{-1.4cm}
\begin{abstract}%
Consider the two-way rate-distortion problem in which a helper sends
a common limited-rate message to both users based on side
information at its disposal. We characterize the region of
achievable rates and distortions where a Markov form (Helper)-(User
1)-(User 2) holds. The main insight of the result is that in order
to achieve the optimal rate, the helper may use a binning scheme, as
in Wyner-Ziv, where the side information at the decoder is the
``further" user, namely, User 2.   We derive these regions
explicitly for  the Gaussian sources with square error distortion,
analyze a trade-off between the rate from the helper and the rate
from the source, and examine a special case where the helper has the
freedom to send different messages, at different rates, to the
encoder and
the decoder. %We show that ``more help'' to the encoder than to the
%decoder does not yield any performance gain and that in such cases
%the freedom to send different messages to the encoder and the
%decoder yields no gain over the case of a common message. Further,
%in this setting of different messages, the rate to the encoder can
%be strictly less than that to the decoder with no performance
%loss.conclude few
The converse proofs use a new technique for
verifying Markov relations via undirected graphs.
\end{abstract}
\begin{keywords}
Rate-distortion, two-way rate distortion, undirected graphs,
verification of Markov relations, Wyner-Ziv source coding.
\end{keywords}
\vspace{-0.0cm}
\section{Introduction}
In this paper, we consider the problem of two-way source encoding
with a fidelity criterion in a situation where both users  receive a
common message from a helper.
\begin{figure}[h]{
\psfrag{b1}[][][1]{$X$} \psfrag{box1}[][][1]{User X}
\psfrag{box3}[][][1]{Helper } \psfrag{a2}[][][1]{$R$}
\psfrag{box2}[][][1]{User Z} \psfrag{b3}[][][1]{$\hat X$}
\psfrag{b4}[][][1]{$\hat Z$} \psfrag{Y}[][][1]{$Y$}
\psfrag{Z}[][][1]{$Z$} \psfrag{t1}[][][1]{$R_1$}
\psfrag{t2}[][][1]{$R_2$} \psfrag{t3}[][][1]{$R_3$}
\psfrag{W}[][][1]{$$} \psfrag{X}[][][1]{$$} \psfrag{V}[][][1]{$$}
\psfrag{YZ}[][][1]{$$}\psfrag{D}[][][1]{$$}
%\centerline{\includegraphics[width=13cm]{c:/home/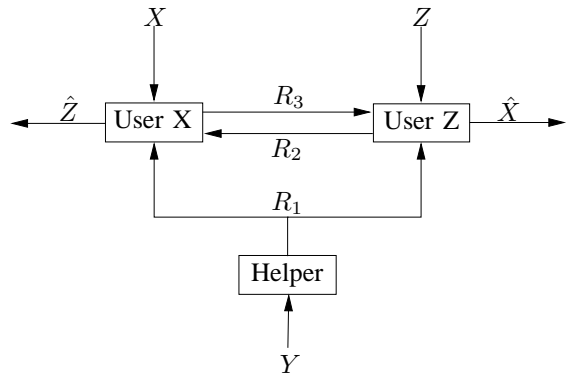}}
\centerline{\includegraphics[width=12cm]{rate_dist_two_way.eps}}
\caption{The two-way rate distortion problem with a helper. First
Helper Y sends a common message to User X and to User Z,  then User
Z sends a message to User X, and finally User X sends a message to
User Z. The goal is that User X will reconstruct the sequence $Z^n$
within a fidelity criterion $\mathbb{E}\left[\frac{1}{n}\sum_{i=1}^n
d_z(Z_i,\hat Z_i)\right]\leq D_z$, and User Z will reconstruct the
source $X^n$ within a fidelity criterion
$\frac{1}{n}\mathbb{E}\left[\sum_{i=1}^n d_x(X_i,\hat
X_i)\right]\leq D_x.$ We assume that the side information $Y$ and
the two sources $X,Z$ are i.i.d. and form the Markov chain $Y-X-Z$.}
\label{f_helper_two_way} }\end{figure} The problem is presented in
Fig. \ref{f_helper_two_way}. Note that the case in which the helper
is absent was introduced and solved by Kaspi \cite{Kaspi85_two_way}.

The encoding and decoding is done in  blocks of length $n$. The
communication protocol is that Helper Y first sends a common message
at rate $R_1$ to  User X and to User Z, and then User Z sends a
message at rate $R_2$ to User X, and finally, User X sends a message
to User Z at rate $R_3$. Note that user Z sends his message after it
received only one message, while Sender X sends its message after it
received two messages. We assume that the sources and the helper
sequences are i.i.d. and form the Markov chain $Y-X-Z$. User $X$
receives two messages (one from the helper and one from User Z) and
reconstructs the source $Z^n$. We assume that the fidelity (or
distortion) is of the form $\mathbb{E}\left[\frac{1}{n}\sum_{i=1}^n
d_z(Z_i,\hat Z_i)\right]$ and that this term should be less than a
threshold $D_z$. User $Z$ also receives two messages (one from the
helper and one from User X) and reconstructs the source $X^n$. The
reconstruction $\hat X^n$ must lie within a fidelity criterion of
the form $\frac{1}{n}\mathbb{E}\left[\sum_{i=1}^n d_x(X_i,\hat
X_i)\right]\leq D_x$.

Our main result in this paper is that the achievable region for this
problem is given by $\mathcal R(D_x,D_z)$, which is defined as the
set of all rate triples $(R_1,R_2,R_3)$ that satisfy
\begin{eqnarray}
R_1&\geq& I(Y;U|Z), \label{e_R1_twoway}  \\
R_2&\geq& I(Z;V|U,X),\label{e_R2_twoway} \\
R_3&\geq& I(X;W|U,V,Z),\label{e_R3_twoway}
\end{eqnarray}
for some joint distribution of the form
\begin{equation}
\label{e_p_twoway}
p(x,y,z,u,v,w)=p(x,y)p(z|x)p(u|y)p(v|u,z)p(w|u,v,x),
\end{equation}
where $U$, $V$ and $W$ are auxiliary random variables with bounded
cardinality. The reconstruction variable $\hat Z$ is a deterministic
function of the triple $(U,V,X)$, and the reconstruction  $\hat X$
is a deterministic function of the triple $(U,W,Z)$ such that
\begin{eqnarray}\label{e_def_dist}
\mathbb{E}d_x(X,\hat X(U,V,Z))&\leq& D_x,\nonumber \\
\mathbb{E}d_z(Z,\hat Z(U,W,X))&\leq& D_z.
\end{eqnarray}

%This region can be achieved by a communication scheme that is based
%on the  Wyner-Ziv\cite{WynZiv73} scheme for the case where only the
%decoder has side information.

 The main insight gained from this
region is that the helper may use a code based on binning that is
designed for a decoder with side information, as in Wyner and
Ziv\cite{Wyner_ziv76_side_info_decoder}. User $X$ and User $Z$ do
not have the same side information, but it is sufficient to design
the helper's code assuming that the side information at the decoder
is the one that is ``further" in the Markov chain, namely, $Z$.
Since a distribution of the form (\ref{e_p_twoway}) implies that
$I(U;Z)\leq I(U;X)$, a Wyner-Ziv code at rate $R_1\geq I(Y;U|Z)$
would be decoded successfully both by User Z and by User X. Once the
helper's message has been decoded by both users, a two-way source
coding  is performed where both users have additional side
information $U^n$.

Several papers on related problems have appeared in the past in the
literature. Wyner~\cite{Wyner75_WAK} studied a problem of network
source coding with compressed side information that is provided only
to the decoders. A special case of his model is the system in
Fig.~\ref{f_helper_two_way} but without the memoryless side
information $Z$ and where the stream carrying the  helper's message arrives only
at the decoder (User Z). A full characterization of the achievable
region can be concluded from the results of~\cite{Wyner75_WAK} for
the special case where the source $X$ has to be reconstructed
losslessly. This problem was solved independently by Ahlswede and
K\"{o}rner in~\cite{Ahlswede-Korner75}, but the extension of these
results to the case of lossy reconstruction of $X$ remains open.
Kaspi~\cite{Kaspi79_dissertation} and Kaspi and
Berger~\cite{Kaspi_berger82} derived an achievable region for a
problem that contains the helper problem with degenerate $Z$ as a
special case. However, the converse part does not match.
In~\cite{Vasudevan07_helper}, Vasudevan and Perron described a
general rate distortion problem with encoder breakdown and there
they solved the case where in Fig. \ref{f_helper_two_way} one of the
sources is a constant\footnote{ The case where one of the sources is
constant was also considered independently in
\cite{Permuter_steinber_weissman08_helperunpublished}.}.
%
%Several settings of encoding two correlated sources (a.k.a
%multi-terminal source coding) have been solved in the literature.
%The first case was solved by Slepian and Wolf
%\cite{Slepian_wolf_73_source_coding}, where the goal is to reproduce
%both sources losslessy, and the encoders are ignorant of each
%other's messages. A similar setting was considered by Wyner
%\cite{Wyner75_WAK} and by Ahlswede and K\"{o}rner
%\cite{Ahlswede-Korner75}, also known as WAK, where only one source
%needs to be reconstructed losslessy; the other source is acting as a
%helper. Wyner and Ziv \cite{WynZiv76SideInformationDecoder}
%characterized the rate distortion region of correlated sources when
%one of the rates is unlimited and therefore  the associated source
%is known perfectly to the decoder.
%
%Kaspi and Berger~\cite{Kaspi_berger82} and
%Kaspi~\cite{Kaspi79_dissertation} derived an achievable scheme for a
% general case that contains the regions for all the cases above.
% In particular, case C of Theorem 2.1,
% in \cite{Kaspi_berger82,Kaspi79_dissertation} provides an
% achievable scheme for the problem in Fig. \ref{f_helper} with a general
% distortion of the form $d(x,y,\hat x)$; without a matching converse
%\footnote{In \cite[p.25-26]{Kaspi79_dissertation} there is
% an attempt to prove the converse for the region provided
% in~\cite{Kaspi_berger82}, but the Markov Chain $U-(X,W)-Y$,
% which is one of the conditions characterizing the region, is not proved.}.

Berger and Yeung~\cite{Berger_Yeung89_one_distortion} solved the
multi-terminal source coding problem where one of the two sources
needs to be reconstructed perfectly and the other source needs to be
reconstructed with a fidelity criterion. Oohama
 solved the multi-terminal
source coding case for the two~\cite{Oohama97GaussianMultiterminal}
and $L+1$ \cite{Oohama05_Lhelpers_gaussian} Gaussian sources,  in
which only one source needs to be reconstructed with a mean square
error, that is, the other $L$ sources are helpers. More recently,
Wagner, Tavildar, and  Viswanath characterized the region where both
sources \cite{Wagner08_two_sources} or $L+1$ sources
\cite{Wagner08_L_sources} need to be reconstructed at the decoder
with a mean square error criterion.

In \cite{Kaspi85_two_way}, Kaspi has introduced a multistage
communication between two users, where each user may transmit up to
$K$ messages to the other user that depends on the source and
previous received messages. In this paper we also consider the
multi-stage source coding with a common helper. The case where a
helper is absent and the communication between the users is via
memoryless channels was recently solved by Maor and Merhav
\cite{MaorM_merhav06_two_way} where they showed that a source
channel separation theorem holds.

 The remainder of the paper is organized as follows. In Section \ref{s_preliminary} we present a new technique for verifying Markov
 relations between random variables based on undirected graphs. The technique is used throughout the converse proofs.
  The problem
definition and the achievable region for two way rate distortion
problem with a common helper are presented in Section
\ref{s_definition}. Then we consider two special cases, first in
Section \ref{s_rate_helper_side_information} we consider the case of
$R_2=0$ and $D_z=\infty$, and in Section
\ref{s_wyner_ziv_helper_YZX} we consider $R_3=0$ and $D_x=\infty$.
The proofs of these two special cases provide the insight and the
tricks that are used in the  proof of the  general two-way rate distortion
problem with a helper. The proof of the achievable region for the
two-way rate distortion problem with a helper is given in Section
\ref{s_proof_t_two_way} and it is extended to a multi-stage two way
rate distortion with a helper in Section \ref{s_two_way_multi}. In
Section \ref{s_gaussian_case} we consider the Gauissan instance of
the problem and derive the region explicitly. In Section
\ref{s_wyner_ziv_helper_further} we return to the special case where
$R_2=0$ and $D_z=\infty$ and analyze the trade-off between the bits
from the helper and bits from source and gain insight for the case
where the helper sends different messages to each user, which is an
open problem.

\section{Preliminary: A technique for checking Markov relations \label{s_preliminary}}
Here we present a new technique, based on undirected graphs, that
provides a sufficient condition for establishing a Markov chain from
a joint distribution. We use this technique throughout the paper to
verify Markov relations. A different technique using directed graphs
was introduced by Pearl \cite[Ch 1.2]{Pearl00_causality},
\cite{Kramer03}.

Assume we have a set of random variables $(X_1,X_2,...,X_{N})$,
where $N$ is the size of the set. Without loss of generality, we
assume that the joint distribution has the form
\begin{equation}
p(x^N)=f(x_{\mathcal S_{1}})f(x_{\mathcal S_{2}})\cdots
f(x_{\mathcal S_{K}}),
\end{equation}
where $X_{\mathcal{S}_i} = \{ X_j \}_{j \in \mathcal{S}_i}$, where $\mathcal{S}_i$ is a subset of $\{1,2,\ldots, N\}$. The
following graphical technique provides a sufficient condition for
the Markov relation $X_{\mathcal G_{1}}-X_{\mathcal
G_{2}}-X_{\mathcal G_{3}}$, where $X_{\mathcal G_{i}},\ i=1,2,3$
denote three disjoint subsets of $X^N$.

The technique comprises two steps:
\begin{enumerate}
\item draw an undirected graph where all the random variables $X^N$
are nodes in the graph and for all $i=1,2,..K$ draw edges between
all the nodes $X_{\mathcal S_{i}}$,
\item if all paths in the graph from a node in $X_{\mathcal G_{1}}$ to
a node in $X_{\mathcal G_{3}}$ pass through a node in $X_{\mathcal G_{2}}$,
then the Markov chain $X_{\mathcal G_{1}}-X_{\mathcal
G_{2}}-X_{\mathcal G_{3}}$ holds.
\end{enumerate}
\begin{figure}[h!]{
\psfrag{X1}[][][1]{$X_1$} \psfrag{X2}[][][1]{$X_2$}
\psfrag{Y1}[][][1]{$Y_1$} \psfrag{Y2}[][][1]{$Y_2$}
\psfrag{Z1}[][][1]{$Z_1$} \psfrag{Z2}[][][1]{$Z_2$}
%{C:/cygwin/usr/X11R6/bin/home/rate_helper_side_simpl.eps}}
%\centerline{\includegraphics[width=4cm]{C:/home/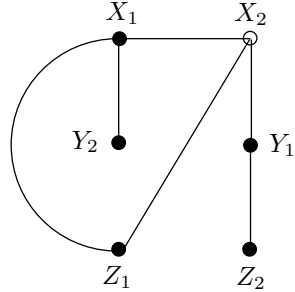}}
\centerline{\includegraphics[width=4cm]{markov_chain_example.eps}}
\caption{The undirected graph that corresponds to the joint
distribution given in (\ref{e_distr_ex}). The Markov form
$X_1-X_2-Z_2$ holds since all paths from $X_1$ to $Z_2$  pass
through $X_2$. The node with the open circle, i.e., $\circ$, is the
middle term in the Markov chain and all the other nodes are with
solid circles, i.e., $\bullet$.}\label{f_graph_example}}\end{figure}

\begin{example}
Consider the joint distribution
\begin{equation}\label{e_distr_ex}
p(x^2,y^2,z^2)=p(x_1,y_2)p(y_1,x_2)p(z_1|x_1,x_2)p(z_2|y_1).
\end{equation}
Fig. \ref{f_graph_example} illustrates the above technique for
verifying the Markov relation $X_1-X_2-Z_2$. We conclude that since
all the paths from $X_1$ to $Z_2$ pass through $X_2$, the Markov
chain $X_1-X_2-Z_2$ holds.
\end{example}
The proof of the technique is   based on the observation that if
three random variables $X,Y,Z$ have a joint distribution of the form
$p(x,y,z)=f(x,y)f(y,z)$, then the Markov chain $X-Y-Z$ holds. The
proof appears in Appendix \ref{app_proof_technique}.
%Because of space limitation, the proof is omitted.

\section{Problem definitions and main results}
\label{s_definition} Here we formally define the two-way
rate-distortion problem with a helper and present a single-letter
characterization of the achievable region. We use the regular
definitions of rate distortion and we follow the notation of
\cite{CovThom06}. The source sequences $\{X_i\in \mathcal X, \;
i=1,2,\cdots\}$, $\{Z_i\in \mathcal Z, \; i=1,2,\cdots\}$ and the
side information sequence $\{Y_i\in \mathcal Y,\; i=1,2,\cdots\}$
are discrete random variables drawn from finite alphabets $\mathcal
X$, $\mathcal Z$ and $\mathcal Y$, respectively. The random
variables $(X_i,Y_i,Z_i)$ are  i.i.d. $\sim p(x,y,z)$. Let
$\hat{\cal X}$ and $\hat{\cal Z}$ be the reconstruction alphabets,
and $d_x:\ {\cal X}\times{\hat{\cal X}}\rightarrow [0,\infty)$,
$d_z:\ {\cal Z}\times{\hat{\cal Z}}\rightarrow [0,\infty)$ be single
letter distortion measures. Distortion between sequences is defined
in the usual way
\begin{eqnarray}
d(x^n,\hat{x}^n) &=& \frac{1}{n} \sum_{i=1}^n
d(x_i,\hat{x}_i)\nonumber \\
d(z^n,\hat{z}^n) &=& \frac{1}{n} \sum_{i=1}^n d(z_i,\hat{z}_i).
\end{eqnarray}
Let $\mathcal M_i,\ $ denote a set of positive integers
$\{1,2,..,M_i\}$ for $i=1,2,3$.
\begin{definition}\label{def_code}
An $(n,M_1,M_2,M_3,D_x,D_z)$ code for two source $X$ and $Z$ with
helper $Y$ consists of three encoders
\begin{eqnarray}
f_1&:& \mathcal Y^n \to \mathcal M_1 \nonumber \\% \{1,2,...,M_1\} \nonumber \\
f_2 &:& \mathcal Z^n \times \mathcal M_1 \to \mathcal M_2 \nonumber \\
f_3 &:& \mathcal X^n \times \mathcal M_1 \times \mathcal M_2 \to
\mathcal M_3
\end{eqnarray}
%\end{equation}
and two decoders
\begin{eqnarray}
g_2 &:&  \mathcal X^n \times \mathcal M_1 \times \mathcal M_2
\to \hat{\cal Z}^n \nonumber \\
g_3 &:&  \mathcal Z^n  \times \mathcal M_1 \times \mathcal M_3 \to
\hat{\cal X}^n
\end{eqnarray}
such that
\begin{eqnarray}
\mathbb{E}\left[\sum_{i=1}^n d_x(X_i,\hat X_i)\right]&\leq& D_x,\nonumber \\
\mathbb{E}\left[\sum_{i=1}^n d_z(Z_i,\hat Z_i)\right]&\leq& D_z,
\end{eqnarray}
\end{definition}
The rate triple $(R_1,R_2,R_3)$ of the $(n,M_1,M_2,M_3,D_x,D_z)$
code is defined by
\begin{eqnarray}
R_i&=&\frac{1}{n}\log M_i; \;\; i=1,2,3.
\end{eqnarray}

\begin{definition}\label{def_achievable rates}
Given a distortion pair $(D_x,D_z)$, a rate triple $(R_1,R_2,R_3)$
is said to be {\it achievable} if, for any $\epsilon>0$, and
sufficiently large $n$, there exists an
$(n,2^{nR_1},2^{nR_2},2^{nR_3},D_x+\epsilon,D_z+\epsilon)$ code for
the sources $X,Z$ with side information $Y$.
\end{definition}
\begin{definition}\label{def_the achievable_region}
{\it The (operational) achievable region} $\mathcal R^O(D_x,D_z)$ of
rate distortion with a helper known at the encoder and decoder is
the closure of the set of all achievable rate pairs.
\end{definition}
The next theorem is the main result of this work.
\begin{theorem}\label{t_two_way}
In the two way-rate distortion problem with a helper, as depicted in
Fig. \ref{f_helper_two_way}, where $Y-X-Z$,
\begin{equation}
\mathcal R^O(D_x,D_z)=\mathcal R(D_x,D_z),
\end{equation}
where the region $\mathcal R(D_x,D_z)$ is specified in
(\ref{e_R1_twoway})-(\ref{e_def_dist}).
\end{theorem}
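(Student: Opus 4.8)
The plan is to establish the two inclusions $\mathcal R(D_x,D_z)\subseteq \mathcal R^O(D_x,D_z)$ (achievability) and $\mathcal R^O(D_x,D_z)\subseteq \mathcal R(D_x,D_z)$ (converse) separately. \emph{For achievability} I would fix a test distribution of the form (\ref{e_p_twoway}) and build a three-layer nested Wyner--Ziv scheme. The helper first describes $U^n$ to both users by a single Wyner--Ziv code with binning, as in \cite{Wyner_ziv76_side_info_decoder}, but designed for the \emph{further} user $Z$. Since (\ref{e_p_twoway}) forces the Markov chain $U-Y-X-Z$ and hence $I(U;Z)\le I(U;X)$, a bin rate $R_1\ge I(Y;U|Z)$ is simultaneously decodable by User~$Z$ (side information $Z^n$) and by User~$X$ (side information $X^n$), so after the first layer $U^n$ is common knowledge and the problem reduces to Kaspi two-way coding \cite{Kaspi85_two_way} with the extra common side information $U^n$.

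On the second and third layers I would use conditional codebooks generated i.i.d.\ given the previously decoded $U^n$. User~$Z$ covers $Z^n$ by $V^n$ and sends its bin index; User~$X$, holding $(U^n,X^n)$, decodes $V^n$ at rate $R_2\ge I(Z;V|U,X)$ because $p(v|u,z)$ yields $V-(U,Z)-X$, and then forms $\hat Z^n=\hat Z(U^n,V^n,X^n)$. The only delicate point is the last layer: User~$X$ generates $W^n$ from $(U^n,V^n,X^n)$, but User~$Z$ never receives $M_2$. The resolution is that User~$Z$ is itself the encoder that produced $V^n$, so it knows $V^n$; decoding $W^n$ from the bin index with side information $(U^n,V^n,Z^n)$ then succeeds at rate $R_3\ge I(X;W|U,V,Z)$, where the Markov relation $W-(U,V,X)-Z$ makes this the correct Wyner--Ziv rate. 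A standard covering/packing and error-probability analysis, followed by averaging over the random codebooks, yields the distortions $D_x,D_z$ to within $\epsilon$.

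\emph{For the converse} I would single-letterize each rate. Writing $nR_1\ge H(M_1)\ge H(M_1|Z^n)=I(M_1;Y^n|Z^n)$, $nR_2\ge H(M_2|M_1,X^n)=I(M_2;Z^n|M_1,X^n)$, and $nR_3\ge H(M_3|M_1,M_2,Z^n)=I(M_3;X^n|M_1,M_2,Z^n)$, I would expand each as a sum over $i$ and identify auxiliaries of the form $U_i=(M_1,Y^{i-1},Z_{i+1}^n)$, $V_i=(M_2,\ldots)$ and $W_i=(M_3,\ldots)$, padded with the appropriate past and future source blocks so that (a) memorylessness and the Csisz\'ar sum identity collapse the $i$-th term to $I(U_i;Y_i|Z_i)$, $I(V_i;Z_i|U_i,X_i)$ and $I(W_i;X_i|U_i,V_i,Z_i)$, and (b) the reconstructions $\hat Z_i,\hat X_i$ become deterministic functions of $(U_i,V_i,X_i)$ and $(U_i,W_i,Z_i)$ respectively, bounding the per-letter distortions by $D_x,D_z$. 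A uniform time-sharing index and the support lemma then deliver the single-letter region with bounded-cardinality auxiliaries.

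I expect the \emph{main obstacle} to be showing that the auxiliaries just defined actually admit a joint law of the product form (\ref{e_p_twoway})---that is, verifying $U_i-Y_i-(X_i,Z_i)$, $V_i-(U_i,Z_i)-(X_i,Y_i)$ and $W_i-(U_i,V_i,X_i)-(Y_i,Z_i)$ for the \emph{same} padded definitions that are forced on us by the distortion and rate steps. Because each auxiliary is a large tuple of messages and blocks of $X,Y,Z$ across many time indices, checking these conditional independences directly is error-prone. This is precisely the role of the undirected-graph technique of Section~\ref{s_preliminary}: I would factor the overall joint distribution of all messages, sources, and auxiliaries, draw the associated graph, and read off each required Markov chain as a graph-separation statement, thereby discharging the Markov conditions uniformly. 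Producing a single consistent set of auxiliary definitions that simultaneously passes the three single-letterizations, the two distortion identifications, and all the separations is the heart of the argument.
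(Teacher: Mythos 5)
Your architecture coincides with the paper's: layered Wyner--Ziv achievability keyed to the ``further'' user $Z$, and a converse that single-letterizes with past/future-padded auxiliaries whose Markov structure is checked by the undirected-graph technique. The genuine gap sits exactly at the point you flag as the heart of the argument but then assert can be discharged by graph separation. With the identifications that the rate and distortion steps force, namely $U_i=(X^{i-1},T_1,Z_{i+1}^n)$, $V_i=T_2$, $W_i=T_3$, the chain $W_i-(U_i,V_i,X_i)-(Y_i,Z_i)$ that you claim to verify is \emph{not} valid: in the undirected graph there is an unblocked path $Y_i - Y_{i+1}^n - X_{i+1}^n - T_3$ (the first edge from the clique induced by $p(t_1|y^n)$, the last from $p(t_3|x^n,t_1,t_2)$), and neither intermediate node lies in the separating set $(X^{i-1},T_1,Z_{i+1}^n,T_2,X_i)$; indeed, conditioning on $T_1$ couples $Y_i$ to $Y_{i+1}^n$ and hence to $T_3$. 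Only the weaker chain $W_i-(U_i,V_i,X_i,Y_i)-Z_i$ holds. The paper therefore proves the converse only into a relaxed region $\overline{\mathcal R}(D_x,D_z)$ in which $p(w|u,v,x)$ of (\ref{e_p_twoway}) is replaced by $p(w|u,v,x,y)$ as in (\ref{e_p_twoway_upper}), and closes the loop with a separate lemma showing $\overline{\mathcal R}(D_x,D_z)={\mathcal R}(D_x,D_z)$, by passing to the marginal kernel $\overline p(w|u,v,x)$ and checking, via the Markov chain $W-(X,U,V)-Z$, that the marginals $p(x,y,z,u,v)$ and $p(x,z,u,v,w)$ governing all rate and distortion functionals are preserved. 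Your plan has no counterpart to this relaxation-plus-equality step, and without it the converse does not land in $\mathcal R(D_x,D_z)$. (The same device is already needed in the $R_2=0$ special case, via $\overline{\mathcal R}_{X-Y-Z}(D)={\mathcal R}_{X-Y-Z}(D)$.)

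A secondary omission: the reconstructions do not ``become'' functions of the single-letter tuples as you state. The code's reconstruction of $Z_i$ at User X depends on all of $X^n$, and that of $X_i$ at User Z on all of $Z^n$, whereas the corresponding auxiliary tuples omit $X_{i+1}^n$ and $Z^{i-1}$ respectively. The paper repairs this with the Markov chains $Z_i-(U_i,W_i,X_i)-X_{i+1}^n$ and $X_i-(U_i,V_i,Z_i)-Z^{i-1}$ combined with a minimization-over-estimators argument (replace the missing argument by the best function $\tilde f$ of the retained variables, which cannot increase the expected distortion). Both ingredients need to be added to your converse for the distortion constraints in (\ref{e_def_dist}) to follow.
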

Furthermore, the region $\mathcal R(D_x,D_z)$ satisfies the
following properties, which are proved in Appendix
\ref{app_lemma_properties}.

\begin{lemma}\label{lemma:properties_R}
\begin{enumerate}
\item \label{lemma_properties_R_convex}
The region ${\cal R}(D_x,D_z)$ is convex
\item \label{lemma_properties_R_size}
To exhaust ${\cal R}(D_x,D_z)$, it is enough to restrict the
alphabet of $U$, $V$, and $W$ to satisfy
\begin{eqnarray}
|{\cal U}|&\leq& |{\cal Y}|+4,\nonumber \\
|{\cal V}|&\leq& |{\cal Z}||{\cal U}|+3,\nonumber \\
|{\cal W}|&\leq& |{\cal U}||{\cal V}||{\cal X}|+1.
\end{eqnarray}
\end{enumerate}
\end{lemma}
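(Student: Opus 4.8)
The plan is to establish the two parts of the lemma by separate and fairly standard arguments: convexity by a time-sharing (mixing) construction, and the alphabet bounds by the support-lemma / Fenchel--Eggleston--Carath\'eodory method applied to the auxiliaries one at a time. For convexity I would take two triples $\mathbf R^{(1)},\mathbf R^{(2)}\in{\cal R}(D_x,D_z)$, achieved by distributions $p^{(1)},p^{(2)}$ of the form (\ref{e_p_twoway}) with reconstruction maps meeting (\ref{e_def_dist}), fix $\lambda\in[0,1]$, and introduce a time-sharing variable $Q$, independent of the source $(X,Y,Z)$, with $\Pr(Q=1)=\lambda$. Setting $\tilde U=(U,Q)$, $\tilde V=(V,Q)$, $\tilde W=(W,Q)$, where conditioned on $\{Q=q\}$ the auxiliaries and reconstruction functions are those of $p^{(q)}$, the first thing to verify is that the augmented law still factors as in (\ref{e_p_twoway}). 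This holds because $Q$ is independent of the source and is shared by all three auxiliaries, so each of $p(\tilde u\mid y)$, $p(\tilde v\mid\tilde u,z)$, $p(\tilde w\mid\tilde u,\tilde v,x)$ is well defined and respects the required conditional-independence pattern; these Markov relations can be read directly off the graphical test of Section~\ref{s_preliminary}. Since $Q$ is a deterministic function of each of $\tilde U,\tilde V,\tilde W$ and is independent of the source, $I(Y;\tilde U\mid Z)$, $I(Z;\tilde V\mid\tilde U,X)$ and $I(X;\tilde W\mid\tilde U,\tilde V,Z)$ each collapse to the corresponding quantity conditioned on $Q$, i.e.\ to the $\lambda$-convex combination of the two values, and both expected distortions likewise become convex combinations and stay below $D_x,D_z$. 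As each original triple dominates the minimal (lower-left) point of its own orthant, it follows that $\lambda\mathbf R^{(1)}+(1-\lambda)\mathbf R^{(2)}\in{\cal R}(D_x,D_z)$.

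For the alphabet bounds I would reduce $U$, then $V$, then $W$, in that order, so that each reduction may use the already-bounded alphabets of the earlier auxiliaries. At each step the auxiliary being treated is viewed as a mixture index, and I would invoke the Fenchel--Eggleston strengthening of Carath\'eodory's theorem to re-represent the mixture with few components while preserving (a) exactly enough of the joint law to keep the source statistics $p(x,y,z)$ and the Markov form (\ref{e_p_twoway}) intact, and (b) every rate and distortion functional in which that particular auxiliary appears. For $U$, one preserves the marginal $p(y)$ --- which, with the fixed source kernels $p(x\mid y)p(z\mid x)$, pins down $p(x,y,z)$ --- at a cost of $|{\cal Y}|-1$ constraints, and in addition all five functionals $I(Y;U|Z)$, $I(Z;V|U,X)$, $I(X;W|U,V,Z)$ and the two distortions of (\ref{e_def_dist}), in all of which $U$ appears; this gives $|{\cal U}|\le|{\cal Y}|+4$.

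For $V$, the key is the relation $V-(U,Z)-(X,Y)$: preserving only the modest marginal $p(u,z)$ (cost $|{\cal U}||{\cal Z}|-1$), together with the fixed kernel $p(x\mid u,z)$, automatically preserves the larger joint $p(u,v,x,z)$ that feeds the downstream variable $W$. Adding the four functionals that $V$ affects --- namely $I(Z;V|U,X)$, $I(X;W|U,V,Z)$ and both distortions (one of them only through the dependence of $W$ on $V$) --- yields $|{\cal V}|\le|{\cal U}||{\cal Z}|+3$. For $W$, using $W-(U,V,X)-Z$, preserving $p(u,v,x)$ (cost $|{\cal U}||{\cal V}||{\cal X}|-1$) plus the two functionals in which $W$ actually appears, $I(X;W|U,V,Z)$ and the distortion routed through $\hat X$, gives $|{\cal W}|\le|{\cal U}||{\cal V}||{\cal X}|+1$.

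The main obstacle is in part~(b): correctly accounting for the \emph{nested} structure of the auxiliaries. The delicate point is that when the middle variable $V$ is re-represented, one must certify that preserving a \emph{small} marginal ($p(u,z)$) together with the fixed downstream kernels leaves unchanged \emph{all} quantities lying below it, including those passing through $W$ and through the reconstruction involving $W$, and that the reduced variable still obeys the conditional-independence constraints of (\ref{e_p_twoway}). I would discharge this by repeatedly appealing to the observation underlying Section~\ref{s_preliminary} --- that a factorization of the form $f(a,b)f(b,c)$ forces the chain $A-B-C$ --- to confirm that the Markov form is inherited after each reduction, and by checking explicitly which of the three rate expressions and two distortion expressions genuinely depend on the auxiliary under consideration, since only those need be carried as constraints in the support-lemma count.
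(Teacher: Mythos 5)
Your proof is correct and takes essentially the same route as the paper's: convexity via a time-sharing variable $Q$ independent of the source that is absorbed into the auxiliaries, and the cardinality bounds via three successive applications of the support lemma to $U$, $V$, $W$ with exactly the same constraint counts ($|{\cal Y}|-1$ plus five, $|{\cal U}||{\cal Z}|-1$ plus four, $|{\cal U}||{\cal V}||{\cal X}|-1$ plus two) and the same use of the Markov structure of (\ref{e_p_twoway}) to preserve the upstream marginals at each reduction.
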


Before proving the main result (Theorem \ref{t_two_way}), we would
like to consider two special cases, first where $R_2=0$ and
$D_z=\infty$ and second where $R_3=0$ and $D_x=\infty$. The main
techniques and insight are gained through those special cases.
Both cases are depicted in Fig. \ref{f_helper_side} where in the
first case we assume the Markov form $Y-X-Z$ and in the second
case we assume a Markov form $Y-Z-X$.

The proofs of these two cases are quite different. In the
achievability of the first case, we use a Wyner-Ziv code that is
designed only for the {\it decoder}, and in the achievability of the
second case we use a Wyner-Ziv code that is designed only for the
{\it encoder}. In the converse for the first case, the main idea is
to observe that the achievable region does not increase by letting
the encoder know $Y$, and in the converse of the second case the
main idea is to use the chain rule in two opposite directions,
conditioning once on the past and once on the future.

\begin{figure}[h!]{
\psfrag{b1}[][][1]{$X$} \psfrag{box1}[][][1]{Encoder}
\psfrag{box3}[][][1]{Helper\ }
 %\psfrag{a2}[][][1]{$T_1(X^n,T_2)$}
 %\psfrag{t1}[][][1]{$\in 2^{nR_1}$}
%\psfrag{b3}[][][1]{$\;\hat X^n(T_1,T_2)$}
%\psfrag{t2}[][][1]{$T_2(Y^n)\in 2^{nR_2}$}
 \psfrag{a2}[][][1]{$R$}
 \psfrag{t1}[][][1]{$$}
\psfrag{A1}[][][1]{$$} \psfrag{A2}[][][1]{$$}
 \psfrag{box2}[][][1]{Decoder}
\psfrag{b3}[][][1]{$\hat X$} \psfrag{a3}[][][1]{}
\psfrag{Y}[][][1]{$Y$} \psfrag{Z}[][][1]{$Z$}
\psfrag{t2}[][][1]{$R_1$}
%\centerline{\includegraphics[width=15cm]
%{C:/cygwin/usr/X11R6/bin/home/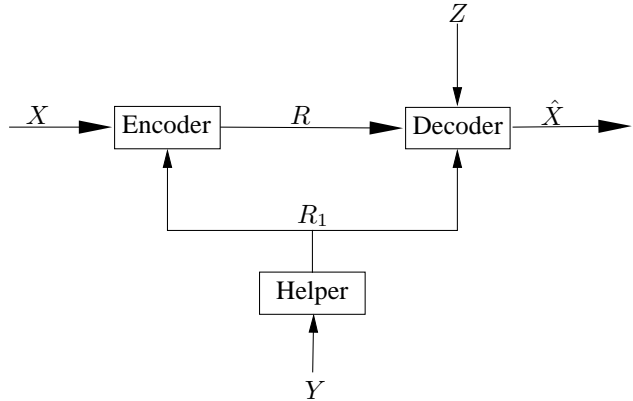}}
\centerline{\includegraphics[width=13cm]{rate_distortion_helper_side.eps}}
\caption{Wyner-Ziv problem with a helper . We consider two cases;
first the source X, Helper Y and the side information Z form  the
Markov chain $Y-X-Z$ and in the second case they form the Markov
chain $Y-Z-X$.} \label{f_helper_side} }\end{figure}

\section{Wyner-Ziv with a helper where Y-X-Z \label{s_rate_helper_side_information}} In this Section,
we consider the rate distortion problem with a helper and additional
side information $Z$, known only to the decoder, as shown in Fig.
\ref{f_helper_side}. We also assume that the source $X$, the helper
$Y$, and the side information $Z$, form the Markov chain $Y-X-Z$.
This setting corresponds to the case where $R_2=0$ and $D_z=\infty$.
Let us denote by ${\cal R}_{Y-X-Z}^O(D)$ the (operational)
achievable region ${\cal R}^O(D_x=D,D_z=\infty)$.% (We are using the
%notation $R$ in Fig. 3 rather than $R_3$, since in the next section
%we consider a different Markov chain $Y-Z-X$ and there $R$ is
%playing the role of $R_2$).

We now present our main result of this section. Let $\mathcal
R_{Y-X-Z}(D)$ be the set of all rate pairs $(R,R_1)$ that satisfy
\begin{eqnarray}
R_1&\geq& I(U;Y|Z),\label{e_R_R_1_ineq_helper_side_2}\\
R&\geq& I(X;W|U,Z), \label{e_R_R_1_ineq_helper_side_1}
\end{eqnarray}
for some joint distribution of the form \bea
\label{e_p_xyuv_decompose}
p(x,y,z,u,v)&=&p(x,y)p(z|x)p(u|y)p(w|x,u),\\
\quad \mathbb{E}d_x(X,\hat X(U,W,Z))&\leq& D,\label{eq:dist2} \eea
where $W$ and $V$ are auxiliary random variables, and the
reconstruction variable $\hat X$ is a deterministic function of the
triple $(U,W,Z)$. The next lemma states properties of $\mathcal
R_{X-Y-Z}(D)$. It is the analog of Lemma~\ref{lemma:properties_R}
and the proof is omitted.
\begin{lemma}\label{lemma:properties_R_SI}
\begin{enumerate}
\item \label{lemma_properties_R_SI_convex}
The region $\mathcal R_{X-Y-Z}(D)$ is convex
\item \label{lemma_properties_R_SI_size}
To exhaust $\mathcal R_{X-Y-Z}(D)$, it is enough to restrict the
alphabets of $V$ and $U$ to satisfy
\begin{eqnarray}
|{\cal U}|&\leq& |{\cal Y}|+2 \nonumber \\
|{\cal W}|&\leq& |{\cal X}|(|{\cal Y}|+2)+1.
\end{eqnarray}
\end{enumerate}
\end{lemma}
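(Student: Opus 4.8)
The plan is to establish the two properties separately, following the template of Lemma~\ref{lemma:properties_R}: convexity by a time-sharing argument, and the cardinality bounds by the support lemma (the Fenchel--Eggleston strengthening of Carath\'eodory's theorem), applied first to $U$ and then to $W$.

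For convexity, let $(R,R_1)$ and $(R',R_1')$ be two points of $\mathcal R_{X-Y-Z}(D)$, realized by distributions of the form (\ref{e_p_xyuv_decompose}) with auxiliaries $(U,W)$ and $(U',W')$ and reconstruction functions achieving distortion at most $D$. Introduce a time-sharing variable $Q$, independent of $(X,Y,Z)$, with $\Pr(Q=1)=\lambda$ and $\Pr(Q=2)=1-\lambda$, and set $\tilde U=(U,Q)$, $\tilde W=(W,Q)$, where conditioned on $Q=1$ we use the first scheme and on $Q=2$ the second. First I would check that $(X,Y,Z,\tilde U,\tilde W)$ still factors as in (\ref{e_p_xyuv_decompose}): since $Q$ is independent of the source, $p(\tilde u\mid y)=p(q)p(u\mid y,q)$ and $p(\tilde w\mid x,\tilde u)=p(w\mid x,u,q)$, so the Markov relations $\tilde U-Y-(X,Z)$ and $\tilde W-(X,\tilde U)-(Y,Z)$ are preserved. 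Because $Q$ is independent of $(Y,Z)$, the chain rule gives $I(\tilde U;Y\mid Z)=I(U;Y\mid Z,Q)$ and $I(X;\tilde W\mid \tilde U,Z)=I(X;W\mid U,Z,Q)$, each of which is the convex combination $\lambda(\cdot)+(1-\lambda)(\cdot)$ of the corresponding quantities of the two schemes; the reconstruction $\hat X(\tilde U,\tilde W,Z)$ likewise attains distortion at most $\lambda D+(1-\lambda)D=D$. Hence $\lambda(R,R_1)+(1-\lambda)(R',R_1')\in\mathcal R_{X-Y-Z}(D)$.

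For the bound on $|\mathcal U|$, fix a distribution of the form (\ref{e_p_xyuv_decompose}) and regard each symbol $u$ as indexing the conditional law $p(x,y,z,w\mid u)$, which by the assumed structure is determined by $p(y\mid u)$ and $p(w\mid x,u)$ together with the fixed source kernels $p(x\mid y)$ and $p(z\mid x)$. I would apply the support lemma to $U$, requiring the following functionals of the $U$-marginal to be preserved: the $|\mathcal Y|-1$ probabilities $p(y)=\sum_u p(u)p(y\mid u)$, which with the fixed kernels pin down the entire source law $p(x,y,z)$; the scalar $H(Y\mid Z,U)$, which preserves $I(U;Y\mid Z)$ since $H(Y\mid Z)$ is fixed; the scalar $I(X;W\mid U,Z)$; and the expected distortion $\mathbb E\,d_x(X,\hat X(U,W,Z))$. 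This is $(|\mathcal Y|-1)+3=|\mathcal Y|+2$ constraints, so a $U$ supported on at most $|\mathcal Y|+2$ letters reproduces the source law, both rates, and the distortion.

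With $U$ now restricted to at most $|\mathcal Y|+2$ letters, I would repeat the argument for $W$, whose conditional law $p(x,y,z,u\mid w)$ is, by $W-(X,U)-(Y,Z)$, determined by $p(x,u\mid w)$ and the fixed kernel $p(y,z\mid x,u)$. Preserving the $|\mathcal X||\mathcal U|-1$ probabilities $p(x,u)$, hence $p(x,y,z,u)$ and in particular $H(X\mid U,Z)$, together with the two scalars $H(X\mid U,Z,W)$, which preserves $R\geq I(X;W\mid U,Z)$, and $\mathbb E\,d_x$, costs $|\mathcal X||\mathcal U|-1+2=|\mathcal X||\mathcal U|+1$ constraints, giving $|\mathcal W|\leq|\mathcal X||\mathcal U|+1\leq|\mathcal X|(|\mathcal Y|+2)+1$. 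The one point that needs care, and the only real obstacle, is the \emph{order} of the two reductions and the bookkeeping of which functionals feed the next step: $U$ must be thinned first while explicitly retaining $I(X;W\mid U,Z)$ and the distortion, quantities that still involve $W$, so that the subsequent thinning of $W$ takes place against a frozen, structurally valid joint law and leaves membership in $\mathcal R_{X-Y-Z}(D)$ intact.
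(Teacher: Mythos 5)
Your proof is correct and follows essentially the same route as the paper, which omits this proof as the analog of Lemma~\ref{lemma:properties_R}: convexity via a time-sharing variable $Q$ absorbed into $U$ (using independence of $Q$ from the sources to write $I(U;Y|Z,Q)=I(\tilde U;Y|Z)$), and the cardinality bounds by applying the support lemma first to $U$ (with $|\mathcal Y|-1+3$ functionals) and then, with $U$ frozen, to $W$ (with $|\mathcal X||\mathcal U|-1+2$ functionals). Your counts reproduce the stated bounds exactly, and your remark about the order of the two reductions matches the paper's treatment in Appendix~\ref{app_lemma_properties}.
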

%\begin{proof}
%The proof of part~\ref{lemma_properties_R_SI_convex} parallels that
%of part~\ref{lemma_properties_R_convex} of
%Lemma~\ref{lemma:properties_R} and is omitted. For
%part~\ref{lemma_properties_R_SI_size}, using the support
%lemma~\cite{Csiszar81}, the random variable $V$ should have $|{\cal
%Y}|-1$ elements  to preserve $p(y)$, plus three elements to preserve
%$I(V;Y|Z)$, $I(X;U|V,Z)$, and the distortion constraint. Once $V$ is
%fixed, $U$ should have $|{\cal X}||{\cal V}|-1$ elements to preserve
%the joint distribution $p(x,v)$, and two more elements to preserve
%$I(X;U|V,Z)$ and the distortion constraint. This completes the proof
%of the lemma.
%\end{proof}
%
%\begin{eqnarray}\label{e_R_R_1_ineq}
%\end{eqnarray}

\begin{theorem}\label{t_rate_dis_helper_side}
The achievable rate region for the setting illustrated in Fig.
\ref{f_helper_side}, where $X,Y,Z$ are i.i.d. random variables
forming the Markov chain $Y-X-Z$ is
\begin{equation}
{\cal R}_{Y-X-Z}^O(D)=\mathcal R_{Y-X-Z}(D).
\end{equation}
\end{theorem}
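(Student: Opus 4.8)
The plan is to establish the two inclusions separately: achievability, $\mathcal R_{Y-X-Z}(D)\subseteq\mathcal R^O_{Y-X-Z}(D)$, and the converse, $\mathcal R^O_{Y-X-Z}(D)\subseteq\mathcal R_{Y-X-Z}(D)$. Throughout I fix a test channel $p(u|y)p(w|x,u)$ as in (\ref{e_p_xyuv_decompose}) with reconstruction $\hat X(U,W,Z)$ meeting (\ref{eq:dist2}).

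For achievability I would use a two-stage (nested) Wyner--Ziv scheme. First the helper draws a $U^n$-codebook of about $2^{nI(U;Y)}$ sequences i.i.d.\ $\sim\prod p(u_i)$, bins them into $2^{nR_1}$ bins, picks a $U^n$ jointly typical with $Y^n$, and sends its bin index. The key point, highlighted in the Introduction, is that the Markov chain $U-Y-X-Z$ (hence $U-X-Z$) forces $I(U;Z)\le I(U;X)$, so any rate $R_1\ge I(U;Y|Z)=I(U;Y)-I(U;Z)$ that lets the decoder recover $U^n$ from its side information $Z^n$ \emph{a fortiori} lets the encoder recover the same $U^n$ from its stronger side information $X^n$, since $I(U;Y)-I(U;X)\le R_1$. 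Once both terminals share $U^n$, the encoder runs an ordinary conditional Wyner--Ziv code: it generates a $W^n$-codebook conditioned on $U^n$, bins at rate $R\ge I(X;W|U,Z)$, and the decoder, holding $(U^n,Z^n)$, recovers $W^n$ and sets $\hat X_i=\hat X(U_i,W_i,Z_i)$. Standard joint-typicality and Markov-lemma arguments yield vanishing error probability and expected distortion at most $D+\epsilon$, so (\ref{e_R_R_1_ineq_helper_side_2})--(\ref{e_R_R_1_ineq_helper_side_1}) are achievable.

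For the converse, write $T=f_1(Y^n)$ for the helper message and $S$ for the encoder message, with $\hat X^n=g_3(Z^n,T,S)$. I would first invoke the stated reduction that the region is unchanged if the encoder is given $Y^n$, so that one may assume $S=f_3(X^n,Y^n)$; this renders the joint law of $(X^n,Y^n,Z^n,T,S)$ a clean product that the graph technique can parse. I then single-letterize each rate in the Wyner--Ziv style. From $nR_1\ge H(T)\ge I(T;Y^n|Z^n)$ and the Csisz\'ar sum identity one extracts $nR_1\ge\sum_i I(U_i;Y_i|Z_i)$ with $U_i=(T,Y^{i-1},Z_{i+1}^n)$, and from $nR\ge H(S|T)\ge I(S;X^n|T,Z^n)$ one extracts $nR\ge\sum_i I(W_i;X_i|U_i,Z_i)$ for an appropriate $W_i$ built from $S$ together with past/future source and side-information symbols, chosen so that $\hat X_i$ is a function of $(U_i,W_i,Z_i)$. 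A uniform time-sharing variable $Q$ and the cardinality bounds of Lemma~\ref{lemma:properties_R_SI} then complete the single-letterization.

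The main obstacle is showing that the auxiliaries induce a joint law of the required form (\ref{e_p_xyuv_decompose}), namely the relations $U_i-Y_i-(X_i,Z_i)$ and $W_i-(X_i,U_i)-(Y_i,Z_i)$. There is genuine tension here: the natural single-letterization wants the \emph{source} past $X^{i-1}$ in the conditioning, whereas the region forbids $U$ from depending on $X$, so one must trade $X^{i-1}$ for $Y^{i-1}$ using the $Y-X-Z$ structure; moreover $T$ and $S$ are deterministic functions of the whole blocks $Y^n,X^n$, so conditioning on them couples symbols across all time indices and defeats naive independence arguments. This is exactly what the undirected-graph criterion of Section~\ref{s_preliminary} is designed for: I would write the factorization of the joint distribution, draw the associated graph, and read off each Markov relation by checking that the middle node set separates the other two. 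Making these separations hold is what the enhancement ``encoder knows $Y$'' buys, and verifying them is the crux of the argument.
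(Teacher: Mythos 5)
Your achievability argument matches the paper's: nested Wyner--Ziv, with the helper's bin index decodable by both terminals because $I(U;Z)\le I(U;X)$ under $U-Y-X-Z$. The converse, however, has two genuine gaps. First, your choice $U_i=(T,Y^{i-1},Z_{i+1}^n)$ and the accompanying claim that one must ``trade $X^{i-1}$ for $Y^{i-1}$ because the region forbids $U$ from depending on $X$'' is backwards. The single-letter constraint is only the Markov relation $U-Y-(X,Z)$ between the \emph{time-$i$} variables, and by the i.i.d.\ structure this holds even when $U_i$ literally contains $X^{i-1}$; the paper takes $U_i=(X^{i-1},Z^{n\setminus i},T_1)$ and, in the helper-rate bound, uses the Markov chain $Y_i-(Y^{i-1},T_1,Z^n)-X^{i-1}$ to replace $Y^{i-1}$ by $X^{i-1}$ -- the opposite substitution from yours. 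This matters because the same $U_i$ must appear in both rate bounds, and the chain-rule expansion of $I(X^n;S\,|\,T,Z^n)$ naturally conditions on $X^{i-1}$. With your $U_i$ you would need $H(X_i|X^{i-1},T,Z^n)\ge H(X_i|Y^{i-1},T,Z_i^n)$, which does not follow: conditioning on $T(Y^n)$ couples $Y_i$ with $Y^{i-1}$ and hence $X_i$ with $X^{i-1}$ (the path $X_i-Y_i-Y_{i-1}-X_{i-1}$ in the undirected graph avoids $(T,Z_i^n)$). Your $U_i$ also omits $Z^{i-1}$, which is needed so that $\hat X_i=\hat X_i(Z^n,T,S)$ is a function of $(U_i,W_i,Z_i)$.

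Second, the step you ``invoke'' -- that the region is unchanged if the encoder is given $Y^n$ -- is precisely the part that must be proved, and your proposal never closes it. Enhancing the encoder with $Y^n$ (or, without any enhancement, simply taking $W_i=T=f_3(X^n,T_1)$) yields auxiliaries satisfying only $W-(X,U,Y)-Z$, i.e.\ a joint law of the form $p(u|y)p(w|x,u,y)$ rather than $p(u|y)p(w|x,u)$ required by (\ref{e_p_xyuv_decompose}). The paper lands in the enlarged region $\overline{\mathcal R}_{X-Y-Z}(D)$ and then proves $\overline{\mathcal R}_{X-Y-Z}(D)=\mathcal R_{X-Y-Z}(D)$ (Lemma \ref{l_overline_wyner_ziv}) by showing that replacing $\overline p(w|x,u,y)$ with its marginal $\overline p(w|x,u)$ preserves the marginals $p(y,z,u)$ and $p(x,z,u,w)$, and hence all rate and distortion functionals; this rests on the Markov chain $W-(X,U)-Z$ read off the undirected graph. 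Without this marginalization argument (or an operational proof of the encoder-enhancement claim), your converse stops one step short of the stated region.
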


%\section{Proof of Theorem \ref{t_rate_dis_side_info}}
Let us define an additional region $\overline{\mathcal
R}_{X-Y-Z}(D)$ the same as ${\mathcal R}_{X-Y-Z}(D)$ but the term
$p(w|x,u)$ in (\ref{e_p_xyuv_decompose}) is replaced by
$p(w|x,u,y)$, i.e.,
\begin{equation} \label{e_p_xyuv_decompose2}
p(x,y,z,u,w)=p(x,y)p(z|x)p(u|y)p(w|x,u,y).
\end{equation}

In the proof of Theorem \ref{t_rate_dis_helper_side}, we show that
$\mathcal R_{Y-X-Z}(D)$ is achievable and that $\overline {\mathcal
R}_{Y-X-Z}(D)$ is an outer bound, and we conclude the proof by
applying the following lemma, which states that the two regions are
equal.
\begin{lemma}\label{l_overline_wyner_ziv}
$\overline{\mathcal R}_{X-Y-Z}(D)=\mathcal R_{X-Y-Z}(D).$
\end{lemma}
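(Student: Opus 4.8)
The plan is to prove the two inclusions separately. One direction is immediate: any distribution of the form $p(x,y)p(z|x)p(u|y)p(w|x,u)$ appearing in the definition of $\mathcal R_{X-Y-Z}(D)$ is a special case of the form $p(x,y)p(z|x)p(u|y)p(w|x,u,y)$ defining $\overline{\mathcal R}_{X-Y-Z}(D)$ (simply take $p(w|x,u,y)$ not to depend on $y$), and the rate expressions and the distortion constraint are identical in both. Hence $\mathcal R_{X-Y-Z}(D) \subseteq \overline{\mathcal R}_{X-Y-Z}(D)$, and it remains to establish the reverse inclusion $\overline{\mathcal R}_{X-Y-Z}(D) \subseteq \mathcal R_{X-Y-Z}(D)$.

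For the reverse inclusion, I would start from an arbitrary rate pair $(R,R_1)$ in $\overline{\mathcal R}_{X-Y-Z}(D)$, achieved by some distribution $p(x,y,z,u,w)=p(x,y)p(z|x)p(u|y)p(w|x,u,y)$ and reconstruction function $\hat X(u,w,z)$. I then define a new distribution $\tilde p$ by replacing the kernel $p(w|x,u,y)$ with the induced marginal kernel $\tilde p(w|x,u) := p(w|x,u) = \sum_y p(y|x,u)p(w|x,u,y)$, that is, $\tilde p(x,y,z,u,w)=p(x,y)p(z|x)p(u|y)\tilde p(w|x,u)$. By construction $\tilde p$ has exactly the form required in the definition of $\mathcal R_{X-Y-Z}(D)$, so it is a legitimate candidate distribution for that region.

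The crux is to show that this substitution leaves the joint marginal of $(X,U,Z,W)$ unchanged. Applying the graphical technique of Section \ref{s_preliminary} to the factorization of $p$ — with edges induced by $p(x,y)$, $p(z|x)$, $p(u|y)$, and $p(w|x,u,y)$ — every path from $Z$ to a node in $\{U,W\}$ must pass through $X$, so the Markov relation $Z-X-(U,W)$ holds and $p(x,u,z,w)=p(z|x)p(x,u,w)$. The identical computation for $\tilde p$ gives $\tilde p(x,u,z,w)=p(z|x)p(x,u,w)$, since the $(X,U,W)$-marginal is preserved by construction of $\tilde p(w|x,u)$; hence the two joint marginals coincide. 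Consequently the distortion $\mathbb{E}d_x(X,\hat X(U,W,Z))$ and the rate $I(X;W|U,Z)$, both functionals of $p(x,u,z,w)$ alone, are unchanged. Likewise $I(U;Y|Z)$ depends only on $p(u,y,z)=\sum_x p(x,y)p(z|x)p(u|y)$, which is identical under $p$ and $\tilde p$. Thus the same pair $(R,R_1)$ with the same $\hat X$ is achievable with $\tilde p \in \mathcal R_{X-Y-Z}(D)$, giving $\overline{\mathcal R}_{X-Y-Z}(D) \subseteq \mathcal R_{X-Y-Z}(D)$ and therefore equality.

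I expect no serious obstacle here: the lemma is a structural identity rather than a bound, and the only point requiring care is the verification of $Z-X-(U,W)$, which the undirected-graph criterion handles cleanly because $Z$ is a leaf attached only to $X$. The subtle step to confirm is that marginalizing $Y$ out of the kernel $p(w|x,u,y)$ genuinely preserves the full joint $p(x,u,z,w)$, and not merely its $(X,U,W)$-marginal; this is precisely what the Markov chain $Z-X-(U,W)$ guarantees, so it is worth stating that verification explicitly.
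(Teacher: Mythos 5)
Your proposal is correct and follows essentially the same route as the paper: both reduce the problem to showing that the marginals $p(y,u,z)$ and $p(x,z,u,w)$ are preserved when the kernel $p(w|x,u,y)$ is replaced by the induced $p(w|x,u)$, and both verify the key conditional-independence step with the undirected-graph criterion of Section \ref{s_preliminary}. The only cosmetic difference is that you verify the (slightly stronger) relation $Z-X-(U,W)$ where the paper verifies $W-(X,U)-Z$; either suffices for the factorization $p(x,z,u,w)=p(x,u,w)\,p(z|x)$ that completes the argument.
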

\begin{proof}
Trivially we have $\overline{\mathcal R}_{X-Y-Z}(D)\supseteq
{\mathcal R}(D|Z)$. Now we prove that $\overline{\mathcal
R}_{X-Y-Z}(D)\subseteq \mathcal R_{X-Y-Z}(D)$. Let $(R,R_1)\in
\overline{\mathcal R}_{X-Y-Z}(D)$, and
\begin{equation}\label{e_pxyuv_decompose_again}
\overline p(x,y,z,u,w)=p(x,y)p(z|x)p(u|y)\overline p(w|x,u,y)
\end{equation}
be a distribution that satisfies
(\ref{e_R_R_1_ineq_helper_side_2}),(\ref{e_R_R_1_ineq_helper_side_1})
and (\ref{eq:dist2}). Now we show that there exists a distribution
of the form (\ref{e_p_xyuv_decompose}) such that
(\ref{e_R_R_1_ineq_helper_side_1}),(\ref{e_R_R_1_ineq_helper_side_2})
and (\ref{eq:dist2}) hold.

Let
\begin{equation}
p(x,y,z,u,w)=p(x,y,z)p(u|y)\overline p(w|x,u), \label{eq:new_p}
\end{equation}
where $\overline p(w|x,u)$ is induced by $\overline p(x,y,z,u,w)$.
We now show that the terms $I(U;Y|Z)$, $I(X;W|Z,U)$ and
$\mathbb{E}d(X,\hat X(U,W,Z))$ are the same whether we evaluate them
by the joint distribution $p(x,y,z,u,w)$ of~(\ref{eq:new_p}), or by
$\overline p(x,y,z,u,w)$; hence $(R,R_1)\in \mathcal R_{X-Y-Z}(D)$.
In order to show that the terms above are the same it is enough to
show that the marginal distributions $p(y,z,u)$ and $p(x,z,u,w)$
induced by $p(x,y,z,u,w)$ are equal to the marginal distributions
$\overline p(y,z,u)$ and $\overline p(x,z,u,w)$ induced by
$\overline p(x,y,z,u,w)$. Clearly $p(y,u,z)=\overline p(y,u,z)$. In
the rest of the proof we show $p(x,z,u,w)=\overline p(x,z,u,w)$.

A distribution of the form $\overline p(x,y,z,u,w)$ as given in
(\ref{e_pxyuv_decompose_again}) implies that the Markov chain
$W-(X,U)-Z$ holds as shown in Fig. \ref{f_markov3a}.
\begin{figure}[h!]{
\psfrag{X}[][][1]{$X$} \psfrag{Y}[][][1]{$Y$} \psfrag{U}[][][1]{$U$}
\psfrag{Z}[][][1]{$Z$} \psfrag{W}[][][1]{$W$} \psfrag{V}[][][1]{$V$}
%\centerline{\includegraphics[width=6cm]{C:/cygwin/usr/X11R6/bin/home/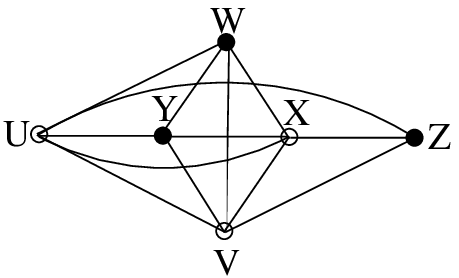}}
\centerline{\includegraphics[width=5cm]{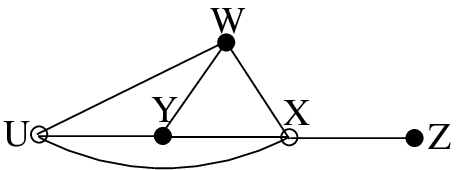}} \caption{A
graphical proof of the Markov chain $W-(X,U)-Z$. The undirected
graph corresponds to the joint distribution given in
(\ref{e_pxyuv_decompose_again}), i.e.,  $\overline
p(x,y,z,u,v,w)=p(x,y)p(z|x)p(u|y)p(w|u,x,y).$ The Markov chain holds
since there is no path from $Z$ to $W$ that does not pass through
$(X,U)$. }\label{f_markov3a}}\end{figure} Therefore $\overline
p(w|x,u,z)=\overline p(w|x,u).$ Now consider $\overline
p(x,z,u,w)=\overline p(x,z,u) \overline p(w|x,u)$, and since
$\overline p(x,z,u)=p(x,z,u)$ and $\overline p(w|x,u)= p(w|x,u)$ we
conclude that $\overline p(x,z,u,w)=p(x,z,u,w)$.
\end{proof}

{\it Proof of Theorem \ref{t_rate_dis_helper_side}:}

{\bf Achievability:} The proof follows classical arguments, and
therefore the technical details will be omitted. We describe only
the coding structure and the associated Markov conditions. Note that
the condition~(\ref{e_p_xyuv_decompose}) in the definition of $
\mathcal R_{X-Y-Z}(D)$, implies the Markov chain $U - Y - X - Z$.
The helper (encoder of $Y$) employs Wyner-Ziv coding with decoder
side information $Z$ and
 external random variable $U$, as seen from~(\ref{e_R_R_1_ineq_helper_side_2}).
The Markov conditions required for such coding, $U - Y - Z$, are
satisfied, hence the source decoder, at the destination, can recover
the codewords constructed from $U$. Moreover,
since~(\ref{e_p_xyuv_decompose}) implies
 $U - Y - X - Z$, the encoder of $X$ can also reconstruct
$U$ (this is the point where the Markov assumption $Y - X - Z$ is
needed). Therefore in the coding/decoding scheme of $X$, $U$ serves
as side information available at both sides. The source ($X$)
encoder now employs Wyner-Ziv coding for $X$, with decoder side
information $Z$, coding random variable $W$, and $U$ available at
both sides. The Markov conditions needed for this scheme are $W -
(X,U) - Z$, which again are satisfied by~(\ref{e_p_xyuv_decompose}).
The rate needed for this coding is $I(X;W|U,Z)$, reflected in the
bound on $R$ in~(\ref{e_R_R_1_ineq_helper_side_1}). Once the two
codes (helper and source code) are decoded, the destination can use
all the available random variables, $U$, $W$, and the side
information $Z$, to construct $\hat{X}$.

{\bf Converse:} Assume that we have an
$(n,M_1=2^{nR_1},M_2=1,M_3=2^{nR},D_x=D,D_z=\infty)$ code as in
Definition \ref{def_code}. We will show the existence of a triple
$(U,W,\hat{X})$ that
satisfy~(\ref{e_R_R_1_ineq_helper_side_2})-(\ref{eq:dist2}). Denote
$T_1=f_1(Y^n)\in\{1,...,2^{nR_1}\}$, and
$T=f_3(X^n,T_1)\in\{1,...,2^{nR}\}$. Then,
\begin{eqnarray}\label{e_conv_HT'_side}
nR_1 &\stackrel{}{\geq}&H(T_1) \nonumber \\
&\stackrel{}{\geq}&H(T_1|Z^n) \nonumber \\
&\stackrel{}{\geq}&I(Y^n;T_1|Z^n) \nonumber \\
&\stackrel{}{=}&\sum_{i=1}^n
H(Y_i|Z_i)-H(Y_i|Y^{i-1},T_1,Z^n)\nonumber \\
&\stackrel{(a)}{=}&\sum_{i=1}^n
H(Y_i|Z_i)-H(Y_i|X^{i-1},Y^{i-1},T_1,Z^n)\nonumber \\
&\stackrel{}{\geq}&\sum_{i=1}^n H(Y_i|Z_i)-H(Y_i|X^{i-1},T_1,Z^n),
\end{eqnarray}
where equality (a) is due to the Markov form
$Y_i-(Y^{i-1},f_1(Y^n),Z^n)-X^{i-1}$. Furthermore,
\begin{eqnarray}\label{e_conv_HT_side}
nR &\stackrel{}{\geq}&H(T) \nonumber \\
&\stackrel{}{\geq}&H(T|T_1,Z^n) \nonumber \\
&\stackrel{}{\geq}&I(X^n;T|T_1,Z^n) \nonumber \\
&\stackrel{}{=}&\sum_{i=1}^n
H(X_i|T_1,Z^n,X^{i-1})-H(X_i|T,T_1,Z^n,X^{i-1})
\end{eqnarray}

Now, let $W_i\triangleq T$ and $U_i\triangleq (X^{i-1},Z^{n\setminus
i},T_1)$, where $Z^{n\setminus i}$ denotes the vector $Z^n$ without
the $i^{th}$ element, i.e., $(Z^{i-1},Z_{i+1}^{n})$. Then
(\ref{e_conv_HT'_side}) and (\ref{e_conv_HT_side}) become
\begin{eqnarray}\label{e_WZ_helper_conv}
R_1&\geq&   \frac{1}{n}\sum_{i=1}^n I(Y_i;U_i|Z_i) \nonumber \\
R&\geq& \frac{1}{n}\sum_{i=1}^n I(X_i; W_i|U_i,Z_i).
\end{eqnarray}

Now we observe that the Markov chain $U_i-Y_i-(X_i,Z_i)$ holds since
we have $(X^{i-1},Z^{n\setminus i},T_1(Y^n))-Y_i-(X_i,Z_i)$. Also
the Markov chain $W_i-(U_i,X_i,Y_i)-Z_i$ holds since
$T(T_1,X^n)-(X^i,Y_i,T_1(Y^n),Z^{n\setminus i})-Z_i$. The
reconstruction at time $i$, i.e.,  $\hat X_i$, is a deterministic
function of $(Z^n,T,T_1)$, and in particular it is a deterministic
function of $(U_i,W_i,Z_i)$. Finally, let $Q$ be a random variable
independent of $X^n,Y^n,Z^n$, and uniformly distributed over the set
$\{1,2,3,..,n\}$. Define the random variables $U\triangleq(Q,U_Q)$,
$W\triangleq (Q,W_Q)$, and  $\hat X \triangleq(\hat X_Q)$ ($\hat X_Q$ is a short notation for time sharing over the
estimators). The
Markov relations $U-Y-(X,Z)$ and $W-(X,U,Y)-Z$, the inequality
$\mathbb E d(X,\hat X)=\sum_{i=1}^n\frac{1}{n} \mathbb E d(X,\hat
X_i)\leq D$, the fact that $\hat X$ is a deterministic function of
$(U,W,Z)$ , and the inequalities $R_1 \geq I(Y;U|Z)$ and $R \geq
I(X,Y;W|U,Z)$ (implied by (\ref{e_WZ_helper_conv})), imply that
$(R,R_1) \in \overline{\mathcal R}_{X-Y-Z}(D)$, completing the proof
by Lemma \ref{l_overline_wyner_ziv}. \hfill \QED

\section{Wyner-Ziv with a helper where $Y-Z-X$\label{s_wyner_ziv_helper_YZX}}
Consider the the rate-distortion problem with side information and
helper as illustrated in Fig. \ref{f_helper_side}, where the random
variables $X,Y,Z$ form the Markov chain $Y-Z-X$. This setting
corresponds to the case where $R_3=0$ and exchanging between $X$ and
$Z$. Let us denote by ${\cal R}_{Y-Z-X}^O(D)$ the (operational)
achievable region.

Let $\mathcal R_{Y-Z-X}(D)$ be the set of all rate pairs $(R,R_1)$
that satisfy
\begin{eqnarray}
R_1&\geq& I(U;Y|X),\label{e_R_R_1_ineq_helper_side_2_yzx} \\
R&\geq& I(X;V|U,Z), \label{e_R_R_1_ineq_helper_side_1_yzx}
\end{eqnarray}
for some joint distribution of the form \bea
\label{e_p_xyuv_decompose_yzx}
p(x,y,z,u,v)&=&p(z,y)p(x|z)p(u|y)p(v|x,u),\\
\quad \mathbb{E}d(X,\hat X(U,V,Z))&\leq& D,\label{eq:dist2_yzx}
\eea where $U$ and $V$ are auxiliary random variables, and the
reconstruction variable $\hat X$ is a deterministic function of
the triple $(U,V,Z)$. The next lemma states properties of ${\cal
R}_{Y-Z-X}(D)$. It is the analog of
Lemma~\ref{lemma:properties_R} and therefore omitted.
\begin{lemma}\label{lemma:properties_R_SI}
\begin{enumerate}
\item \label{lemma_properties_R_SI_convex}
The region ${\cal R}_{Y-Z-X}(D)$ is convex
\item \label{lemma_properties_R_SI_size}
To exhaust ${\cal R}_{Y-Z-X}(D)$, it is enough to restrict the
alphabets of $V$ and $U$ to satisfy
\begin{eqnarray}
|{\cal U}|&\leq& |{\cal Y}|+2 \nonumber \\
|{\cal V}|&\leq& |{\cal X}|(|{\cal Y}|+2)+1.
\end{eqnarray}
\end{enumerate}
\end{lemma}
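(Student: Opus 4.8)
The plan is to establish the two parts separately: convexity by a time-sharing (randomization) argument, and the cardinality bounds by the support lemma applied in two nested stages, first to $U$ and then to $V$. Since this lemma is the exact analog of Lemma~\ref{lemma:properties_R}, I would keep the write-up short and point to the standard machinery.

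For convexity, let $(R^{(a)},R_1^{(a)})$ and $(R^{(b)},R_1^{(b)})$ be two rate pairs in $\mathcal R_{Y-Z-X}(D)$, attained by distributions with auxiliaries $(U_a,V_a)$ and $(U_b,V_b)$ and reconstruction maps $\hat X_a,\hat X_b$, each meeting the distortion constraint. Fix $\lambda\in[0,1]$ and introduce a time-sharing variable $Q$, independent of $(X,Y,Z)$, with $\Pr(Q=a)=\lambda$ and $\Pr(Q=b)=1-\lambda$. I would set $U=(Q,U_Q)$, $V=V_Q$, and take the reconstruction to be $\hat X_Q(U_Q,V_Q,Z)$. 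Because $Q$ is independent of the sources, and $U_a,U_b$ each depend on $Y$ alone while $V_a,V_b$ each depend only on their respective $(X,U)$ pair, the triple $(U,V)$ again obeys the factorization (\ref{e_p_xyuv_decompose_yzx}); that is, $U-Y-(X,Z)$ and $V-(X,U)-(Y,Z)$ continue to hold. Using $I(Q;Y|X)=0$ one gets $I(U;Y|X)=\sum_q \Pr(Q=q)\,I(U_q;Y|X)$ and $I(X;V|U,Z)=\sum_q\Pr(Q=q)\,I(X;V_q|U_q,Z)$, and the expected distortion is the matching convex combination. Hence the convex combination of the two rate pairs is achievable, which proves that $\mathcal R_{Y-Z-X}(D)$ is convex.

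For the cardinality bound on $U$, I would first note that (\ref{e_p_xyuv_decompose_yzx}) together with $Y-Z-X$ gives $p(x,y,z\mid u)=p(y\mid u)\,p(z\mid y)\,p(x\mid z)$, so that, with the channel $p(v\mid x,u)$ and the reconstruction attached to each symbol $u$, the whole conditional law given $U=u$ depends on $u$ only through the point $p_{Y\mid u}$ in the simplex $\mathcal P(\mathcal Y)$. I would then apply the support lemma to replace $U$ by a variable on a smaller alphabet that preserves the $|\mathcal Y|-1$ free coordinates of the marginal $p_Y$ — which, by the displayed identity, preserves the entire source law $p(x,y,z)$ and with it the Markov structure, so that the reduced variables still yield a distribution of the form (\ref{e_p_xyuv_decompose_yzx}) — together with the three scalars $I(U;Y|X)$, $I(X;V|U,Z)$, and $\mathbb Ed(X,\hat X)$. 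As the continuous image of the connected simplex is connected, the Fenchel--Eggleston strengthening of Carath\'eodory's theorem gives a representation on at most $(|\mathcal Y|-1)+3=|\mathcal Y|+2$ symbols, i.e.\ $|\mathcal U|\le|\mathcal Y|+2$.

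With $U$ now fixed on this reduced alphabet, I would repeat the argument for $V$. Here $V-(X,U)-Z$ implies $p(x,u,z\mid v)=p(x,u\mid v)\,p(z\mid x,u)$ with $p(z\mid x,u)$ fixed, so the conditional law given $V=v$ is a functional of $p_{X,U\mid v}\in\mathcal P(\mathcal X\times\mathcal U)$. Preserving the $|\mathcal X||\mathcal U|-1$ free coordinates of $p_{X,U}$ together with the two scalars $I(X;V|U,Z)$ and $\mathbb Ed(X,\hat X)$ — the rate $R_1=I(U;Y|X)$ is untouched because it does not involve $V$ — yields $|\mathcal V|\le|\mathcal X||\mathcal U|+1\le|\mathcal X|(|\mathcal Y|+2)+1$, as claimed. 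The only real care needed, and the step I would treat as the main (though routine) obstacle, is the bookkeeping: verifying at each stage that the chosen functionals simultaneously pin down the source law, both Markov relations, and the rate/distortion values, and that the reduction of $V$ leaves the bound already obtained for $U$ intact.
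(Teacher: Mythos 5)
Your proposal is correct and follows essentially the same route as the paper: the paper omits this proof as being the analog of Lemma~\ref{lemma:properties_R}, whose Appendix~\ref{app_lemma_properties} argument is exactly your combination of a time-sharing variable $Q$ (absorbed into $U$) for convexity and a nested application of the support lemma, first to $U$ preserving $p(y)$ plus the two rates and the distortion, then to $V$ preserving $p(x,u)$ plus the remaining rate and distortion. Your counts $|\mathcal U|\le|\mathcal Y|+2$ and $|\mathcal V|\le|\mathcal X||\mathcal U|+1$ and the observation that the Markov structure keeps the source law and the untouched rate intact match the paper's bookkeeping.
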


%\begin{proof}
%The proof of part~\ref{lemma_properties_R_SI_convex} parallels that
%of part~\ref{lemma_properties_R_convex} of
%Lemma~\ref{lemma:properties_R} and is omitted. For
%part~\ref{lemma_properties_R_SI_size}, using the support
%lemma~\cite{Csiszar81}, the random variable $V$ should have $|{\cal
%Y}|-1$ elements  to preserve $p(y)$, plus three elements to preserve
%$I(V;Y|Z)$, $I(X;U|V,Z)$, and the distortion constraint. Once $V$ is
%fixed, $U$ should have $|{\cal X}||{\cal V}|-1$ elements to preserve
%the joint distribution $p(x,v)$, and two more elements to preserve
%$I(X;U|V,Z)$ and the distortion constraint. This completes the proof
%of the lemma.
%\end{proof}

\begin{theorem}\label{t_rate_dis_helper_side_YZX}
The achievable rate region for the setting illustrated in Fig.
\ref{f_helper_side}, where $X_i,Y_i,Z_i$ are i.i.d.  triplets distributed according to the
random  variables $X,Y,Z$ forming the Markov chain $Y-Z-X$ is
\begin{equation}
\mathcal R^O_{Y-Z-X}(D)=\mathcal R_{Y-Z-X}(D).
\end{equation}
\end{theorem}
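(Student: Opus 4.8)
The plan is to prove the two inclusions separately, using a two-layer Wyner--Ziv scheme for achievability and the undirected-graph Markov test of Section~\ref{s_preliminary} throughout the converse.

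For achievability I would nest two Wyner--Ziv codes. The product form (\ref{e_p_xyuv_decompose_yzx}) forces the long chain $U-Y-Z-X$, so by data processing $I(U;X)\le I(U;Z)$; that is, $X$ is the ``further'' (less informative) side information for $U$. The helper therefore Wyner--Ziv-encodes $U^n$ against the \emph{encoder's} side information $X$, at bin rate $R_1\ge I(U;Y|X)$; the Markov condition $U-Y-X$ needed for this code holds by $p(u|y)$. Since $I(U;Y|X)\ge I(U;Y|Z)$, this one bin index is decodable both by the source terminal (holding $X^n$) and, a fortiori, by the destination (holding the better side information $Z^n$). Once $U^n$ is common to both sides, the source terminal runs an ordinary Wyner--Ziv code for $X^n$ with coding variable $V$ and decoder side information $(U,Z)$, at rate $R\ge I(X;V|U,Z)$ (the needed $V-(X,U)-Z$ holds by $p(v|x,u)$), and the destination outputs $\hat X_i=\hat x(U_i,V_i,Z_i)$, meeting (\ref{eq:dist2_yzx}). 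Only the coding structure needs spelling out; the error and distortion analysis is classical.

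For the converse, let $T_1$ be the helper message (a function of $Y^n$) and $T$ the message sent by the source terminal (a function of $(X^n,T_1)$), and set $U_i\triangleq(T_1,Z^{n\setminus i},X^{i-1})$ and $V_i\triangleq T$, with a uniform time-sharing $Q$ and $U\triangleq(Q,U_Q)$, $V\triangleq(Q,V_Q)$. I would first verify the relations demanded by (\ref{e_p_xyuv_decompose_yzx}), namely $U_i-Y_i-(X_i,Z_i)$ and $V_i-(X_i,U_i)-(Y_i,Z_i)$, with the graph of Section~\ref{s_preliminary}: the factor $\prod_j p(y_j,z_j)p(x_j|z_j)$ draws the chains $Y_j-Z_j-X_j$, the helper induces the clique $\{T_1\}\cup\{Y_j\}$, and after deleting the purely downstream node $T$ (whose factor sums out) one sees that the pair $\{X_i,Z_i\}$ can be left only through the edge $Z_i-Y_i$, so every path into it from a node of $U_i$ meets $Y_i$; the second relation follows because, given $(X_i,U_i)$, the residual randomness of $T$ lives in $X_{i+1}^n$, which the same graph separates from $(Y_i,Z_i)$. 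The reconstruction $\hat X_i$, a function of $(Z^n,T_1,T)$, is a function of $(U_i,V_i,Z_i)$, and the per-letter distortion is inherited.

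The source bound is then immediate: by the forward chain rule and $(T_1,Z^n,X^{i-1})=(U_i,Z_i)$,
\[
nR\ge H(T)\ge I(X^n;T|T_1,Z^n)=\sum_i I(X_i;T|T_1,Z^n,X^{i-1})=\sum_i I(X_i;V_i|U_i,Z_i).
\]
The helper bound is where I expect the real work, and it is the step the paper flags as ``chain rule in two opposite directions.'' From $nR_1\ge H(T_1)\ge H(T_1|X^n)=I(Y^n;T_1|X^n)=\sum_i I(Y_i;T_1|X^n,Y^{i-1})$ I must establish the per-letter inequality
\[
I(Y_i;T_1|X^n,Y^{i-1})\ \ge\ I(Y_i;T_1|X^i,Z^{n\setminus i})=I(Y_i;U_i|X_i).
\]
Both sides share the marginal $H(Y_i|X_i)$ (by the i.i.d.\ triple structure, $Y_i\perp(X^{n\setminus i},Y^{i-1})\mid X_i$), so the claim reduces to $H(Y_i|X^i,Z^{n\setminus i},T_1)\ge H(Y_i|X^n,Y^{i-1},T_1)$. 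Here the graph test does the swapping: it certifies $Y_i-(X^n,Y^{i-1},T_1)-Z^{n\setminus i}$, so the right-hand entropy is unchanged by inserting $Z^{n\setminus i}$ into its conditioning, after which it conditions on a \emph{superset} of $(X^i,Z^{n\setminus i},T_1)$ and is no larger. Summing and introducing $Q$ gives $R_1\ge I(U;Y|X)$ and $R\ge I(X;V|U,Z)$ for a distribution of the form (\ref{e_p_xyuv_decompose_yzx}).

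The main obstacle is exactly this entropy comparison: the natural single-letterization of the helper rate conditions on the whole source block $X^n$ and the past helper symbols $Y^{i-1}$, whereas the target auxiliary $U_i$ conditions on the punctured side information $Z^{n\setminus i}$ and only the past sources $X^{i-1}$, and reconciling them rests on the conditional-independence facts $Y_i\perp Z^{n\setminus i}\mid(X^n,Y^{i-1},T_1)$ and $Y_i\perp(X^{n\setminus i},Y^{i-1})\mid X_i$ — precisely where the undirected-graph criterion earns its keep. A pleasant by-product of keeping $X^{i-1}$ (rather than $X^n$) inside $U_i$ is that $V_i-(X_i,U_i)-(Y_i,Z_i)$ holds on the nose, so, unlike the $Y-X-Z$ case of Theorem~\ref{t_rate_dis_helper_side}, no relaxed region and matching lemma are needed here.
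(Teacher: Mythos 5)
Your achievability argument is the paper's. The converse, however, has a fatal gap in the helper-rate bound, and it is traceable to your choice $U_i=(T_1,Z^{n\setminus i},X^{i-1})$: you have transplanted the auxiliary that works for the $Y-X-Z$ case (Theorem \ref{t_rate_dis_helper_side}) into the $Y-Z-X$ case, where it fails. The Markov chain you invoke, $Y_i-(X^n,Y^{i-1},T_1)-Z^{n\setminus i}$, is \emph{not} certified by the graph criterion: the factor $p(t_1|y^n)$ creates an edge between $Y_i$ and every $Y_j$, so for $j>i$ the path $Y_i - Y_j - Z_j$ reaches $Z_j\in Z^{n\setminus i}$ without touching $(X^n,Y^{i-1},T_1)$. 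The chain is in fact false, and so is the entropy inequality it was meant to deliver. Concretely, take $n=2$, $i=1$, $Y_j=Z_j$ uniform binary, $X_j=Z_j\oplus N_j$ with $N_j\sim\mathrm{Bern}(p)$, $0<p<1/2$, and $T_1=Y_1\oplus Y_2$. Then $H(Y_1|T_1,Z_2,X_1)=0$ because $Y_1=T_1\oplus Z_2$, while $H(Y_1|X_1,X_2,T_1)=2h(p)-h(2p(1-p))>0$ ($h(\cdot)$ the binary entropy), so your claimed inequality $H(Y_i|X^n,Y^{i-1},T_1)\le H(Y_i|T_1,Z^{n\setminus i},X^i)$ runs the wrong way; equivalently $\sum_i I(Y_i;U_i|X_i)=2h(p)$ strictly exceeds $I(Y^2;T_1|X^2)=h(2p(1-p))$, so $\frac{1}{n}\sum_i I(Y_i;U_i|X_i)$ is not a valid lower bound on $R_1$. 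Including the future $Z_{i+1}^n$ in $U_i$ lets $T_1$ combine with it to reveal more about $Y_i$ than $T_1$ reveals given the whole of $X^n$.

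The paper's fix is to take $U_i=(Z^{i-1},T_1,X_{i+1}^n)$, i.e., only the \emph{past} of $Z$ together with the \emph{future} of $X$. This is what ``chain rule in two opposite directions'' actually means here: the forward chain rule on $Y^n$ is paired with the Markov chain $Y_i-(Y^{i-1},T_1,X_i^n)-(X^{i-1},Z^{i-1})$ of Fig.~\ref{f_markov4}, which the graph test does certify because the target set contains only past-indexed nodes, reachable from $Y_i$ only through $Y^{i-1}$ or $T_1$; and the source bound uses the \emph{backward} chain rule on $X^n$ so that the same $U_i$ appears there. The price of dropping $Z_{i+1}^n$ from $U_i$ is that $\hat X_i(T,T_1,Z^n)$ is no longer automatically a function of $(U_i,V_i,Z_i)$; the paper recovers the distortion constraint with the estimator-replacement argument based on $X_i-(X_{i+1}^n,T_1,Z^i,T)-Z_{i+1}^n$ (Fig.~\ref{f_markov6}) --- a step your proposal dispenses with only by virtue of the flawed choice of $U_i$.
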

\begin{proof}

{\bf Achievability:}
 The proof follows classical arguments, and
therefore the technical details will be omitted. We describe only
the coding structure and the associated Markov conditions.  The
helper (encoder of $Y$) employs Wyner-Ziv coding with decoder side
information $X$ and
 external random variable $U$, as seen from~(\ref{e_R_R_1_ineq_helper_side_2_yzx}).
The Markov conditions required for such coding, $U - Y - X$, are
satisfied, hence the source encoder, at the destination, can recover
the codewords constructed from $U$. Moreover,
since~(\ref{e_p_xyuv_decompose_yzx}) implies
 $U - Y - Z - X$, the decoder, at the destination,  can also reconstruct
$U$. Therefore in the coding/decoding scheme of $X$, $U$ serves as
side information available at both sides. The source $X$ encoder now
employs Wyner-Ziv coding for $X$, with decoder side information $Z$,
coding random variable $V$, and $U$ available at both sides. The
Markov conditions needed for this scheme are $V - (X,U) - Z$, which
again are satisfied by~(\ref{e_p_xyuv_decompose_yzx}). The rate
needed for this coding is $I(X;V|U,Z)$, reflected in the bound on
$R$ in~(\ref{e_R_R_1_ineq_helper_side_1_yzx}). Once the two codes
(helper and source code) are decoded, the destination can use all
the available random variables, $U$, $V$, and the side information
$Z$, to construct $\hat{X}$.

{\bf Converse:} Assume that we have a code for a source $X$ with
helper $Y$ and side information $Z$ at rate $(R_1,R)$. We will show
the existence of a triple $(U,V,\hat{X})$ that satisfy~
(\ref{e_R_R_1_ineq_helper_side_2_yzx})-(\ref{eq:dist2_yzx}). Denote
$T_1(Y^n)\in\{1,...,2^{nR_1}\}$, and $T(X^n,T1)\in\{1,...,2^{nR}\}$.
Then,
\begin{eqnarray}\label{e_conv_HT_1side_yzx}
nR_1 &\stackrel{}{\geq}&H(T_1) \nonumber \\
&\stackrel{}{\geq}&H(T_1|X^n) \nonumber \\
&\stackrel{}{\geq}&I(Y^n;T_1|X^n) \nonumber \\
&\stackrel{}{=}&\sum_{i=1}^n
H(Y_i|X_i)-H(Y_i|Y^{i-1},T_1,X^n)\nonumber \\
&\stackrel{(a)}{=}&\sum_{i=1}^n
H(Y_i|X_i)-H(Y_i|Y^{i-1},T_1,X_{i+1}^n,X_i),\nonumber \\
&\stackrel{(b)}{=}&\sum_{i=1}^n
H(Y_i|X_i)-H(Y_i|Y^{i-1},T_1,X_{i+1}^n,X_i,Z^{i-1}),\nonumber \\
&\stackrel{(c)}{\geq}&\sum_{i=1}^n
H(Y_i|X_i)-H(Y_i|T_1,X_{i+1}^n,X_i,Z^{i-1}),
\end{eqnarray}
\begin{figure}[h!]{
\psfrag{X1}[][][1]{$X^{i-1}$} \psfrag{X2}[][][1]{$X_i$}
\psfrag{X3}[][][1]{$X_{i+1}^{n}$} \psfrag{Y1}[][][1]{$Y^{i-1}$}
\psfrag{Y2}[][][1]{$Y_i\;\;\;\;\;$}
\psfrag{Z3}[][][1]{$Z_{i+1}^{n}$} \psfrag{Z1}[][][1]{$Z^{i-1}$}
\psfrag{Z2}[][][1]{$Z_i$} \psfrag{Z3}[][][1]{$Z_{i+1}^n$}
\psfrag{Y3}[][][1]{$Y_{i+1}^{n}$}
\psfrag{T}[][][1]{$T(X^n,T_1)\;\;\;\;\;\;\;\;\;\;\;\;$}
\psfrag{T'}[][][1]{$\;\;\;\;\;\;\;T_1(Y^n)$}
%{C:/cygwin/usr/X11R6/bin/home/rate_helper_side_simpl.eps}}
%\centerline{\includegraphics[width=6cm]{c:/home/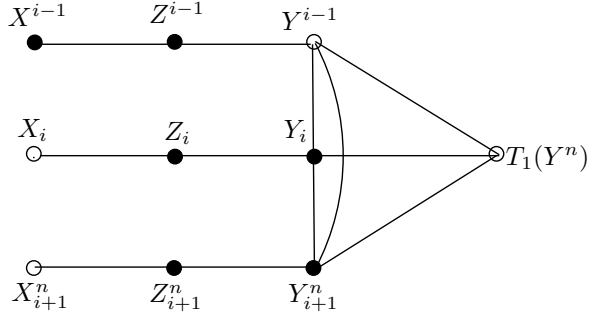}}
\centerline{\includegraphics[width=7cm]{markov4.eps}}

\caption{A graphical proof of the Markov chain
$Y_i-(Y^{i-1},T_1(Y^n),X_i^n)-(X^{i-1},Z^{i-1})$. The undirected
graph corresponds to the joint distribution
$p(x^{i-1},z^{i-1})p(y^{i-1}|z^{i-1})p(x_i,z_i)p(y_i|z_i)p(x_{i+1}^n,z_{i+1}^n)p(y_{i+1}^n|z_{i+1}^n)p(t_1|y^n)$.
The Markov chain holds since all paths from $Y_i$ to
$X^{i-1},Z^{i-1}$ pass through $(Y^{i-1},T_1(Y^n),X_i^n)$. The nodes
with the open circle, i.e., $\circ$, constitute the middle term in the
Markov chain, i.e.,  $(Y^{i-1},T_1(Y^n),X_i^n)$ and all the other
nodes are with solid circles, i.e., $\bullet$. The nodes $Y^{i-1}$,
$Y_i$, $Y_{i+1}^n$ and $T_1$ are connected due to the term
$p(t_1|y^n)$.}\label{f_markov4}}\end{figure} where (a) and (b)
follow from the Markov chain
$Y_i-(Y^{i-1},T_1(Y^n),X_i^n)-(X^{i-1},Z^{i-1})$ (see Fig.
\ref{f_markov4} for the proof),
 and (c) follows from the fact that
conditioning reduces entropy. Consider,
\begin{eqnarray}\label{e_conv_side1_yzx}
nR &\stackrel{}{\geq}&H(T) \nonumber \\
&\stackrel{}{\geq}&H(T|T_1,Z^n) \nonumber \\
&\stackrel{}{\geq}&I(X^n;T|T_1,Z^n) \nonumber \\
&\stackrel{}{=}&\sum_{i=1}^n
H(X_i|X_{i+1}^n,T_1,Z^n)-H(X_i|X_{i+1}^n,T_1,Z^n,T)\nonumber \\
&\stackrel{(a)}{=}&\sum_{i=1}^n
H(X_i|X_{i+1}^n,T_1,Z^{i-1},Z_i)-H(X_i|X_{i+1}^n,T_1,Z^n,T)\nonumber \\
&\stackrel{(b)}{\geq}&\sum_{i=1}^n
H(X_i|X_{i+1}^n,T_1,Z^{i-1},Z_i)-H(X_i|X_{i+1}^n,T_1,Z^{i-1},Z_i,T),
\end{eqnarray}
where (a) is due to the Markov chain
$X_i-(X_{i+1}^n,T_1(Y^n),Z^{i})-Z_{i+1}^n$ (this can be seen from
Fig. \ref{f_markov4} since all paths from $X_i$ to  $Z_{i+1}^n$ goes
through $Z_i$), and (b) is due to the fact that conditioning reduces
entropy. Now let us denote $U_i\triangleq
Z^{i-1},T_1(Y^n),X_{i+1}^n$, and $V_i\triangleq T(X^n,T_1)$. The
Markov chains $U_i-Y_i-(X_i,Z_i)$ and $V_i-(X_i,U_i)-(Z_i,Y_i)$ hold
(see Fig. \ref{f_markov5} for the proof of the last Markov
relation).

\begin{figure}[h!]{
\psfrag{X1}[][][1]{$X^{i-1}$} \psfrag{X2}[][][1]{$X_i$}
\psfrag{X3}[][][1]{$X_{i+1}^{n}$} \psfrag{Y1}[][][1]{$Y^{i-1}$}
\psfrag{Y2}[][][1]{$Y_i\;\;\;\;\;$}
\psfrag{Z3}[][][1]{$Z_{i+1}^{n}$} \psfrag{Z1}[][][1]{$Z^{i-1}$}
\psfrag{Z2}[][][1]{$Z_i$} \psfrag{Z3}[][][1]{$Z_{i+1}^n$}
\psfrag{Y3}[][][1]{$Y_{i+1}^{n}$}
\psfrag{T}[][][1]{$T(X^n,T_1)\;\;\;\;\;\;\;\;\;\;\;\;$}
\psfrag{T'}[][][1]{$\;\;\;\;\;\;\;T_1(Y^n)$}
%{C:/cygwin/usr/X11R6/bin/home/rate_helper_side_simpl.eps}}
%\centerline{\includegraphics[width=6cm]{c:/home/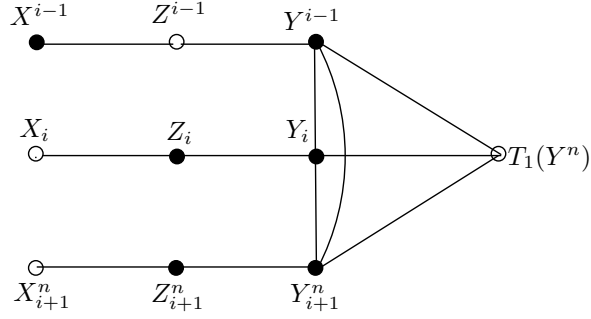}}
\centerline{\includegraphics[width=7cm]{markov5.eps}}

\caption{A graphical proof of the Markov chain
$X^{i-1}-(Z^{i-1},T_1(Y^n),X_i^n)-(Z_i,Y_i)$, which implies
$V_i-(X_i,U_i)-(Z_i,Y_i)$. The undirected graph corresponds to the
joint distribution
$p(x^{i-1},z^{i-1})p(y^{i-1}|z^{i-1})p(x_i,z_i)p(y_i|z_i)p(x_{i+1}^n,z_{i+1}^n)p(y_{i+1}^n|z_{i+1}^n)p(t_1|y^n)$.
The Markov chain holds since all paths from $X^{i-1}$ to $(Z_i,Y_i)$
pass through $(Z^{i-1},T_1(Y^n),X_i^n)$.
}\label{f_markov5}}\end{figure} Next, we need to show that there
exists a sequence of function $\hat X_i(U_i,V_i,Z_i)$ such that
\begin{equation}\label{e_cond1}
\frac{1}{n}\sum_{i=1}^n\mathbb{E}[d(X_i,\hat X_i(U_i,V_i,Z_i))]\leq
D.
\end{equation}
 By assumption we know that there exists a sequence of functions $\hat
X_i(T,T_1,Z^n)$ such that $ \sum_{i=1}^n\mathbb{E}[d(X_i,\hat
X_i(T,T_1,Z^n))]\leq nD, $ and trivially this implies that there exists a
sequence of functions $\hat X_i(X^{i-1},T,T_1,Z^n)$ such that
\begin{equation}\label{e_cond_zi+1}
\sum_{i=1}^n\mathbb{E}[d(X_i,\hat
X_i(X_{i+1}^n,T,T_1,Z^i,Z_{i+1}^n))]\leq D.
\end{equation} Note that  the Markov chain
$X_i-(X_{i+1}^n,T_1,Z^{i},T)-Z_{i+1}^n$ holds (see Fig.
\ref{f_markov6} for the proof). Therefore, for an arbitrary function
$\tilde{f}$ of the form $\tilde{f}(X_{i+1}^n,T_1,Z^i,T)$ we have
\begin{equation} \label{e_existance_min}\sum_{i=1}^n\mathbb{E}[d(X_i,\hat
X_i(X_{i+1}^n,T,T_1,Z^i,Z_{i+1}^n))] \leq \min_{\tilde{f}}
\sum_{i=1}^n\mathbb{E}[d(X_i,\hat
X_i(X_{i+1}^n,T,T_1,Z^i,\tilde{f}(X_{i+1}^n,T_1,Z^i,T)))],
\end{equation}
and since each summand on the RHS of (\ref{e_existance_min}) includes only the
random variables $(X_{i+1}^n,T,T_1,Z^i)$ we conclude that there
exists a sequence of functions $\{ X_i(X_{i+1}^n,T,T_1,Z^i)\}$ for
which (\ref{e_cond1}) holds.

\begin{figure}[h!]{
\psfrag{X1}[][][1]{$X^{i-1}$} \psfrag{X2}[][][1]{$X_i$}
\psfrag{X3}[][][1]{$X_{i+1}^{n}$} \psfrag{Y1}[][][1]{$Y^{i-1}$}
\psfrag{Y2}[][][1]{$Y_i$} \psfrag{Z3}[][][1]{$Z_{i+1}^{n}$}
\psfrag{Z1}[][][1]{$Z^{i-1}$} \psfrag{Z2}[][][1]{$Z_i$}
\psfrag{Z3}[][][1]{$Z_{i+1}^n$} \psfrag{Y3}[][][1]{$Y_{i+1}^{n}$}
\psfrag{T}[][][1]{$T(X^n,T_1)\;\;\;\;\;\;\;\;\;\;\;\;$}
\psfrag{T'}[][][1]{$\;\;\;\;\;\;\;T_1(Y^n)$}
%{C:/cygwin/usr/X11R6/bin/home/rate_helper_side_simpl.eps}}
%\centerline{\includegraphics[width=7cm]{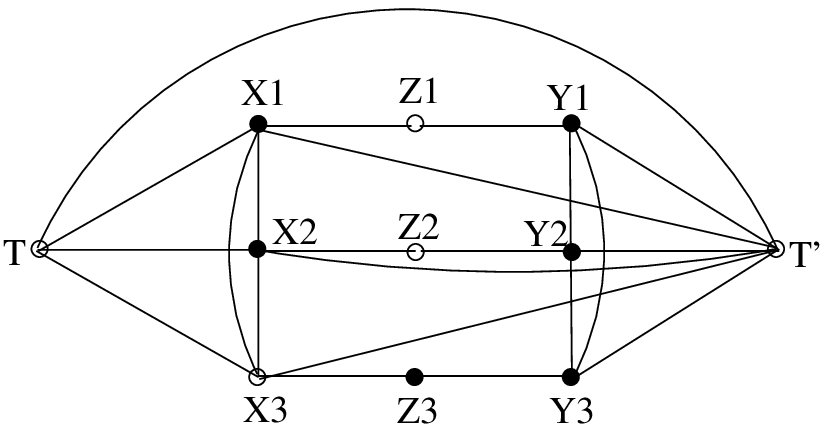}}
%\centerline{\includegraphics[width=7cm]{c:/home/markov6.eps}}
\centerline{\includegraphics[width=7cm]{markov6.eps}}

\caption{A graphical proof of the Markov chain
$X_i-(X_{i+1}^n,T_1,Z^{i},T)-Z_{i+1}^n$. The undirected graph
corresponds to the joint distribution
$p(x^{i-1},z^{i-1})p(y^{i-1}|z^{i-1})p(x_i,z_i)p(y_i|z_i)p(x_{i+1}^n,z_{i+1}^n)p(y_{i+1}^n|z_{i+1}^n)p(t_1|y^n)p(t|x^n,t_1)$.
The Markov chain holds since all paths from $X_i$ to $Z_{i+1}^n$
pass through $(X_{i+1}^n,T_1,Z^{i},T)$.
}\label{f_markov6}}\end{figure}  Finally, let $Q$ be a random
variable independent of $X^n,Y^n,Z^n$, and uniformly distributed
over the set $\{1,2,3,..,n\}$. Define the random variables
$U\triangleq(Q,U_Q)$, $W\triangleq (Q,W_Q)$, and  $\hat X \triangleq
\hat X_Q$ ($\hat X_Q$ is a short notation for time sharing over the
estimators). Then (\ref{e_conv_HT_1side_yzx})-(\ref{e_cond1})
implies that
(\ref{e_R_R_1_ineq_helper_side_2_yzx})-(\ref{eq:dist2_yzx}) hold.
\end{proof}

\section{Proof of Theorem \ref{t_two_way}\label{s_proof_t_two_way}}

In this section we prove Theorem \ref{t_two_way}, which states that
the (operational) achievable region $\mathcal R^O(D_x,D_z)$ of the
two-way source coding with helper problem as in Fig.
\ref{f_helper_two_way} equals $\mathcal R(D_x,D_z)$.  In the
converse proof we use the ideas used in proving the converses of
Theorems \ref{t_rate_dis_helper_side} and
\ref{t_rate_dis_helper_side_YZX}. Namely, we will use the chain rule
based on the past and future, and will show that $\mathcal
R^O(D_x,D_z)\subseteq \overline{\mathcal R}(D_x,D_z)$, where
$\overline{\mathcal R}(D_x,D_z)$ is  defined as ${\mathcal
R}(D_x,D_z)$ in (\ref{e_R1_twoway})-(\ref{e_def_dist}) but with one
difference: the term $p(w|u,v,x)$ in (\ref{e_p_twoway}) should be
replaced by $p(w|u,v,x,y)$, i.e.,
\begin{equation} \label{e_p_twoway_upper}
p(x,y,z,u,v,w)=p(x,y)p(z|x)p(u|y)p(v|u,z)p(w|u,v,x,y).
\end{equation} The following lemma states that the two regions
$\overline{\mathcal R}(D_x,D_z)$ and ${\mathcal R}(D_x,D_z)$ are
equal.
\begin{lemma}
$\overline{\mathcal R}(D_x,D_z)={\mathcal R}(D_x,D_z).$
\end{lemma}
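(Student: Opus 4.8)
The plan is to mimic exactly the proof of Lemma \ref{l_overline_wyner_ziv} (the one-way analog), since the two regions differ only in that $\overline{\mathcal R}(D_x,D_z)$ allows $W$ to depend on $Y$ via $p(w|u,v,x,y)$ whereas ${\mathcal R}(D_x,D_z)$ forces $p(w|u,v,x)$. The inclusion ${\mathcal R}(D_x,D_z)\subseteq\overline{\mathcal R}(D_x,D_z)$ is immediate, since any distribution of the form (\ref{e_p_twoway}) is a special case of (\ref{e_p_twoway_upper}) in which $W$ happens not to depend on $Y$. So the whole content is the reverse inclusion $\overline{\mathcal R}(D_x,D_z)\subseteq{\mathcal R}(D_x,D_z)$.

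So let me think. Let $\overline p(x,y,z,u,v,w)=p(x,y)p(z|x)p(u|y)p(v|u,z)\overline p(w|u,v,x,y)$ achieve a triple $(R_1,R_2,R_3)\in\overline{\mathcal R}(D_x,D_z)$. I define a new distribution that keeps the genuine marginal of $(X,Y,Z,U,V)$ but replaces the conditional law of $W$ by one that drops the $Y$-dependence:
\begin{equation}
p(x,y,z,u,v,w)=p(x,y,z,u,v)\,\overline p(w|u,v,x),
\end{equation}
where $\overline p(w|u,v,x)$ is the conditional induced by $\overline p$. This $p$ is of the admissible form (\ref{e_p_twoway}). I then must verify that the three rate expressions $I(Y;U|Z)$, $I(Z;V|U,X)$, $I(X;W|U,V,Z)$ and the two distortions in (\ref{e_def_dist}) are computed identically under $p$ and under $\overline p$. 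The first two rates and both distortions involve only the marginal $p(x,y,z,u,v)=\overline p(x,y,z,u,v)$ (note $\hat Z$ is a function of $(U,V,X)$, so $D_z$ is already decided by this marginal; $\hat X$ is a function of $(U,W,Z)$, so $D_x$ needs the joint law of $(X,U,W,Z)$), so the only terms requiring genuine work are $I(X;W|U,V,Z)$ and $\mathbb E d_x(X,\hat X(U,W,Z))$, both of which are determined by the marginal $p(x,z,u,v,w)$. Hence it suffices to show $p(x,z,u,v,w)=\overline p(x,z,u,v,w)$.

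The crux, exactly as in the one-way case, is a Markov relation: I claim that $\overline p$ of the form (\ref{e_p_twoway_upper}) implies $W-(U,V,X)-(Y,Z)$, and in particular $W-(U,V,X)-Z$, so that $\overline p(w|u,v,x,z)=\overline p(w|u,v,x)$. I would establish this with the undirected-graph technique of Section \ref{s_preliminary}: draw the graph whose cliques are the factors $\{X,Y\},\{X,Z\},\{U,Y\},\{U,V,Z\},\{U,V,X,Y,W\}$, and observe that every path from $W$ to $Z$ (or to $Y$) must pass through the set $\{U,V,X\}$. Granting this, $\overline p(x,z,u,v,w)=\overline p(x,z,u,v)\,\overline p(w|u,v,x,z)=\overline p(x,z,u,v)\,\overline p(w|u,v,x)=p(x,z,u,v)\,\overline p(w|u,v,x)=p(x,z,u,v,w)$, where the last step uses $\overline p(x,z,u,v)=p(x,z,u,v)$ (both are marginals of the common $p(x,y,z,u,v)$) and the definition of $p(w|u,v,x)$. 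This gives the matching of marginals and hence $(R_1,R_2,R_3)\in{\mathcal R}(D_x,D_z)$.

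The step I expect to require the most care is the Markov verification $W-(U,V,X)-Z$: I must be sure that cutting the node set $\{U,V,X\}$ really disconnects $W$ from $Z$ in the graph induced by (\ref{e_p_twoway_upper}). Since $W$'s only clique is $\{U,V,X,Y,W\}$, the only way out of $W$ is through $U$, $V$, $X$, or $Y$; the first three are in the conditioning set, so the sole concern is the neighbor $Y$, which connects onward to $Z$ only through $X$ (via the factor $p(x,y)$ together with $p(z|x)$) and through $U$ (via $p(u|y)$) — both already in the cut. This confirms the chain and, as in Lemma \ref{l_overline_wyner_ziv}, closes the argument.
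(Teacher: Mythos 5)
Your proof is correct and follows essentially the same route as the paper's: the same redefined distribution $p(x,y,z,u,v)\,\overline p(w|u,v,x)$, the same reduction to matching the marginals $p(x,y,z,u,v)$ and $p(x,z,u,v,w)$, and the same key Markov chain $W-(U,V,X)-Z$ verified by the undirected-graph technique. (One negligible slip: from $Y$ one can also reach $Z$ via $V$ through the clique $\{U,V,X,Y,W\}$, not only via $X$ or $U$, but since $V$ lies in the cut set $\{U,V,X\}$ the separation argument is unaffected.)
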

\begin{proof}
Trivially we have $\overline{\mathcal R}(D_x,D_z)\supseteq {\mathcal
R}(D_z,D_z)$. Now we prove that $\overline{\mathcal
R}(D_x,D_z)\subseteq {\mathcal R}(D_x,D_z)$. Let $(R_1,R_2,R_3)\in
\overline{\mathcal R}(D_x,D_z)$, and
\begin{equation}\label{e_two_way_again}
\overline p(x,y,z,u,v,w)=p(x,y)p(z|x)p(u|y)p(v|u,z)\overline
p(w|u,v,x,y),
\end{equation}
be a distribution that satisfies
(\ref{e_R1_twoway})-(\ref{e_R3_twoway}) and (\ref{e_def_dist}). Next
we show that there exists a distribution of the form of
(\ref{e_p_twoway}) (which is explicitly given in
(\ref{e_two_way_again})) such that
(\ref{e_R1_twoway})-(\ref{e_R3_twoway}) and (\ref{e_def_dist}) hold.
Let
\begin{equation}\label{e_two_way_again2}
p(x,y,z,u,v,w)=p(x,y)p(z|x)p(u|y)p(v|u,z)\overline p(w|u,v,x),
\end{equation}
where $\overline p(w|u,v,x)$ is induced by $\overline p(x,y,z,u,v)$.
We show that all the terms in
(\ref{e_R1_twoway})-(\ref{e_R3_twoway}) and (\ref{e_def_dist}) i.e.,
$I(Y;U|Z)$, $I(Z;V|U,X)$, $\mathbb{E}d_z(Z,\hat Z(U,V,X))$,
$I(X;W|U,V,Z)$, and $\mathbb{E}d_x(X,\hat X(U,W,Z))$ are the same
whether we evaluate them by the joint distribution $p(x,y,z,u,v)$
of~(\ref{e_two_way_again2}), or by $\overline p(x,y,z,u,v,w)$ of
(\ref{e_two_way_again}); hence $(R_1,R_2,R_3)\in {\mathcal
R}(D_x,D_z)$. In order to show that the terms above are the same it
is enough to show that the marginal distributions $p(x,y,z,u,v)$ and
$p(x,z,u,v,w)$ induced by $p(x,y,z,u,v,w)$ are equal to the marginal
distributions $\overline p(x,y,z,u,v)$ and $\overline p(x,z,u,v,w)$
induced by $\overline p(x,y,z,u,v,w)$. Clearly
$p(x,y,z,u,v)=\overline p(x,y,z,u,v)$. In the rest of the proof we
show $p(x,z,u,v,w)=\overline p(x,z,u,v,w)$.

\begin{figure}[h!]{
\psfrag{X}[][][1]{$X$} \psfrag{Y}[][][1]{$Y$} \psfrag{U}[][][1]{$U$}
\psfrag{Z}[][][1]{$Z$} \psfrag{W}[][][1]{$W$} \psfrag{V}[][][1]{$V$}
%\centerline{\includegraphics[width=6cm]{C:/cygwin/usr/X11R6/bin/home/markov3.eps}}
\centerline{\includegraphics[width=5cm]{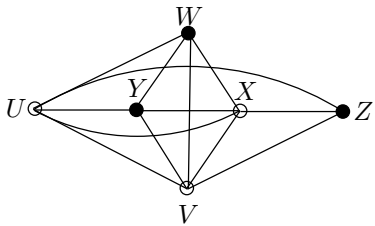}} \caption{A
graphical proof of the Markov chain $W-(X,U,V)-Z$. The undirected
graph corresponds to the joint distribution given in
(\ref{e_two_way_again}), i.e.,  $\overline
p(x,y,z,u,v,w)=p(x,y)p(z|x)p(u|y)p(v|u,z)\overline p(w|u,v,x,y).$
The Markov chain holds since there is no path from $Z$ to $W$ that
does not pass through $(X,U,V)$. }\label{f_markov3}}\end{figure}

A distribution of the form $\overline p(x,y,z,u,v,w)$ as given in
(\ref{e_two_way_again}) implies that the Markov chain $W-(X,U,V)-Z$
holds (see Fig. \ref{f_markov3} for the proof). Therefore $\overline
p(w|u,x,v,z)=\overline p(w|u,x,v).$ Since $\overline
p(x,z,u,v,w)=\overline p(x,z,v,u) \overline p(w|x,u,v)$, and since
$\overline p(x,z,v,u)=p(x,z,v,u)$ and $\overline p(w|x,u,v)=
p(w|x,w,v)$ we conclude that $\overline p(x,z,u,v,w)=p(x,z,u,v,w)$.
\end{proof}

%\begin{theorem}\label{t_rate_dis_helper_side}
%The achievable rate region for the setting illustrated in Fig.
%\ref{f_helper_side}, where $X,Y,Z$ are i.i.d random variables
%forming the Markov chain $Y-Z-X$ is
%\begin{equation}
%\mathcal R^O_{Y-Z-X}(D)=\mathcal R_{Y-Z-X}(D).
%\end{equation}
%\end{theorem}
{\it proof of Theorem \ref{t_two_way}:}

{\bf Achievability:} The achievability scheme is based on the fact
that for the two special cases considered above, namely $R_2=0$ and
$R_3=0$, the coding scheme for the helper was based on a Wyner-Ziv
scheme, where the side information at the decoder is the random
variable that is "further" in the Markov chain $Y-X-Z$, namely $Z$.
The helper (encoder of $Y$) employs Wyner-Ziv coding with decoder
side information $Z$ and
 external random variable $U$, as seen from~(\ref{e_R1_twoway}), i.e., $R_1\geq I(Y;U|Z)$.
The Markov conditions required for such coding, $U - Y - Z$, are
satisfied, hence the source decoder, at the destination, can recover
the codewords constructed from $U$. Moreover,
since~(\ref{e_p_xyuv_decompose_yzx}) implies
 $U - Y - Z - X$, the encoder of $X$ can also reconstruct
$U$. Therefore in the coding/decoding scheme of $X$, $U$ serves as
side information available at both sides. The source $Z$ encoder now
employs Wyner-Ziv coding for $Z$, with decoder side information $X$,
coding random variable $V$, and $U$ available at both sides. The
Markov conditions needed for this scheme are $V - (X,U) - Z$, which
again are satisfied by~(\ref{e_p_twoway}). The rate needed for this
coding is $I(X;V|U,Z)$, reflected in the bound on $R_2$
in~(\ref{e_R2_twoway}). Finally, the source $X$ encoder now employs
Wyner-Ziv coding for $X$, with decoder side information $Z$, coding
random variable $W$, and $U,V$ available at both sides. The Markov
conditions needed for this scheme are $W - (X,U,V) - Z$, which again
are satisfied by~(\ref{e_p_twoway}). The rate needed for this coding
is $I(X;W|U,V,Z)$, reflected in the bound on $R_3$
in~(\ref{e_R3_twoway}). Once the  codes are decoded, the destination
can use all the available random variables, ($U,V,X$) at User X,
and, ($U,W,Z$) at User Z, to construct $\hat{Z}$ and $\hat{X}$,
respectively.

{\bf Converse:} Assume that we have a $(n,M_1,M_2,M_3,D_x,D_z)$
code. We now show the existence of a triple
$(U,V,W,\hat{X},\hat{Z})$ that
satisfy~(\ref{e_R1_twoway})-(\ref{e_def_dist}). Denote
$T_1=f_1(Y^n)$, $T_2=f_2(Z^n,T_1)$, and $T_3=f_3(X^n,T_2,T_1)$. Then using
the same arguments as  in (\ref{e_conv_HT_1side_yzx}) and
(\ref{e_conv_side1_yzx}) (just exchanging between $X$ and $Z$), we
obtain
\begin{eqnarray}\label{e_R1_tow_way_con}
nR_1 &\stackrel{}{\geq}&\sum_{i=1}^n
H(Y_i|Z_i)-H(Y_i|X^{i-1},T_1,Z_i^n),
\end{eqnarray}%
%where step (a) follows the same arguments as  in
%(\ref{e_conv_HT_1_side}) and step (b) follows from the fact that
%conditioning reduces entropy. Using similar arguments as in
%(\ref{e_conv_side1}) (just exchanging between $X$ and $Z$) we
%obtain
\begin{eqnarray}\label{e_R2_tow_way_con}
nR_2 &\stackrel{}{\geq} & \sum_{i=1}^n
H(Z_i|Z_{i+1}^n,T_1,X^{i-1},X_i)-H(Z_i|Z_{i+1}^n,T_1,X^{i-1},X_i,T_2),
\end{eqnarray}
respectively. For upper-bounding $R_3$, consider
\begin{eqnarray}\label{e_R3_tow_way_con}
nR_3 &\stackrel{}{\geq}&H(T_3)\nonumber \\
&\stackrel{}{\geq}&H(T_3|T_1,T_2,Z^n)\nonumber \\
&\stackrel{}{\geq}&I(X^n;T_3|T_1,T_2,Z^n)\nonumber \\
&\stackrel{}{=}&\sum_{i=1}^n H(X_i|X^{i-1},Z^n,T_1,T_2)-H(X_i|X^{i-1},Z^n,T_1,T_2,T_3) \nonumber \\
&\stackrel{(a)}{=}&\sum_{i=1}^n H(X_i|X^{i-1},Z_i^n,T_1,T_2)-H(X_i|X^{i-1},Z^n,T_1,T_2,T_3) \nonumber \\
&\stackrel{}{\geq}&\sum_{i=1}^n
H(X_i|X^{i-1},Z_i^n,T_1,T_2)-H(X_i|X^{i-1},Z_i^n,T_1,T_2,T_3),
\end{eqnarray}
where equality (a) is due to the Markov chain
$X_i-(X^{i-1},Z_i^n,T_1,T_2)-Z^{i-1}$ (see Fig. \ref{f_markov7}).
\begin{figure}[h!]{
\psfrag{X1}[][][1]{$X^{i-1}$} \psfrag{X2}[][][1]{$X_i$}
\psfrag{X3}[][][1]{$X_{i+1}^{n}$} \psfrag{Y1}[][][1]{$Y^{i-1}$}
\psfrag{Y2}[][][1]{$Y_i$} \psfrag{Z3}[][][1]{$Z_{i+1}^{n}$}
\psfrag{Z1}[][][1]{$Z^{i-1}$} \psfrag{Z2}[][][1]{$Z_i$}
\psfrag{Z3}[][][1]{$Z_{i+1}^n$} \psfrag{Y3}[][][1]{$Y_{i+1}^{n}$}
\psfrag{T'}[][][1]{$\;\;\;\;\;\;\;\;\;T_1(Y^n)$}
\psfrag{T}[][][1]{$T_2(Z^n,T_1)\;\;\;\;\;\;\;\;\;\;\;\;\;\;$}
%{C:/cygwin/usr/X11R6/bin/home/rate_helper_side_simpl.eps}}
%\centerline{\includegraphics[width=7cm]{markov6.eps}}
%\centerline{\includegraphics[width=7cm]{c:/home/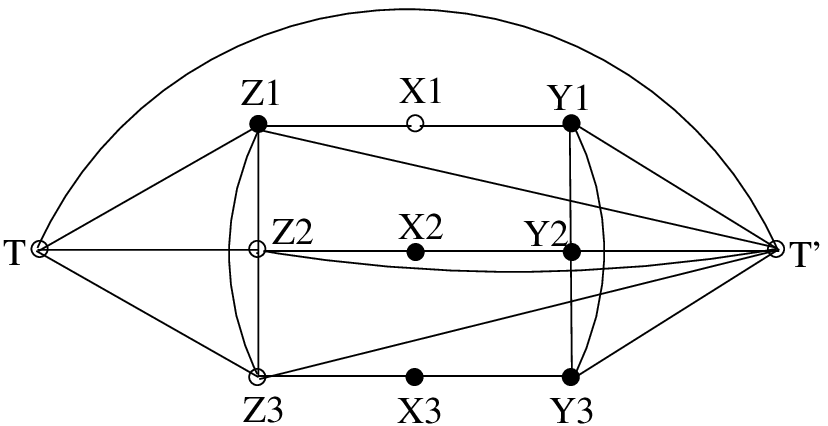}}
\centerline{\includegraphics[width=7cm]{markov7.eps}}

\caption{A graphical proof of the Markov chain
$X_i-(X^{i-1},Z_i^n,T_1,T_2)-Z^{i-1}$. The undirected graph
corresponds to the joint distribution
$p(x^{i-1},z^{i-1})p(y^{i-1}|x^{i-1})p(x_i,z_i)p(y_i|x_i)p(x_{i+1}^n,z_{i+1}^n)p(y_{i+1}^n|x_{i+1}^n)p(t_1|y^n)p(t_2|z^n,t_1)$.
The Markov chain holds since all paths from  $Z^{i-1}$ to $X_i$ pass
through $(X^{i-1},Z_i^n,T_1,T_2)$. }\label{f_markov7}}\end{figure}
Now let us denote $U_i\triangleq X^{i-1},T_1,Z_{i+1}^n$,
$V_i\triangleq T_2$ and $W_i\triangleq T_3$, and we obtain from
(\ref{e_R1_tow_way_con})-(\ref{e_R3_tow_way_con})
\begin{eqnarray}\label{e_conv_two_way_i}
R_1&\geq& \frac{1}{n} \sum_{i=1}^n I(Y_i;U_i|Z_i), \nonumber   \\
R_2&\geq& \frac{1}{n} \sum_{i=1}^n I(Z_i;V_i|U_i,X_i),\nonumber  \\
R_3&\geq& \frac{1}{n} \sum_{i=1}^n I(X_i;W_i|U_i,V_i,Z_i),
\end{eqnarray}
Now, we verify that the joint distribution of
$(X_i,Y_i,Z_i,U_i,V_i,W_i)$ is of the form (\ref{e_p_twoway_upper}),
i.e., $U_i-Y_i-(Z_i,X_i)$, $V_i-(U_i,Z_i)-(Y_i,X_i)$ and
$W_i-(U_i,V_i,X_i,Y_i)-Z_i$, hold. The Markov chain
$(T_1(Y^n),X^{i-1},Z_{i+1}^n)-Y_i-(Z_i,X_i)$ trivially holds, and
the Markov chains
\begin{equation}
Z^{i-1}-(T_1(Y^n),X^{i-1},Z_{i}^n)-(Y_i,X_i),
\end{equation}
\begin{equation}X_{i+1}^n-(T_1(Y^n),T_2(T_1,Z^n),X^i,Z_{i+1}^n,Y_i)-Z_i
\end{equation} are proven in is proven in Fig. \ref{f_markov8},
\ref{f_markov9}, respectively.
\begin{figure}[h!]{
\psfrag{X1}[][][1]{$X^{i-1}$} \psfrag{X2}[][][1]{$X_i$}
\psfrag{X3}[][][1]{$X_{i+1}^{n}$} \psfrag{Y1}[][][1]{$Y^{i-1}$}
\psfrag{Y2}[][][1]{$Y_i$} \psfrag{Z3}[][][1]{$Z_{i+1}^{n}$}
\psfrag{Z1}[][][1]{$Z^{i-1}$} \psfrag{Z2}[][][1]{$Z_i$}
\psfrag{Z3}[][][1]{$Z_{i+1}^n$} \psfrag{Y3}[][][1]{$Y_{i+1}^{n}$}
\psfrag{T'}[][][1]{$\;\;\;\;\;\;\;\;\;T_1(Y^n)$}
\psfrag{T}[][][1]{$T_2(Z^n,T_1)\;\;\;\;\;\;\;\;\;\;\;\;\;\;$}
%\centerline{\includegraphics[width=5cm]{c:/home/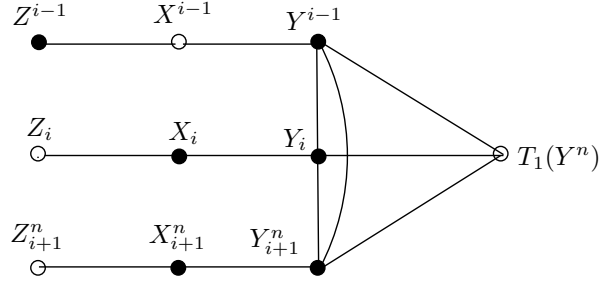}}
\centerline{\includegraphics[width=7cm]{markov8.eps}}

\caption{A graphical proof of the Markov chain
$Z^{i-1}-(T_1(Y^n),X^{i-1},Z_{i}^n)-(Y_i,X_i)$. The undirected graph
corresponds to the joint distribution
$p(x^{i-1},z^{i-1})p(y^{i-1}|x^{i-1})p(x_i,z_i)p(y_i|x_i)p(x_{i+1}^n,z_{i+1}^n)p(y_{i+1}^n|x_{i+1}^n)p(t_1|y^n)$.
The Markov chain holds since all paths from  $Z^{i-1}$ to
$(X_i,Y_i)$ pass through $(X^{i-1},Z_i^n,T_1)$.
}\label{f_markov8}}\end{figure}
\begin{figure}[h!]{
\psfrag{X1}[][][1]{$X^{i-1}$} \psfrag{X2}[][][1]{$X_i$}
\psfrag{X3}[][][1]{$X_{i+1}^{n}$} \psfrag{Y1}[][][1]{$Y^{i-1}$}
\psfrag{Y2}[][][1]{$Y_i$} \psfrag{Z3}[][][1]{$Z_{i+1}^{n}$}
\psfrag{Z1}[][][1]{$Z^{i-1}$} \psfrag{Z2}[][][1]{$Z_i$}
\psfrag{Z3}[][][1]{$Z_{i+1}^n$} \psfrag{Y3}[][][1]{$Y_{i+1}^{n}$}
\psfrag{T'}[][][1]{$\;\;\;\;\;\;\;\;\;T_1(Y^n)$}
\psfrag{T}[][][1]{$T_2(Z^n,T_1)\;\;\;\;\;\;\;\;\;\;\;\;\;\;$}
%\centerline{\includegraphics[width=7cm]{c:/home/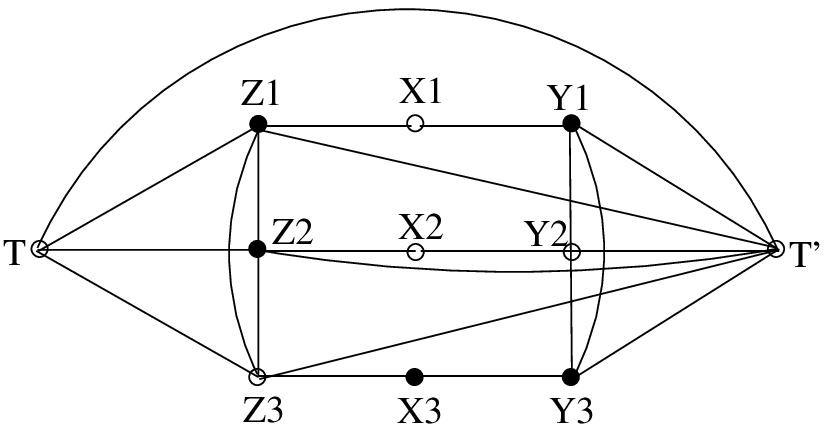}}
\centerline{\includegraphics[width=7cm]{markov9.eps}}

\caption{A graphical proof of the Markov chain
$X_{i+1}^n-(T_1(Y^n),T_2(T_1,Z^n),X^i,Z_{i+1}^n,Y_i)-Z_i$. The
undirected graph corresponds to the joint distribution
$p(x^{i-1},y^{i-1})p(z^{i-1}|y^{i-1})p(x_i,y_i)p(z_i|y_i)p(x_{i+1}^n,y_{i+1}^n)p(z_{i+1}^n|y_{i+1}^n)p(t_1|y^n)p(t_2|z^n,t_1)$.
The Markov chain holds since all paths from  $Z^{i}$ to $X_{i+1}^n$
pass through $(T_1(Y^n),T_2(T_1,Z^n),X^i,Z_{i+1}^n,Y_i)$.
}\label{f_markov9}}\end{figure} Next, we show that exist sequences
of functions $\{\hat Z_i(U_i,W_i,Z_i)\}$, and $\{\hat
X_i(U_i,V_i,Z_i)\}$ such that
\begin{eqnarray}\label{e_cond1_two_way}
\frac{1}{n}\sum_{i=1}^n\mathbb{E}[d(X_i,\hat
X_i(U_i,V_i,Z_i))]&\leq& D_x, \nonumber \\
\frac{1}{n}\sum_{i=1}^n\mathbb{E}[d(X_i,\hat
Z_i(U_i,W_i,X_i))]&\leq& D_z.
\end{eqnarray}
The only difficulty here is that the terms in $(U_i,V_i,Z_i)$ do not
include $Z^{i-1}$ and the terms $(U_i,W_i,X_i)$ do not include
$X_{i+1}^n$. However, this is solved by the same argument as for the
Wyner-Ziv with helper at the end of  Section
\ref{s_wyner_ziv_helper_YZX}, by showing the Markov forms
$X_i-(U_i,V_i,Z_i)-Z^{i-1}$  and $Z_i-(U_i,W_i,X_i)-X_{i+1}^n$ for
which the proofs are given in Figures \ref{f_markov10} and
\ref{f_markov11}, respectively.

\begin{figure}[h!]{
\psfrag{X1}[][][1]{$X^{i-1}$} \psfrag{X2}[][][1]{$X_i$}
\psfrag{X3}[][][1]{$X_{i+1}^{n}$} \psfrag{Y1}[][][1]{$Y^{i-1}$}
\psfrag{Y2}[][][1]{$Y_i$} \psfrag{Z3}[][][1]{$Z_{i+1}^{n}$}
\psfrag{Z1}[][][1]{$Z^{i-1}$} \psfrag{Z2}[][][1]{$Z_i$}
\psfrag{Z3}[][][1]{$Z_{i+1}^n$} \psfrag{Y3}[][][1]{$Y_{i+1}^{n}$}
\psfrag{T'}[][][1]{$\;\;\;\;\;\;\;\;\;T_1(Y^n)$}
\psfrag{T}[][][1]{$T_2(Z^n,T_1)\;\;\;\;\;\;\;\;\;\;\;\;\;\;$}
%{C:/cygwin/usr/X11R6/bin/home/rate_helper_side_simpl.eps}}
%\centerline{\includegraphics[width=7cm]{markov6.eps}}
%\centerline{\includegraphics[width=7cm]{c:/home/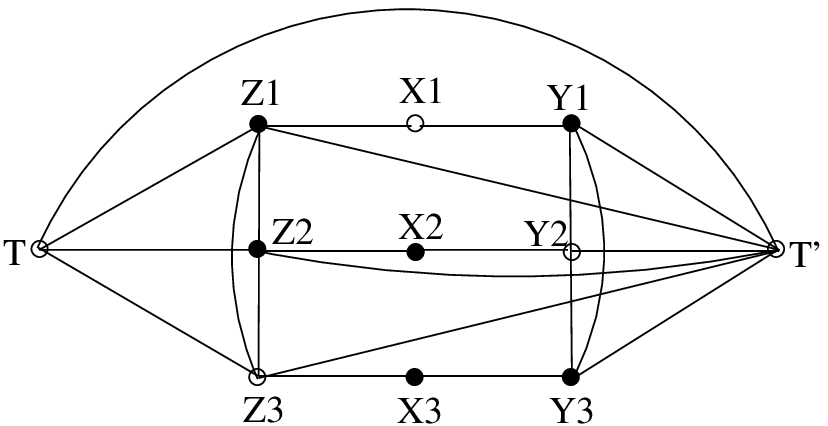}}
\centerline{\includegraphics[width=7cm]{markov10.eps}}

\caption{A graphical proof of the Markov chain
$Z^{i-1}-(T_1(Y^n),T_2(T_1,Z^n),X^{i-1},Z_{i}^n)-X_i$. The
undirected graph corresponds to the joint distribution
$p(x^{i-1},z^{i-1})p(y^{i-1}|x^{i-1})p(x_i,z_i)p(y_i|x_i)p(x_{i+1}^n,z_{i+1}^n)p(y_{i+1}^n|x_{i+1}^n)p(t_1|y^n)p(t_2|z^n,t_1)$.
The Markov chain holds since all paths from  $Z^{i-1}$ to $X_i$ pass
through $(T_1(Y^n),T_2(T_1,Z^n),X^{i-1},Z_{i}^n)$.
}\label{f_markov10}}\end{figure}

\begin{figure}[h!]{
\psfrag{X1}[][][1]{$X^{i-1}$} \psfrag{X2}[][][1]{$X_i$}
\psfrag{X3}[][][1]{$X_{i+1}^{n}$} \psfrag{Y1}[][][1]{$Y^{i-1}$}
\psfrag{Y2}[][][1]{$Y_i$} \psfrag{Z3}[][][1]{$Z_{i+1}^{n}$}
\psfrag{Z1}[][][1]{$Z^{i-1}$} \psfrag{Z2}[][][1]{$Z_i$}
\psfrag{Z3}[][][1]{$Z_{i+1}^n$} \psfrag{Y3}[][][1]{$Y_{i+1}^{n}$}
\psfrag{T'}[][][1]{$\;\;\;\;\;\;\;\;\;T_1(Y^n)$}
\psfrag{T}[][][1]{$T_3(X^n,T_1)\;\;\;\;\;\;\;\;\;\;\;\;\;\;$}
%{C:/cygwin/usr/X11R6/bin/home/rate_helper_side_simpl.eps}}
%\centerline{\includegraphics[width=7cm]{markov6.eps}}
%\centerline{\includegraphics[width=7cm]{c:/home/markov10.eps}}
\centerline{\includegraphics[width=7cm]{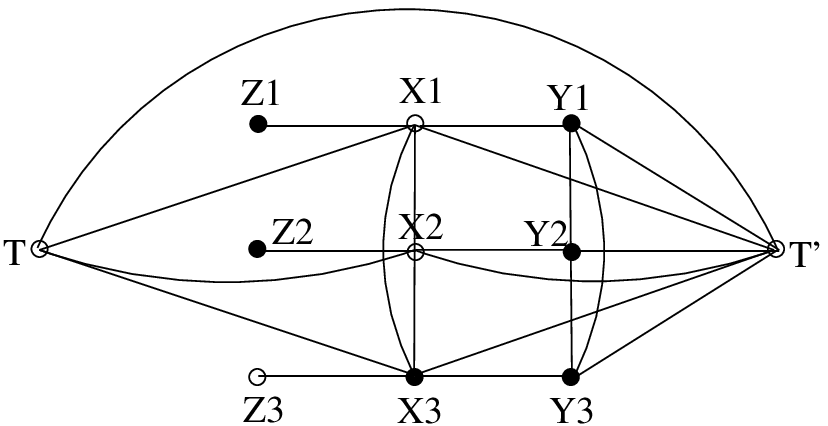}}

\caption{A graphical proof of the Markov chain
$Z_i-(U_i,W_i,X_i)-X_{i+1}^n$. The undirected graph corresponds to
the joint distribution
$p(x^{i-1},z^{i-1})p(y^{i-1}|x^{i-1})p(x_i,z_i)p(y_i|x_i)p(x_{i+1}^n,z_{i+1}^n)p(y_{i+1}^n|x_{i+1}^n)p(t_1|y^n)p(t_3|x^n,t_1)$.
The Markov chain holds since all paths from  $Z^{i}$ to $X_{i+1}^n$
pass through $(T_1(Y^n),T_3(T_1,X^n),X^{i},Z_{i+1}^n)$.
}\label{f_markov11}}\end{figure}

  Finally, let $Q$ be a random
variable independent of $X^n,Y^n,Z^n$, and uniformly distributed
over the set $\{1,2,3,..,n\}$. Define the random variables
$U\triangleq(Q,U_Q)$, $V\triangleq (Q,V_Q)$,  $W\triangleq (Q,W_Q)$,
$\hat X \triangleq\hat X_Q$, and $\hat Z \triangleq \hat Z_Q$. Then
(\ref{e_conv_two_way_i})-(\ref{e_cond1_two_way}) imply that the
equations that define $\mathcal R(D_x,D_z)$ i.e.,
(\ref{e_R1_twoway})-(\ref{e_def_dist}),  hold.

\QED

\section{ Two-way multi stage\label{s_two_way_multi}}

Here we consider the two-way multi-stage rate-distortion problem
with a helper. First, the helper sends a common message to both
users, and then users $X$ and $Z$ send to each other a total rate
$R_x$ and $R_z$, respectively, in $K$ rounds.  We use the definition
of two-way source coding as given in \cite{Kaspi85_two_way}, where
each user may transmit up to $K$ messages to the other user that
depends on the source and previous received messages. %\footnote{The
%case where a helper is absent and the communication between the
%users is via memoryless channels was recently solved by Maor and
%Merhav \cite{MaorM_merhav06_two_way} where they showed that a source
%channel separation theorem holds.}

Let $\mathcal M\ $ denote a set of positive integers $\{1,2,..,M\}$
and let $\mathcal M^K$ the collection of $K$ sets $\{\mathcal
M_1,\mathcal M_2,..., \mathcal M_K\}$.

\begin{figure}[h!]{
\psfrag{b1}[][][1]{$X$} \psfrag{box1}[][][1]{User X}
\psfrag{box3}[][][1]{Helper } \psfrag{t1}[][][1]{$R_y$}
\psfrag{box2}[][][1]{User Z} \psfrag{b3}[][][1]{$\hat X$}
\psfrag{b4}[][][1]{$\hat Z$} \psfrag{Y}[][][1]{$Y$}
\psfrag{Z}[][][1]{$Z$} \psfrag{t1}[][][1]{$R_{y}$}
\psfrag{t2}[][][1]{$R_{z,k}$} \psfrag{t3}[][][1]{$R_{x,k}$}
\psfrag{W}[][][1]{} \psfrag{X}[][][1]{} \psfrag{V}[][][1]{}
\psfrag{YZ}[][][1]{}\psfrag{D}[][][1]{}

%\centerline{\includegraphics[width=13cm]{c:/home/rate_dist_two_way.eps}}
\centerline{\includegraphics[width=13cm]{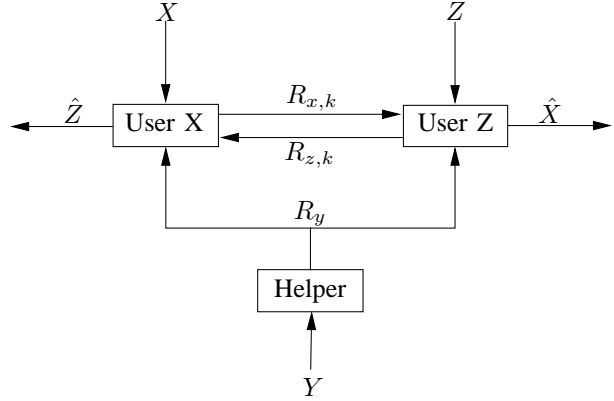}}
\caption{The two-way multi-stage with a helper. First Helper Y sends
a common message to User X and to User Z at rate $R_y$, and then we
have $K$ rounds where in each round $k\in\{1,...,K\}$ User Z sends a
message to User X at rate $R_{z,k}$, and User X sends a message to
User $Z$ at rate $R_{x,k}$. The limitation is on rate $R_y$ and on
the sum rates $R_x=\sum_{k=1}^K R_{x,k}$ and  $R_z=\sum_{k=1}^K
R_{z,k}$. % The goal is that User X will reconstruct the sequence
%$Z^n$ within a fidelity criterion
%$\mathbb{E}\left[\frac{1}{n}\sum_{i=1}^n d_z(Z_i,\hat
%Z_i)\right]\leq D_z$, and User Z will reconstruct the source $X^n$
%within a fidelity criterion $\frac{1}{n}\mathbb{E}\left[\sum_{i=1}^n
%d_x(X_i,\hat X_i)\right]\leq D_x.$
We assume that the side information $Y$ and the two sources $X,Z$
are i.i.d. and form the Markov chain $Y-X-Z$.}
\label{f_helper_two_way_multi} }\end{figure}

\begin{definition}\label{def_code}
An $(n,M_y,M_{x}^K,M_{z}^{K},D_x,D_z)$ code for two sources $X$ and
$Z$ with helper $Y$ consists of the encoders
\begin{eqnarray}
f_y&:& \mathcal Y^n \to \mathcal M_y \nonumber \\% \{1,2,...,M_1\} \nonumber \\
f_{z,k} &:& \mathcal Z^n \times \mathcal M^{k-1}_x \times \mathcal M_y \to \mathcal M_{z,k}, \quad k=1,2,...,K \nonumber \\
f_{x,k} &:& \mathcal X^n \times \mathcal M^{k}_z \times \mathcal M_y
\to \mathcal M_{x,k}, \quad k=1,2,...,K
\end{eqnarray}
%\end{equation}
and two decoders
\begin{eqnarray}
g_x &:&  \mathcal X^n \times \mathcal M_y \times \mathcal M^K_z
\to \hat{\cal Z}^n \nonumber \\
g_z &:&  \mathcal Z^n  \times \mathcal M_y \times \mathcal M^K_x
 \to \hat{\cal X}^n
\end{eqnarray}
such that
\begin{eqnarray}
\mathbb{E}\left[\sum_{i=1}^n d_x(X_i,\hat X_i)\right]&\leq& D_x,\nonumber \\
\mathbb{E}\left[\sum_{i=1}^n d_z(Z_i,\hat Z_i)\right]&\leq& D_z,
\end{eqnarray}
\end{definition}
The rate triple $(R_x,R_y,R_z)$ of the code is defined by
\begin{eqnarray}
R_y&=&\frac{1}{n}\log M_y;  \nonumber \\
R_x&=&\frac{1}{n}\sum_{i=1}^K \log M_{x,i};  \nonumber \\
R_z&=&\frac{1}{n}\sum_{i=1}^K \log M_{z,i};
\end{eqnarray}

Let us denote by ${\cal R}_K^O(D_x,D_z)$ the (operational)
achievable region of the multi-stage rate distortion with a helper,
i.e.,  the closure of the set of all triple rate $(R_x,R_y,R_z)$
that are achievable with a distortion pair $(D_x,D_z)$. Let
$\mathcal R_{K}(D_x,D_z)$ be the set of all triple rates
$(R_x,R_y,R_z)$ that satisfy
\begin{eqnarray}
R_y&\geq& I(U;Y),\label{e_multi_Ry} \\
R_z&\geq& \sum_{k=1}^{K} I(Z;V_k|X,U,V^{k-1},W^{k-1}), \label{e_multi_Rz}  \\
R_x&\geq& \sum_{k=1}^{K} I(X;W_k|Z,U,V^{k},W^{k-1}),
\label{e_multi_Rx}
\end{eqnarray}
for some auxiliary random variables $(U,V^K,W^k)$ that satisfy
\begin{equation}\label{e_markov_multi1}
U-Y-(X,Z),\\
\end{equation}
\begin{equation}\label{e_markov_multi2}
V_k-(Z,U,V^{k-1},W^{k-1})-(X,Y),\ \ k=1,2,...,K,\\
\end{equation}
\begin{equation}\label{e_markov_multi3}
W_k-(X,U,V^k,W^{k-1})-(Z,Y),\ \ k=1,2,...,K,
\end{equation}
\bea \mathbb{E}d_x(X,\hat X(U,W^K,Z))&\leq& D_x,\nonumber \\
\mathbb{E}d_z(Z,\hat Z(U,V^K,X))&\leq& D_z.\label{e_dist_multi} \eea

The Markov chain $Y-X-Z$ and the Markov chains given in
(\ref{e_markov_multi1})-(\ref{e_markov_multi3}) imply that the joint
distribution of $X,Y,Z,U,V^k,W^k$ is
 of the form $p(x,y)p(z|x)p(u|y)\prod_{k=1}^K
p(v_k|z,u,v^{k-1},w^{k-1})p(w_k|x,u,v^{k},w^{k-1}).$ Furthermore,
(\ref{e_multi_Rz}) and (\ref{e_multi_Rx}) can be written as
\begin{eqnarray}
R_z&\geq& I(Z;V^K,W^K|X,U),\label{e_multi_Rz2} \\
R_x&\geq& I(X;V^K,W^K|Z,U), \label{e_multi_Rx2}
\end{eqnarray}
due to the  the Markov chains $Z-(X,U,V^k,W^{k-1})-W_k$ and
$X-(Z,U,V^{k-1},W^{k-1})-V_k$.

\begin{lemma}
\begin{enumerate}
\item \label{lemma_properties_R_convex}
The region ${\cal R}_K(D_x,D_z)$ is convex
\item \label{lemma_properties_R_size}
To exhaust ${\cal R}_K(D_x,D_z)$, it is enough to restrict the
alphabet of $U$, $V$, and $W$ to satisfy
\begin{eqnarray}
|{\cal U}|&\leq& |{\cal Y}|+2K+1,\nonumber \\
|{\mathcal V_k}|&\leq& |{\cal Z}||{\cal U}||{\mathcal V^{k-1}}||{\mathcal W^{k-1}}|+2(K+1-k)+1,\;\;\; \text{for } k=1,..,K, \nonumber \\
|{\mathcal W_k}| &\leq& |{\cal X}||{\cal U}||{\mathcal
V^{k}}||{\mathcal W^{k-1}}|+2(K+1-k),\;\;\; \text{for } k=1,..,K.
\end{eqnarray}
\end{enumerate}
\end{lemma}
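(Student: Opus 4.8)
The plan is to follow the pattern of Lemma~\ref{lemma:properties_R}: prove convexity by a time-sharing argument and the cardinality bounds by the Fenchel--Eggleston--Carath\'eodory support lemma, reducing the auxiliary variables one at a time. For convexity I would take two achievable triples, say $(R_x^{(0)},R_y^{(0)},R_z^{(0)})$ and $(R_x^{(1)},R_y^{(1)},R_z^{(1)})$, realized by auxiliary tuples $(U^{(j)},V^{K,(j)},W^{K,(j)})$, $j\in\{0,1\}$, each with a joint law over the common source $p(x,y,z)$ obeying (\ref{e_markov_multi1})--(\ref{e_markov_multi3}) and (\ref{e_dist_multi}). I introduce a time-sharing variable $Q$ independent of $(X,Y,Z)$ with $\Pr(Q=1)=\lambda$, and set $\tilde U=(Q,U^{(Q)})$, $\tilde V_k=V_k^{(Q)}$, $\tilde W_k=W_k^{(Q)}$. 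Since $Q\perp(X,Y,Z)$ and each branch already satisfies the three Markov chains, conditioning on $Q$ preserves (\ref{e_markov_multi1})--(\ref{e_markov_multi3}), so the new tuple is admissible. Folding $Q$ into $\tilde U$ makes every conditional information in (\ref{e_multi_Ry})--(\ref{e_multi_Rx}) conditioned on $Q$; for instance $I(\tilde U;Y)=I(Q;Y)+I(U^{(Q)};Y\mid Q)=\lambda I(U^{(1)};Y)+(1-\lambda)I(U^{(0)};Y)$, the $R_z,R_x$ sums split the same way, and the two distortions in (\ref{e_dist_multi}) average to the convex combination, remaining below $(D_x,D_z)$. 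Hence the convex combination of the two triples lies in $\mathcal R_K(D_x,D_z)$.

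For the cardinality bounds I would fix an admissible joint law and shrink the auxiliaries sequentially, in their generation order $U,V_1,W_1,\dots,V_K,W_K$, so that the bound on each variable may use the already-reduced alphabets of its predecessors --- exactly the nesting displayed in the statement. At each step the variable being reduced is treated as a mixing index, and the support lemma retains only as many of its mass points as there are scalar functionals that must be held constant. For $U$, the chain $U-Y-(X,Z)$ means $U$ acts only through $p(y\mid u)$, so preserving the marginal $p(y)$ keeps both the source law $p(x,y,z)$ and (\ref{e_markov_multi1}) intact at a cost of $|\mathcal Y|-1$ functionals; the additive term $2K+1$ then counts the remaining quantities depending on $U$, namely $I(U;Y)$, the $2K$ per-round rate terms in (\ref{e_multi_Rz})--(\ref{e_multi_Rx}), and the pertinent distortion functional. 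For a round-$k$ variable the multiplicative factor is the alphabet size of its Markov context --- $(Z,U,V^{k-1},W^{k-1})$ for $V_k$ in (\ref{e_markov_multi2}) and $(X,U,V^{k},W^{k-1})$ for $W_k$ in (\ref{e_markov_multi3}) --- whose joint law must be frozen to keep the chains and the source marginal, while the additive term $2(K+1-k)$ tallies the rate functionals still active from round $k$ onward in both directions, together with the matching distortion term ($D_z$ for $V_k$, $D_x$ for $W_k$). Because each retained conditional law is a genuine conditional of the original, all of (\ref{e_markov_multi1})--(\ref{e_markov_multi3}) survive automatically.

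The main obstacle is the bookkeeping in the cardinality argument rather than any single hard inequality: one must identify, at each reduction, the exact set of functionals that still depend on the variable in question --- which rate terms among the $K$ rounds remain active once the predecessors are frozen, and which of the two distortions is genuinely affected --- and verify that reducing a variable leaves untouched every quantity already pinned down for its predecessors. Getting this tally right is precisely what produces the constants $2K+1$, $2(K+1-k)+1$, and $2(K+1-k)$; the cleanest way to keep it honest is to write each conserved quantity explicitly as a continuous functional of the conditional law indexed by the variable being reduced before invoking the support lemma.
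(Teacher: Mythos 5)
Your proposal is correct and matches the route the paper intends: the paper omits this proof outright, stating it is analogous to Lemma~\ref{lemma:properties_R}, whose proof in Appendix~\ref{app_lemma_properties} is exactly your argument --- convexity via a time-sharing variable $Q$ absorbed into $U$, and cardinality via sequential applications of the support lemma in the generation order of the auxiliaries, each step preserving the joint law of the already-reduced context together with the still-active rate and distortion functionals. The only caveat is the final tally (e.g., $U$ enters both distortion functionals, which would suggest $|\mathcal Y|+2K+2$ rather than $|\mathcal Y|+2K+1$), but this bookkeeping discrepancy is shared with the paper's own stated constants and does not affect the substance of the argument.
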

The proof of the lemma is analogous to the proof of
Lemma~\ref{lemma:properties_R} and therefore omitted.

\begin{theorem}\label{t_two_way_multi}
In the  two-way problem with $K$ stages of communication and a
helper, as depicted in Fig. \ref{f_helper_two_way_multi}, where
$Y-X-Z$,
\begin{equation}
\mathcal R_K^O(D_x,D_z)=\mathcal R_K(D_x,D_z).
\end{equation}
\end{theorem}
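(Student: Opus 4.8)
The plan is to prove both inclusions by lifting the single-round argument of Theorem~\ref{t_two_way} to the $K$-round interactive protocol of Kaspi~\cite{Kaspi85_two_way}. The structural fact that makes the lift routine is already recorded in (\ref{e_multi_Rz2})--(\ref{e_multi_Rx2}): because of the per-round Markov relations (\ref{e_markov_multi2})--(\ref{e_markov_multi3}), the round-wise sums in (\ref{e_multi_Rz})--(\ref{e_multi_Rx}) telescope into the two aggregate quantities $I(Z;V^K,W^K|X,U)$ and $I(X;V^K,W^K|Z,U)$. Hence, once the helper variable $U$ is made common to both terminals, what remains is exactly a two-way (Kaspi) source-coding problem in which each source has been augmented by the common side information $U$.

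\textbf{Achievability.} First I would have the helper convey $U^n$ to both users exactly as in the proof of Theorem~\ref{t_two_way}, at the rate of (\ref{e_multi_Ry}): since (\ref{e_markov_multi1}) gives $U-Y-X-Z$ and therefore $I(U;Z)\le I(U;X)$, a code designed against the ``further'' source $Z$ is decoded successfully by User~X and User~Z alike. With $U^n$ in hand at both terminals I would run $K$ rounds of interactive Wyner-Ziv coding: in round $k$ User~Z sends $V_k$ by binning against the side information $(X,U,V^{k-1},W^{k-1})$ held by User~X, and User~X replies with $W_k$ by binning against $(Z,U,V^{k},W^{k-1})$ held by User~Z. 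The Markov conditions (\ref{e_markov_multi2})--(\ref{e_markov_multi3}) are precisely what guarantees correct decoding at each step, the rates are those of (\ref{e_multi_Rz})--(\ref{e_multi_Rx}), and after the final round each user forms its reconstruction from the variables at its disposal so as to meet (\ref{e_dist_multi}).

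\textbf{Converse.} Given an $(n,M_y,M_x^K,M_z^K,D_x,D_z)$ code, set $T_y=f_y(Y^n)$ and let $T_{z,k},T_{x,k}$ be the messages exchanged in round $k$. Mirroring the two-way converse, I would introduce the single-letter auxiliaries $U_i\triangleq(X^{i-1},T_y,Z_{i+1}^n)$, $V_{k,i}\triangleq T_{z,k}$, and $W_{k,i}\triangleq T_{x,k}$, bound each rate below by the entropy of the corresponding message, insert the relevant conditioning, and apply the chain rule ``on the past and on the future'' as in (\ref{e_conv_HT_1side_yzx})--(\ref{e_conv_side1_yzx}) and (\ref{e_R1_tow_way_con})--(\ref{e_R3_tow_way_con}). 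The Markov relations (\ref{e_markov_multi1})--(\ref{e_markov_multi3}), now required for every $k$, I would verify with the undirected-graph technique of Section~\ref{s_preliminary}, generalizing Figures~\ref{f_markov7}--\ref{f_markov11}. For the distortion constraints I would reuse the reduction at the end of Section~\ref{s_wyner_ziv_helper_YZX}: Markov forms of the type $X_i-(U_i,\{V_{k,i}\}_k,Z_i)-Z^{i-1}$ and $Z_i-(U_i,\{W_{k,i}\}_k,X_i)-X_{i+1}^n$ allow the ``missing'' past/future coordinates to be dropped from the reconstruction functions without raising the distortion. A uniform time-sharing variable $Q$ then single-letterizes (\ref{e_multi_Ry})--(\ref{e_dist_multi}).

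\textbf{Main obstacle.} The delicate part is the round-by-round bookkeeping in the converse. For each $k$ one must certify that $V_{k,i}$ and $W_{k,i}$ obey (\ref{e_markov_multi2})--(\ref{e_markov_multi3}) given all previously exchanged messages $V^{k-1},W^{k-1}$, and that the individual per-round rate bounds reassemble into the nested sums (\ref{e_multi_Rz})--(\ref{e_multi_Rx}) rather than merely into the aggregate forms (\ref{e_multi_Rz2})--(\ref{e_multi_Rx2}). Keeping each graph separation valid while the conditioning sets grow with $k$ is what makes the interaction genuinely more intricate than the $K=1$ case, although no tool beyond the graph technique and the two-directional chain rule should be needed.
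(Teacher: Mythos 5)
Your proposal follows essentially the same route as the paper: the helper's message is a Wyner--Ziv code binned against the ``further'' side information $Z$ and decoded by both users, the $K$ rounds are handled by iterated Wyner--Ziv coding with the growing common context $(U,V^{k-1},W^{k-1})$, and the converse uses the same auxiliary identifications $U_i=(X^{i-1},T_y,Z_{i+1}^n)$, $V_{k,i}=T_{z,k}$, $W_{k,i}=T_{x,k}$ together with the past/future chain rule, the undirected-graph verification of the per-round Markov conditions, and time-sharing. The paper itself presents only this sketch (declaring the extension of Theorem~\ref{t_two_way} straightforward), and your outline matches it point for point, including the reduction of the reconstruction functions via the Markov forms that let the missing past/future coordinates be dropped.
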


Theorem \ref{t_two_way_multi} is a generalization of Theorem
\ref{t_two_way} (equations (\ref{e_multi_Ry})-(\ref{e_dist_multi})
where $K=1$ are equivalent to
(\ref{e_R1_twoway})-(\ref{e_def_dist})) and its proof is a
straightforward extension. Here we explain only the extensions.

%Theorem \ref{t_two_way} is a special case of Theorem
%\ref{t_two_way_multi}, where $K=1$, and the proof of Theorem
%\ref{t_two_way_multi} is a straightforward extension of the proof of
%Theorem \ref{t_two_way}, which is given in Section
%\ref{s_proof_t_two_way}.

{\bf Sketch of achievability:} In the achievability proof of Theorem
\ref{t_two_way}, we generated the sequences $(U^n,V_1^n,W_1^n)$ that
are jointly typical with $X^n,Y^n,Z^n$. Using the same idea of
Wyner-Ziv coding we continue and generate at any stage
$k=1,2,...,K$, the sequence $V_k^n$ that is jointly typical with the
other sequences by transmitting a message at rate
$I(Z;V_k|X,U,V^{k-1},W^{k-1})$ from User Z to User X, and similarly
the sequence $W_k^n$ that is jointly typical with the other
sequences by transmitting a message at rate
$I(X;W_k|Z,U,V^{k},W^{k-1})$ from User X to User Z. In the final
stage, User X uses the sequences $(X^n,U^n,V_1^n,...,V_K^n)$ to
construct $\hat Z^n$ and, similarly, User Z uses the sequences
$(Z^n,U^n,W_1^n,...,W_K^n)$ to construct $\hat X^n$.

{\bf Sketch of Converse:} Assume that we have an
$(n,M_y,M_{x}^K,M_{z}^{K},D_x,D_z)$ code and we will show the
existence of a vector $(U,V^K,W^K,\hat{X},\hat{Z})$ that
satisfy~(\ref{e_multi_Ry})-(\ref{e_dist_multi}). Denote
$T_y=f_y(Y^n)$, $T_{z,k}=f_{z,k}(Z^n,T_y,T_x^{k-1})$, and
$T_{x,k}=f_{x,k}(X^n,T_y,T_z^k)$. Then the same arguments as  in
(\ref{e_R1_tow_way_con}) we obtain
\begin{eqnarray}\label{e_R1_tow_way_con_multi}
nR_y &\stackrel{}{\geq}&\sum_{i=1}^n
H(Y_i;X^{i-1},T_y,Z_{i+1}^n|Z_i)
%H(Y_i|Z_i)-H(Y_i|X^{i-1},T_y,Z_i^n),
\end{eqnarray}%
Then we have

\begin{eqnarray}
nR_z &\stackrel{}{\geq} & H(T_{z}^K)=\sum_{k=1}^K
H(T_{z,k}|T_{z}^{k-1}) {\geq} \sum_{k=1}^K
H(T_{z,k}|T_{z}^{k-1},T_{x}^{k-1}),  \label{e_HTz} \\
nR_x
&\stackrel{}{\geq} & H(T_{x}^K)=\sum_{k=1}^K
H(T_{x,k}|T_{x}^{k-1}){\geq} \sum_{k=1}^K
H(T_{x,k}|T_{x}^{k-1},T_{z}^{k})\label{e_HTx}.
\end{eqnarray}
Applying the same arguments as in (\ref{e_R2_tow_way_con}) and
(\ref{e_R3_tow_way_con}) on the terms in (\ref{e_HTz}) and
(\ref{e_HTx}), respectively, we obtain that

\begin{eqnarray}
H(T_{z,k}|T_{z}^{k-1},T_{x}^{k-1}) &\stackrel{}{\geq} &\sum_{i=1}^n I(Z_i;T_{z,k}|Z_{i+1}^n,X^{i},T_y,T_{z}^{k-1},T_{x}^{k-1}) \nonumber \\
H(T_{x,k}|T_{x}^{k-1},T_{z}^{k})  &\stackrel{}{\geq} &\sum_{i=1}^n
I(X_i;T_{x,k}|Z_{i}^n,X^{i-1},T_y,T_{z}^{k},T_{x}^{k-1}).
\end{eqnarray}

We define the auxiliary random variables as $U\triangleq
X^{Q-1},T_y,Z_{Q+1}^n$, $V_{k}=T_{z,k}$ and $W_{k}=T_{x,k}$, where
$Q$ is distributed uniformly on the integers $\{1,2,...,n\}$. \hfill
\QED

\section{Gaussian Case\label{s_gaussian_case}}
In this subsection we consider the Gaussian instance of the two way
setting with a helper as defined in Section \ref{s_definition} and
explicitly express the region for a mean square error distortion (we
 also note that the multi
 stage option does not increase the rate region for this case).

\begin{figure}[h!]{
\psfrag{b1}[][][1]{$X=Z+A$} \psfrag{box1}[][][1]{User X}
\psfrag{box3}[][][1]{Helper } \psfrag{t1}[][][1]{$R_y$}
\psfrag{box2}[][][1]{User Z} \psfrag{b3}[][][1]{$\hat X$}
\psfrag{b4}[][][1]{$\hat Z$} \psfrag{Y}[][][1]{$Y=Z+A+B$}
\psfrag{Z}[][][1]{$Z$} \psfrag{t1}[][][1]{$R_{y}$}
\psfrag{t2}[][][1]{$R_{z}$} \psfrag{t3}[][][1]{$R_{x}$}
\psfrag{W}[][][1]{$A \sim \text{N}(0,\sigma^2_A),$}
\psfrag{X}[][][1]{$B \sim \text{N}(0,\sigma^2_B),$}
\psfrag{V}[][][1]{$Z \sim \text{N}(0,\sigma^2_Z),$}
\psfrag{YZ}[][][1]{$A\perp B \perp Z,$}
\psfrag{D}[][][1]{square-error distortion}
%\centerline{\includegraphics[width=13cm]{c:/home/rate_dist_two_way.eps}}
\centerline{\includegraphics[width=13cm]{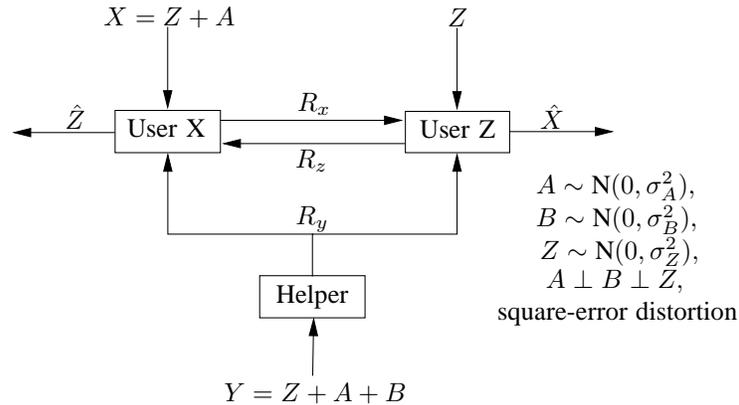}}
\caption{The Gaussian two-way  with a helper. The side information
$Y$ and the two sources $X,Z$ are i.i.d., jointly  Gaussian and form
the Markov chain $Y-X-Z$.  The distortion is the square error, i.e.,
$d_x(X^n,\hat X^n)=\frac{1}{n}\sum_{i=1}^n (X_i - \hat X_i)^2$ and
$d_z(Z^n,\hat Z^n)=\frac{1}{n}\sum_{i=1}^n (Z_i - \hat Z_i)^2$.}
\label{f_helper_two_way_gaussin} }\end{figure}

%In this subsection we consider the Gaussian instance that
%corresponds to Theorem \ref{t_rate_dis_helper_side} (see Fig.
%\ref{f_gaussian_side}).
Since $X,Y,Z$ form the Markov chain $Y-X-Z$,
we assume, without loss of generality, that $X=Z+A$ and $Y=Z+A+B$,
where the random variables $(A,B,Z)$ are zero-mean Gaussian and
independent of each other, where $\mathbb E[A^2]=\sigma_A^2$,
$\mathbb E[B^2]=\sigma_B^2$ and $\mathbb E[Z^2]=\sigma_Z^2$.
\begin{corollary}\label{t_gaussian_region_side}
The achievable rate region of the problem illustrated in Fig.
\ref{f_helper_two_way_gaussin} is
\begin{eqnarray}
R_z&\geq&  \frac{1}{2}\log
\frac{\sigma_A^2\sigma_Z^2}{D_z(\sigma_A^2+\sigma_Z^2)},\label{e_gayussian_Rz}
\\
R_x&\geq&  \frac{1}{2}\log
\frac{\sigma_A^2\left(\sigma_B^2+\sigma_A^2
2^{-2R_y}\right)}{D_x(\sigma_A^2+\sigma_B^2)}.
\end{eqnarray}
\end{corollary}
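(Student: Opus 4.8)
The plan is to specialize the single-letter region $\mathcal R(D_x,D_z)$ of Theorem~\ref{t_two_way} to the jointly Gaussian source and to evaluate the two binding constraints $R_z\ge I(Z;V|U,X)$ and $R_x\ge I(X;W|U,V,Z)$ separately, with the helper rate $R_y\ge I(Y;U|Z)$ entering only through the admissible choices of $U$ in the second one. Throughout I would use the representation $X=Z+A$, $Y=X+B$, so that conditioned on $Z$ one has $X-Z=A$ and $Y-Z=A+B$, with $A,B$ independent of $Z$ and of each other. For the achievability direction I would exhibit matching Gaussian test channels compatible with the factorization $(\ref{e_p_twoway})$: a helper description $U=Y+N_U$ with $\mathrm{Var}(N_U)$ tuned so that $I(Y;U|Z)=R_y$, a Wyner--Ziv description $V=Z+N_V$ of $Z$ with decoder side information $X$, and a Wyner--Ziv description $W=X+N_W$ of $X$ with two-sided information $(U,Z)$. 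Since Gaussian Wyner--Ziv coding incurs no rate loss, evaluating the three mutual informations at these choices reproduces the claimed expressions.

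For the converse on $R_z$ the key observation is that $(\ref{e_p_twoway})$ forces the Markov chain $U-X-Z$, so that $U$ is useless once $X$ is known: $h(Z|U,X)=h(Z|X)=\tfrac12\log(2\pi e\,\sigma_{Z|X}^2)$ with $\sigma_{Z|X}^2=\sigma_A^2\sigma_Z^2/(\sigma_A^2+\sigma_Z^2)$, this being exact because $(Z,X)$ are jointly Gaussian. Combining this with $h(Z|U,V,X)\le\tfrac12\log(2\pi e\,D_z)$, which follows from the distortion constraint and the fact that $\hat Z$ is a deterministic function of $(U,V,X)$, yields $R_z\ge I(Z;V|U,X)\ge\tfrac12\log(\sigma_{Z|X}^2/D_z)$, matching $(\ref{e_gayussian_Rz})$.

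The converse on $R_x$ is the technical heart. First I would discard $V$: since $(\ref{e_p_twoway})$ gives $V-(U,Z)-X$, a chain-rule argument shows $I(X;W|U,V,Z)\ge I(X;W|U,Z)$, and the distortion bound then gives $I(X;W|U,Z)\ge h(X|U,Z)-\tfrac12\log(2\pi e\,D_x)$. It remains to lower bound $h(X|U,Z)=h(A|U,Z)$ under the helper constraint $I(Y;U|Z)\le R_y$, which is equivalent to $e^{2h(A+B|U,Z)}\ge 2\pi e(\sigma_A^2+\sigma_B^2)2^{-2R_y}$. The obstacle is that the conditional entropy power inequality cannot be applied to the sum $A+B$ directly, because conditioning on the helper description $U$ (a noisy function of $A+B$) makes $A$ and $B$ dependent. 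I would circumvent this with the MMSE innovation decomposition $A=\frac{\sigma_A^2}{\sigma_A^2+\sigma_B^2}(A+B)+A'$, where $A'$ is the estimation error; using the Markov chain $U-(A+B)-A$ one verifies $A'\perp(A+B,U)\mid Z$, so the conditional EPI does apply to these two now-conditionally-independent summands and gives
\begin{equation}
e^{2h(A|U,Z)}\ge\Big(\tfrac{\sigma_A^2}{\sigma_A^2+\sigma_B^2}\Big)^2 e^{2h(A+B|U,Z)}+2\pi e\,\tfrac{\sigma_A^2\sigma_B^2}{\sigma_A^2+\sigma_B^2}.
\end{equation}
Substituting the helper constraint collapses the right-hand side to $2\pi e\,\sigma_A^2(\sigma_B^2+\sigma_A^2 2^{-2R_y})/(\sigma_A^2+\sigma_B^2)$, and dividing by $D_x$ inside the logarithm yields the second bound. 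I expect this EPI step, together with the verification of the conditional-independence claim $A'\perp(A+B,U)\mid Z$ that licenses it, to be the part demanding the most care; the remaining steps are routine Gaussian algebra.
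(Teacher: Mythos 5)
Your proposal is correct, but its converse takes a genuinely different route from the paper's. The paper does not evaluate the single-letter region of Theorem~\ref{t_two_way} directly: for $R_z$ it argues operationally that Gaussian Wyner--Ziv coding incurs no rate loss and that, by the Markov chain $Y-X-Z$, the helper cannot reduce $R_z$ below the rate achievable with $Y$ known to both users; for $R_x$ it reveals $Z^n$ to all encoders, subtracts it, and thereby reduces the problem to the one-source Gaussian helper problem (source $A$, helper $A+B$), whose solution it imports from~\cite{Vasudevan07_helper}. You instead prove the key inequality yourself: after discarding $V$ via the Markov chain $V-(U,Z)-X$, you lower-bound $h(A\mid U,Z)$ under the constraint $I(Y;U\mid Z)\le R_y$ by writing $A$ as its MMSE estimate from $A+B$ plus an innovation $A'$, checking that $A'$ is conditionally independent of $(A+B,U)$ given $Z$ (which holds, since $U$ is a randomized function of $Y=(Z,A+B)$ and $A'\perp(A+B,Z)$), and applying the conditional EPI; the algebra indeed collapses to $2\pi e\,\sigma_A^2(\sigma_B^2+\sigma_A^2 2^{-2R_y})/(\sigma_A^2+\sigma_B^2)$. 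Your argument is self-contained and makes explicit the EPI mechanism that the paper hides inside the citation, which is a real gain; the paper's reduction buys brevity and, more importantly, avoids a technicality your route silently assumes, namely that the finite-alphabet characterization of Theorem~\ref{t_two_way} remains valid for continuous sources with quadratic distortion (the operational converse ``give everyone $Z$ and subtract it'' needs no such extension). The achievability sides are essentially identical (Gaussian test channels plus the no-rate-loss property of~\cite{Wyner78_gaussian_WZ}, which the paper packages as Lemma~\ref{l_wyner_ziv_side_gaussian}). One cosmetic remark: you mix $e^{2h(\cdot)}$ with $2^{-2R_y}$; pick one base for entropies and rates before the EPI step.
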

\begin{proof}
The converse and achievability of (\ref{e_gayussian_Rz}) follows
from the Gaussian Wyner-Ziv coding {\cite{Wyner78_gaussian_WZ}}
result, which states that the achievable rate for the Gaussian
Wyner-Ziv setting is the same as the case where the side information
is known to the encoder and decoder. Furthermore, because of the
Markov chain $Z-X-Y$, the rate $R_y$ does not have any influence on
$R_z$, since this rate is the achievable rate even if $Y$ is known
to both users. The achievability and the converse for $R_x$ is given
in the following corollary.
\end{proof}

\begin{figure}[h!]{
 \psfrag{box1}[][][1]{$X=Z+A$}
\psfrag{box3}[][][1]{$Y=Z+A+B$} \psfrag{Side}[][][1]{$Z\;$}

 \psfrag{a2}[][][1]{$T_x\in 2^{nR_x}$}
 \psfrag{t1}[][][1]{$$}
\psfrag{A1}[][][1]{$$} \psfrag{A2}[][][1]{$$}
 \psfrag{box2}[][][1]{$\; \hat X^n$}
\psfrag{b3}[][][1]{$\hat X$} \psfrag{a3}[][][1]{}
\psfrag{W}[][][1]{$A \sim \text{N}(0,\sigma^2_A),$}
\psfrag{Y}[][][1]{$B \sim \text{N}(0,\sigma^2_B),$}
\psfrag{Z}[][][1]{$Z \sim \text{N}(0,\sigma^2_Z),$}
\psfrag{YZ}[][][1]{$A\perp B \perp Z,$}
\psfrag{D}[][][1]{square-error distortion}
\psfrag{t2}[][][1]{$T_y\in 2^{nR_y}$}
\centerline{\includegraphics[width=15cm]{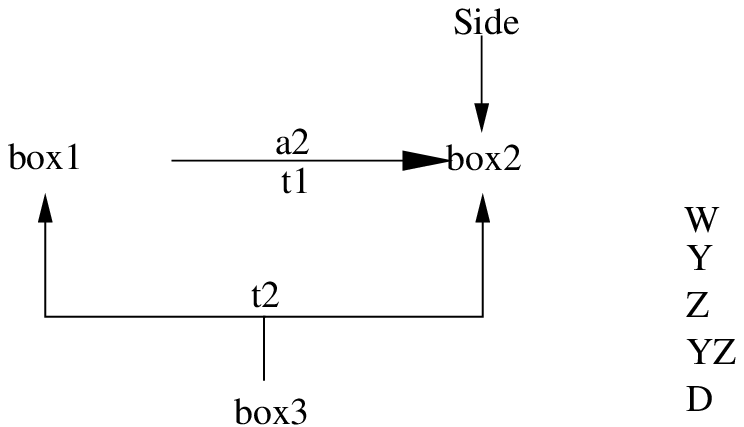}}
%\centerline{\includegraphics[width=15cm]
%{C:/cygwin/usr/X11R6/bin/home/rate_helper_side_simple.eps}}
\caption{Gaussian case: the zero-mean Gaussian random variables
$A,B,Z$ are i.i.d. and independent of each other. Their variances
are $\sigma^2_A$, $\sigma^2_B$ and $\sigma^2_Z$, respectively. The
source $X$ and the helper $Y$ satisfy $X=A + Z$ and $Y=Z+A+B$. The
distortion is the square error, i.e., $d(X^n,\hat
X^n)=\frac{1}{n}\sum_{i=1}^n (X_i - \hat X_i)^2$.}
\label{f_gaussian_side} }\end{figure}
%
%The following corollary establishes the rate region of the
%Gaussian case.
\begin{corollary}\label{t_gaussian_region_side}
The achievable rate region of the problem illustrated in Fig.
\ref{f_gaussian_side} is
\begin{eqnarray}\label{e_gaussian_helper}
R&\geq& \frac{1}{2}\log
\frac{\sigma_A^2\left(1-\frac{\sigma_A^2}{\sigma_A^2+\sigma_B^2}
(1-2^{-2R_y})\right)}{D}%\nonumber \\
%&=& \frac{1}{2}\log
%\frac{\sigma_A^2\left(\sigma_B^2-\sigma_A^2(1-2^{-2R_1})\right)}{D(\sigma_A^2+\sigma_B^2)}
\end{eqnarray}
\end{corollary}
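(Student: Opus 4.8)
The plan is to invoke Theorem~\ref{t_rate_dis_helper_side}, which already supplies $\mathcal R^O_{Y-X-Z}(D)=\mathcal R_{Y-X-Z}(D)$, and then to evaluate the single-letter region for the Gaussian source $X=Z+A$, $Y=X+B$ with $A\perp B\perp Z$ (here the helper rate $R_y$ plays the role of $R_1$). What remains is therefore to minimize $R=I(X;W|U,Z)$ over all $(U,W)$ with a joint law of the form $p(u|y)p(w|x,u)$ meeting $\mathbb Ed(X,\hat X(U,W,Z))\le D$, subject to $R_y\ge I(U;Y|Z)$. Throughout I condition on $Z$ and write $S:=A+B=Y-Z$ and $\sigma_S^2:=\sigma_A^2+\sigma_B^2$; since $X-Z=A$, the decoder's residual uncertainty about $X$ given $(U,Z)$ is exactly its uncertainty about $A$.

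For achievability I would pick jointly Gaussian test channels $U=Y+N_1$ and $W=X+N_2$, with $N_1,N_2$ independent of everything else. A direct computation gives $I(U;Y|Z)=\frac12\log\frac{\sigma_S^2+\nu_1}{\nu_1}$, so the helper constraint is met with equality at $\nu_1=\sigma_S^2/(2^{2R_y}-1)$. Evaluating the MMSE of $A$ from $(U,Z)$ then collapses, after simplification, to $\mathrm{Var}(A|U,Z)=\sigma_{\mathrm{eff}}^2:=\sigma_A^2\big(1-\tfrac{\sigma_A^2}{\sigma_A^2+\sigma_B^2}(1-2^{-2R_y})\big)$. Conditioned on $(U,Z)$ the source is Gaussian of variance $\sigma_{\mathrm{eff}}^2$, so the second Wyner--Ziv layer with coding variable $W=X+N_2$ reduces to ordinary Gaussian rate--distortion: choosing $\nu_2=D\sigma_{\mathrm{eff}}^2/(\sigma_{\mathrm{eff}}^2-D)$ yields $\mathbb E[(X-\hat X)^2]=D$ and $I(X;W|U,Z)=\frac12\log\frac{\sigma_{\mathrm{eff}}^2}{D}$, which is exactly the claimed bound.

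The converse is the crux, and I would use the entropy power inequality, writing the conditional entropy power $N(\cdot\,|\,\cdot):=\frac{1}{2\pi e}e^{2h(\cdot|\cdot)}$. From $R\ge I(X;W|U,Z)=h(X|U,Z)-h(X|W,U,Z)$, the second term is at most $\frac12\log(2\pi e D)$ because $\hat X(U,W,Z)$ meets the distortion constraint and the Gaussian maximizes differential entropy at a fixed second moment. For the first term I would prove $h(X|U,Z)=h(A|U,Z)\ge\frac12\log(2\pi e\,\sigma_{\mathrm{eff}}^2)$. Decompose $A=cS+A'$ with $c=\sigma_A^2/\sigma_S^2$ and $A'=A-cS\sim N(0,\sigma_A^2\sigma_B^2/\sigma_S^2)$; since $A'\perp S$ and $U$ is a noisy function of $S$ given $Z$, one checks $A'\perp(U,Z)$. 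The helper constraint gives $h(S|U,Z)=h(S|Z)-I(U;S|Z)\ge\frac12\log(2\pi e\sigma_S^2)-R_y$, using $I(U;S|Z)=I(U;Y|Z)\le R_y$, i.e.\ $N(S|U,Z)\ge\sigma_S^2 2^{-2R_y}$. Applying the conditional EPI to $A=cS+A'$ (valid because $cS$ and $A'$ are conditionally independent given $(U,Z)$) yields $N(A|U,Z)\ge c^2N(S|U,Z)+N(A')\ge c^2\sigma_S^2 2^{-2R_y}+\sigma_A^2\sigma_B^2/\sigma_S^2$, which simplifies precisely to $\sigma_{\mathrm{eff}}^2$. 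Combining the two bounds gives $R\ge\frac12\log(\sigma_{\mathrm{eff}}^2/D)$, matching achievability.

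I expect the converse lower bound on $h(A|U,Z)$ to be the main obstacle: the helper controls $S=A+B$, not $A$ itself, so the rate constraint cannot be converted directly into a bound on the uncertainty about $A$. The device that resolves this is the orthogonal split $A=cS+A'$ with $A'$ independent of the helper's observation, which isolates the refinable part $cS$ and lets the conditional EPI turn the entropy-power bound on $S$ into the desired bound on $A$. A secondary technical point is that the single-letter region was stated for finite alphabets, so its evaluation at a continuous Gaussian source must be justified by the usual quantization/limiting argument.
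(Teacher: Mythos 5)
Your proposal is correct, and the achievability half coincides with the paper's: the paper also uses a Gaussian test channel $V=Y+D$ for the helper (your $U=Y+N_1$), notes $I(V;Z)\le I(V;X)$ so that Wyner--Ziv binning against $Z$ at rate $I(V;Y|Z)$ lets both encoder and decoder recover $V$, and then invokes Lemma \ref{l_wyner_ziv_side_gaussian} to finish with rate $\frac12\log(\sigma^2_{X|V,Z}/D)$ --- exactly your two-layer computation. The converse, however, is where you genuinely diverge. The paper's converse is a two-line reduction: a genie gives $Z^n$ to both encoders, who subtract it, turning the problem into the Gaussian helper problem with source $A$ and helper $A+B$ (and useless decoder side information), whose region is then quoted from \cite{Vasudevan07_helper}. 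You instead prove the bound from scratch by evaluating the single-letter region of Theorem \ref{t_rate_dis_helper_side}: the orthogonal split $A=cS+A'$ with $A'\perp(U,S,Z)$, the bound $N(S|U,Z)\ge\sigma_S^2 2^{-2R_y}$ from the helper-rate constraint, and the conditional entropy power inequality to get $N(A|U,Z)\ge c^2\sigma_S^2 2^{-2R_y}+\sigma_A^2\sigma_B^2/\sigma_S^2=\sigma_{\mathrm{eff}}^2$; I checked the algebra and it lands exactly on (\ref{e_gaussian_helper}). What each approach buys: the paper's reduction is shorter and stays at the operational level, so it never has to worry that Theorem \ref{t_rate_dis_helper_side} was proved for finite alphabets, but it leans entirely on an external result (which is itself proved by essentially your EPI argument); your proof is self-contained and makes the structure of the bound transparent, at the cost of the quantization/limiting caveat you correctly flag at the end. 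Both are valid.
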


It is interesting to note that the rate region does not depend on
$\sigma_Z^2$. Furthermore, we show in the proof that for the
Gaussian case the rate region is the same as when $Z$ is known to
the source $X$ and the helper $Y$.

{\it Proof of Corollary \ref{t_gaussian_region_side}:}

{\bf Converse}: Assume that both encoders observe $Z^n$. Without
loss of generality, the encoders can subtract $Z$ from $X$ and $Y$;
hence the problem is equivalent to new rate distortion problem with
a helper, where the source is $A$ and the helper is $A+B$. Now using
the result for the Gaussian case from~\cite{Vasudevan07_helper},
adapted to our notation, we obtain (\ref{e_gaussian_helper}).
%\begin{equation}\label{e_gaussian_region_sigma_side}
%R\geq \frac{1}{2} \log
%\frac{\sigma_A^2\left(1-\frac{\sigma_A^2}{\sigma_A^2+\sigma_B^2}
%(1-2^{-2R'})\right)}{D}.
%\end{equation}
{\bf Achievability:} Before proving the direct-part of Corollary
\ref{t_gaussian_region_side}, we establish the following lemma which
is proved in Appendix \ref{app_proof_wyner_siz_side}.
%\ref{app_lema_wyner_ziv_side_gaussian}.
\begin{lemma}\label{l_wyner_ziv_side_gaussian}
{\it Gaussian Wyner-Ziv rate-distortion problem with additional
side information known to the encoder and decoder.} Let $(X,W,Z)$
be jointly Gaussian. Consider the Wyner-Ziv rate distortion
problem where the source $X$ is to be compressed with quadratic
distortion measure, $W$ is available at the encoder and decoder,
and $Z$ is available only at the decoder. The rate-distortion
region for this problem is given by
\begin{equation}
R(D)=\frac{1}{2}\log \frac{\sigma^2_{X|W,Z}}{D},
\end{equation}
where $\sigma^2_{X|W,Z}=\mathbb E[(X-\mathbb E[X|W,Z])^2]$, i.e.,
the minimum square error of estimating $X$ from $(W,Z)$.
\end{lemma}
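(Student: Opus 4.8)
The plan is to prove the lemma by a matching converse and direct part, in both cases exploiting the fact that $W$, being available at both terminals, can be folded into the conditioning, thereby reducing the problem to the classical Gaussian Wyner-Ziv result of~\cite{Wyner78_gaussian_WZ}.

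For the converse, I would first argue that revealing the decoder side information $Z$ to the encoder as well cannot increase the rate needed to meet distortion $D$. Once $(W,Z)$ is available at both encoder and decoder, the setting collapses to a conditional rate-distortion problem for the source $X$ given $(W,Z)$. Because $(X,W,Z)$ are jointly Gaussian, the conditional law of $X$ given $(W,Z)$ is Gaussian with variance $\sigma^2_{X|W,Z}=\mathbb{E}[(X-\mathbb{E}[X|W,Z])^2]$ that does not depend on the realization of $(W,Z)$; hence the conditional rate-distortion function is exactly $\frac12\log\frac{\sigma^2_{X|W,Z}}{D}$. This yields the lower bound $R(D)\geq\frac12\log\frac{\sigma^2_{X|W,Z}}{D}$.

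For achievability I would invoke the Wyner-Ziv coding theorem with common side information $W$ at both terminals, for which the rate can be written as $\min I(U;X\mid W,Z)$ over auxiliaries satisfying the Markov chain $U-(X,W)-Z$ and admitting a reconstruction $\hat X(U,W,Z)$ with $\mathbb{E}[(X-\hat X)^2]\le D$. I would exhibit the Gaussian test channel $U=X+\Phi$, with $\Phi\sim\mathrm{N}(0,\sigma_\Phi^2)$ independent of $(X,W,Z)$, and $\hat X=\mathbb{E}[X\mid U,W,Z]$; the Markov condition then holds automatically. Choosing $\sigma_\Phi^2=\sigma^2_{X|W,Z}\,D/(\sigma^2_{X|W,Z}-D)$ makes the conditional MMSE equal to $D$, and a short computation gives $I(U;X\mid W,Z)=\frac12\log\frac{\sigma^2_{X|W,Z}}{D}$, matching the converse. (For $D\ge\sigma^2_{X|W,Z}$ the rate is zero and the bound holds trivially.)

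The step requiring the most care, and the real content of the lemma, is establishing that placing $W$ at the encoder and restricting $Z$ to the decoder incurs no rate loss relative to the ideal conditional bound, i.e. that the achievable Wyner-Ziv rate actually equals $\frac12\log\frac{\sigma^2_{X|W,Z}}{D}$ rather than something strictly larger. This is exactly the ``no rate loss'' phenomenon of Gaussian Wyner-Ziv coding: conditioning on the common $W$ reduces the situation to a scalar Gaussian Wyner-Ziv problem with decoder side information $Z$ and conditional source variance $\sigma^2_{X|W,Z}$, for which Wyner's theorem guarantees that the decoder-only side information costs nothing. The explicit test-channel evaluation above certifies this by attaining the conditional-rate-distortion lower bound with equality, which closes the gap between the two bounds.
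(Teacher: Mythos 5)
Your proposal is correct and follows essentially the same route as the paper: absorb the common side information $W$, use the no-rate-loss property of Gaussian Wyner--Ziv coding for achievability, and obtain the converse by the genie argument that reveals $Z$ to the encoder and reduces to conditional rate-distortion. The only cosmetic difference is that the paper subtracts $\alpha W$ from $X$ (writing $X=\alpha W+\beta Z+N$) and cites Wyner's Gaussian result as a black box, whereas you condition on $W$ and verify the same rate by an explicit test channel $U=X+\Phi$.
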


Let $V=A+B+Z+D$, where  $D\sim \text{N}(0,\sigma_D^2)$ and is
independent of $(A,B,Z)$. Clearly, we have $V-Y-X-Z$. Now, let us
generate $V$ at the source-encoder and at the decoder using the
achievability scheme of Wyner \cite{Wyner78_gaussian_WZ}. Since
$I(V;Z)\leq I(V;X)$ a rate $R'=I(V;Y)-I(V;Z)$ would suffice, and
it may be expressed as follows:
\begin{eqnarray}
R'&\stackrel{}{=}&I(V;Y|Z)\nonumber \\
&\stackrel{}{=}&h(V|Z)-h(V|Y)\nonumber \\
&\stackrel{}{=}&\frac{1}{2}\log
\frac{\sigma_A^2+\sigma_B^2+\sigma_D^2}{\sigma_D^2},
\end{eqnarray}
and this implies that
\begin{equation}\label{e_sigma_D}
\sigma_D^2=\frac{\sigma_A^2+\sigma_B^2}{2^{2R'}-1}.
\end{equation}
Now, we invoke  Lemma \ref{l_wyner_ziv_side_gaussian}, where $V$
is the side information known both to the encoder and decoder;
hence a rate that satisfies the following inequality achieves a
distortion $D$;
\begin{eqnarray}
R&\stackrel{}{\geq}&\frac{1}{2}\log
\frac{\sigma^2_{X|V,Z}}{D}\nonumber \\
&\stackrel{}{=}&\frac{1}{2}\log
\frac{\sigma^2_{A}}{D}\left(1-\frac{\sigma_A^2}{\sigma_A^2+\sigma_B^2+\sigma_D^2}\right)
\end{eqnarray}
Finally, by replacing $\sigma_D^2$ with the identity in
(\ref{e_sigma_D}) we obtain (\ref{e_gaussian_helper}).
%\ref{e_sigma_D} &\stackrel{(a)}{=}&\frac{1}{2}\log
%\frac{\sigma^2_{A}}{D}\left(1-\frac{\sigma_A^2}{\sigma_A^2+\sigma_B^2}\left
%(1-2^{-2R'}\right )\right).
%\end{eqnarray}
%Step (a) is due to \ref{e_sigma_D}.
\hfill\QED

\section{Further results on Wyner-Ziv with a helper where $Y-X-Z$\label{s_wyner_ziv_helper_further}}
In this section we investigate two properties of the rate-region of
the Wyner-Ziv setting ( Fig. \ref{f_helper_side_again}) with a
Markov form $Y-X-Z$. First, we investigate the tradeoff between the
rate sent by the helper and the rate sent by the source and roughly
speaking we conclude that a bit from the source is  more ``valuable"
than a bit from the helper. Second, we examine the case where the
helper has the freedom to send different messages, at different
rates, to the encoder and the decoder. We show that ``more help'' to
the encoder than to the decoder does not yield any performance gain
and that in such cases the freedom to send different messages to the
encoder and the decoder yields no gain over the case of a common
message. Further, in this setting of different messages, the rate to
the encoder can be strictly less than that to the decoder with no
performance loss.

\begin{figure}[h!]{
\psfrag{b1}[][][1]{$X$} \psfrag{box1}[][][1]{Encoder}
\psfrag{box3}[][][1]{Helper\ }
 %\psfrag{a2}[][][1]{$T_1(X^n,T_2)$}
 %\psfrag{t1}[][][1]{$\in 2^{nR_1}$}
%\psfrag{b3}[][][1]{$\;\hat X^n(T_1,T_2)$}
%\psfrag{t2}[][][1]{$T_2(Y^n)\in 2^{nR_2}$}
 \psfrag{a2}[][][1]{$R$}
 \psfrag{t1}[][][1]{$$}
\psfrag{A1}[][][1]{$$} \psfrag{A2}[][][1]{$$}
 \psfrag{box2}[][][1]{Decoder}
\psfrag{b3}[][][1]{$\hat X$} \psfrag{a3}[][][1]{}
\psfrag{Y}[][][1]{$Y$} \psfrag{Z}[][][1]{$Z$}
\psfrag{t2}[][][1]{$R_1$}
%\centerline{\includegraphics[width=15cm]
%{C:/cygwin/usr/X11R6/bin/home/rate_distortion_helper_side.eps}}
\centerline{\includegraphics[width=13cm]{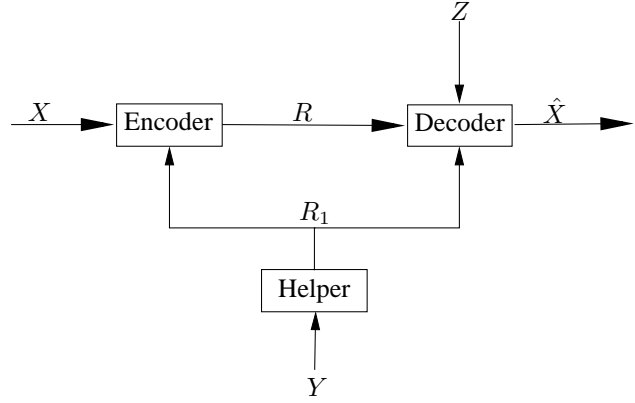}}
\caption{Wyner-Ziv problem with a helper where the Markov chain
$Y-X-Z$ holds.} \label{f_helper_side_again} }\end{figure}

\subsection{A bit from the source-encoder vs. a bit from the helper
\label{s_bit_from_source_encoder_vs_helper}} Assume that we have a
sequence of $(n,2^{nR},2^{nR_1})$ codes that achieves a distortion
$D$, such that the triple $(R,R_1,D)$ is on the border of the region
$\mathcal R_{Y-X-Z}(D)$ (recall the definition of $\mathcal
R_{Y-X-Z}(D)$ in
(\ref{e_R_R_1_ineq_helper_side_2})-(\ref{e_p_xyuv_decompose})). Now,
suppose that the helper is allowed to increase the rate by an amount
$\Delta'>0$ to $R_1+\Delta'$; to what rate $R-\Delta$ can the
source-encoder reduce its rate and achieve the same distortion $D$?

Despite the fact that the additional rate $\Delta'$ is transmitted
both to the decoder and encoder, we show that always
$\Delta\leq\Delta'$. Let us denote by $R(R_1)$  the boundary of the
region $\mathcal R_{Y-X-Z}( D)$  for a fixed $D$. We formally show
that $\Delta\leq\Delta'$ by proving that the slope of the curve
$R(R_1)$ is always less than 1.  The proof uses similar technique as
in \cite{Steinberg08RateLimitedAtDecoder}.
\begin{lemma}
For any $X-Y-Z$, $D$, and  $R_1$, the subgradients of the curve
$R(R_1)$ are less than 1.
\end{lemma}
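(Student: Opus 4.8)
The plan is to work directly with the single-letter description of the boundary curve,
\[
R(R_1)=\min\bigl\{\,I(X;W|U,Z)\;:\;I(U;Y|Z)\le R_1,\ \mathbb{E}d_x(X,\hat X(U,W,Z))\le D\,\bigr\},
\]
the minimization being over $p(u|y)p(w|x,u)$ and reconstructions $\hat X$. First I would record that $R(\cdot)$ is convex and non-increasing: convexity is inherited from the convexity of $\mathcal R_{Y-X-Z}(D)$ (Lemma~\ref{lemma:properties_R_SI}), since the region is exactly the epigraph $\{(R_1,R):R\ge R(R_1)\}$; monotonicity holds because extra help can always be ignored. The statement ``subgradients less than $1$'' I read as $|{\rm slope}|\le 1$, i.e.\ every subgradient $s$ satisfies $s\ge-1$, which is precisely the assertion $\Delta\le\Delta'$. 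For a convex function the subgradients are non-decreasing, so the steepest (most negative) one is the right-derivative $R'_+(0)$ at the left end of the domain $R_1\ge 0$. Hence it suffices to prove the single secant bound $R(0)\le R(R_1)+R_1$ for every $R_1\ge0$, and then invoke convexity.

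The heart of the argument is therefore the secant inequality, established by a ``fold the helper into the source'' construction. Fix $R_1$ and let $(U,W,\hat X)$ attain $R(R_1)$, so that $I(U;Y|Z)\le R_1$, $I(X;W|U,Z)=R(R_1)$, and the distortion constraint holds. Now consider the degenerate point in which the helper is silent ($U'\equiv\mathrm{const}$, so $I(U';Y|Z)=0$) and the source itself describes the pair $W'=(U,W)$, generated from $X$ through the induced test channel $p(u,w|x)$. The key observation is that the joint law of $(X,U,W,Z)$ is unchanged by this regeneration, because $p(x,u,w,z)=p(x)p(u,w|x)p(z|x)$ already holds for the original distribution once $Y$ is marginalized; equivalently the Markov relation $(U,W)-X-Z$ is valid, which is exactly the kind of relation the graphical technique of Section~\ref{s_preliminary} verifies. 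Consequently $(I(X;U,W|Z),0)\in\mathcal R_{Y-X-Z}(D)$, giving
\[
R(0)\le I(X;U,W|Z)=I(X;U|Z)+I(X;W|U,Z)=I(X;U|Z)+R(R_1).
\]
It remains to bound $I(X;U|Z)$ by $R_1$. Since $U$ depends on $Y$ only, the Markov chain $U-Y-(X,Z)$ holds, whence $I(U;Y|Z)=I(U;X,Y|Z)=I(U;X|Z)+I(U;Y|X,Z)\ge I(X;U|Z)$, and therefore $I(X;U|Z)\le I(U;Y|Z)\le R_1$. Combining, $R(0)\le R(R_1)+R_1$.

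Finally I would promote this endpoint bound to the full subgradient statement. By convexity the difference quotient $\bigl(R(R_1)-R(0)\bigr)/R_1$ is non-decreasing in $R_1$, so its infimum over $R_1>0$ equals $R'_+(0)$; the secant bound gives each quotient $\ge-1$, hence $R'_+(0)\ge-1$, and monotonicity of subgradients of a convex function then yields $s\ge-1$ for every subgradient, while non-increasingness gives $s\le0$. Thus $|s|\le1$ everywhere, i.e.\ $\Delta\le\Delta'$. I expect the main obstacle to be the single-letter reduction step: one must argue carefully that collapsing the helper and letting the source encode $(U,W)$ keeps the full joint distribution (and hence the distortion and the conditional mutual informations) intact, which is where the $(U,W)-X-Z$ and $U-Y-(X,Z)$ Markov verifications are essential; by contrast, the passage from the endpoint secant to all subgradients is purely the convexity bookkeeping just described. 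A secondary point to check is that the construction stays inside the admissible class $p(u|y)p(w|x,u)$, which it does since a constant $U'$ and an $X$-driven $W'$ are of the required form.
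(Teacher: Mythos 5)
Your proof is correct, and its information-theoretic core is the same inequality the paper uses, but you package it along a genuinely different route. The paper works with the Lagrangian $J^*(\lambda)=\min_{\mathcal P}\,I(X;W|U,Z)+\lambda I(Y;U|Z)$, interprets $J^*(\lambda)$ as the $R$-axis intercept of a support line of slope $-\lambda$, sandwiches $J^*(\lambda)\leq R(0)\leq J^*(1)$, and concludes $\lambda\leq 1$ from monotonicity of $J^*$; the key identity there is $I(X;W|U,Z)+I(Y;U|Z)=I(X,Y;W,U|Z)\geq I(X;W,U|Z)\geq R(0)$, which follows from the Markov chains $U-Y-(X,Z)$ and $W-(U,X)-(Y,Z)$. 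You instead prove the secant bound $R(0)\leq R(R_1)+R_1$ directly, via the chain rule $I(X;U,W|Z)=I(X;U|Z)+I(X;W|U,Z)$ together with $I(X;U|Z)\leq I(U;X,Y|Z)=I(U;Y|Z)$ — the same inequality split differently — and then finish with convexity bookkeeping. Your route has two advantages: it avoids the support-line machinery (and the slightly delicate step of passing from $J^*(\lambda)\leq J^*(1)$ and monotonicity of $J^*$ to $\lambda\leq 1$), and it makes explicit the reduction that the paper leaves implicit, namely that setting $U'=\mathrm{const}$ and letting the source encode $W'=(U,W)$ stays inside the admissible class and preserves the joint law of $(X,U,W,Z)$ because $(U,W)-X-Z$ holds (verifiable by the paper's graphical technique). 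In effect you give a rigorous version of what the paper sketches as its ``alternative and equivalent proof'' via the slope at $R_1=0$, which there is justified only by the heuristic that the steepest slope occurs when $Y=X$. One shared caveat, present in both arguments: at the left endpoint $R_1=0$ the subdifferential of a convex function on $[0,\infty)$ formally contains arbitrarily negative slopes, so the statement should be read as a bound on $R'_+(0)$ and on subgradients at interior points, which is exactly what the operational claim $\Delta\leq\Delta'$ requires.
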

\begin{proof}
Since $\mathcal R_{Y-X-Z}(D)$ is a convex set,  $R(R_1)$  is a
convex function. Furthermore, $R(R_1)$ is non increasing in $R_1$.
Now, let us define $J^*(\lambda)$ as
\begin{equation}
J^*(\lambda)=\min_{p(x,y,z,u,w)\in \mathcal P} I(X;W|U,Z)+\lambda
I(Y;U|Z),
\end{equation}
where $\mathcal P$ is the set of distributions satisfying $ p(
x,y,z,u,w,\hat x)=p(x,y)p(z|y)p(u|y)p(w|u,x)p(\hat x|u,w,z), \quad
\mathbb{E}d(X,\hat X)\leq D.$ The line $J^*(\lambda)=R+\lambda R$ is
a support line of $R(R_1)$, and therefore, $\lambda$ is a
subgradient. The value $J^*(\lambda)$ is the  intersection  between
the support line with slope $-\lambda$ and the axis $R$, as shown in
Fig. \ref{f_support}. Because of the convexity and the monotonicity
of $R(R_1)$, $J^*(\lambda)$ is upper-bounded by  $R(0)$, i.e.,
\begin{equation}\label{e_J_upper_bound}
J^*(\lambda)\leq \min_{p(\hat x, x,y,z,u,w)\in \mathcal P}
R(0)=\min_{p(\hat x, x,y,z,w)\in \mathcal P_{WZ}} I(X;W|Z),
\end{equation}
where $\mathcal P_{WZ}$ is the set of distributions that satisfies $
p(\hat x, x,z,w)=p(x)p(z|x)p(w|x)p(\hat x|w,z), \quad
\mathbb{E}d(X,\hat X)\leq D.$
\begin{figure}[h!]{
 \psfrag{J}[][][0.8]{$J^*(\lambda)\;\;\;\;\;$}
 \psfrag{R1}[][][1]{$R$}
 \psfrag{R2}[][][1]{$R_1$}
 \psfrag{s}[][][0.7][-29]{support line with slope $-\lambda$}
 \psfrag{r2}[][][0.8]{$\min_{p(\hat x|x)} I(X;\hat X)
\to\;\;\;\;\;\;\;\;\;\;\;\;\;\;\;\;\;\;\;\;\;\;\;\;\;\;\;\;\;\;$}
 \psfrag{r1}[][][0.8]{$\min_{p(\hat x|x,y)}I(X;\hat X|Y)
\to\;\;\;\;\;\;\;\;\;\;\;\;\;\;\;\;\;\;\;\;\;\;\;\;\;\;\;\;\;\;\;\;\;\;\;\;$}

\centerline{\includegraphics[width=6cm]{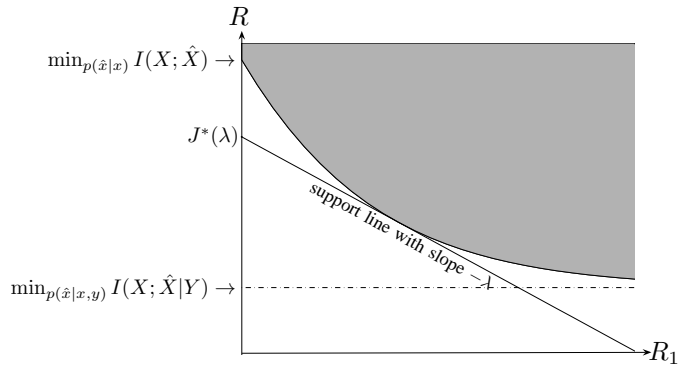}}
%\centerline{\includegraphics[width=6cm]{matlab/f_support2.eps}}
\caption{A support line of $R(R_1)$ with a slope $-\lambda$.
$J*(\lambda)$ is the intersection of the support line with the $R$
axis.} \label{f_support} }\end{figure} In addition, we observe that
%\begin{eqnarray}
% J^*(0)&=&\min_{p(x,y,u,\hat x)\in \mathcal P} I(X;\hat X|U)\nonumber \\
%&=& \min_{p(\hat x|x,y)}I(X;\hat X|Y),
%\end{eqnarray}
%since we can choose $U=Y$. Furthermore
\begin{eqnarray}\label{e_J1}
 J^*(1)&=&\min_{p( x,y,z, u,w,\hat x)\in \mathcal P} I(X;W|U,Z)+I(Y;U|Z)\nonumber \\
 &\stackrel{\mbox{(a)}}{=}&\min_{p( x,y,z, u,w,\hat x)\in \mathcal P}
I(X,Y;W,U|Z)\nonumber \\
&\stackrel{}{\geq}& \min_{p( x,y,z, u,w,\hat x)\in \mathcal P}
I(X;W|Z),\nonumber \\
&\stackrel{}{=}& \min_{p(\hat x, x,y,z,w)\in \mathcal P_{WZ}}
I(X;W|Z),
\end{eqnarray}
where step (a) is due to the Markov chains $U-Y-(Z,X)$ and
$W-(U,X)-(Y,Z)$. Combining (\ref{e_J_upper_bound}) and (\ref{e_J1}),
we conclude that for any subgradient  $-\lambda$, $J^*(\lambda)\leq
J^*(1)$. Since $J^*(\lambda)$ is increasing in $\lambda$, we
conclude that $\lambda\leq 1$. %Finally, since $R(R_1)$ is
%non-increasing in $R_1$ we also conclude that $\lambda\geq 0$
\end{proof}
An alternative and equivalent proof would be to claim that, since
$R(R_1)$ is a convex and  non increasing function, $
\frac{\Delta}{\Delta'}\stackrel{}{\leq}
\left|\frac{dR}{dR_1}\right|_{R_1=0}$,  and then to claim that the
largest slope at $R_1=0$
 is when $Y=X$, which is 1. For the Gaussian case,  the
derivative may be calculated explicitly from
(\ref{e_gaussian_helper}), in particular for  $R_1=0$, and we obtain
\begin{eqnarray}\label{e_derivative_Gau}
\Delta\stackrel{}{\leq}\frac{\sigma_A^2}{\sigma_A^2+\sigma_B^2}\Delta'.
\end{eqnarray}
%and equality holds if $R_1=0$ and $\Delta'\to 0$.

\subsection{The case of independent rates} \label{subsec:WZHI} In
this subsection we treat the rate distortion scenario where side
information from the helper is encoded using two different messages,
possibly at different rates, one to the encoder and one to the
decoder, as shown in Fig.~\ref{fig:Helper_independent_rates}. The
complete characterization of achievable rates for this scenario is
still an open problem. However, the solution that is given in
previous sections, where there is one message known both to the
encoder and decoder, provides us insight that allows us to solve
several cases of the problem shown here.
\begin{figure}[t]{
\psfrag{X}[bl][bl][0.8]{$X^n$} \psfrag{E}[][][0.8]{Encoder}
\psfrag{H}[][][0.8]{Helper\ }
 \psfrag{A}[][][0.8]{$T$, rate $R$}
%  \psfrag{D}[][][0.8]{Decoder 1}
\psfrag{J}[][][0.8]{$T_d$, rate $R_d$}
%\psfrag{B}[][][0.8]{$T_2$, rate $R_2$}
\psfrag{C}[][][0.8]{Decoder} \psfrag{F}[bl][bl][0.8]{$\hat{X}^n$}
\psfrag{H}[][][0.8]{Helper} \psfrag{I}[bl][bl][0.8]{$Y^n$}
\psfrag{Z}[bl][bl][0.8]{$Z^n$} \psfrag{G}[][][0.8]{$T_e$, rate
$R_e$}\centerline{\includegraphics[viewport=80 578 508 796, clip,
     width=0.5\textwidth]{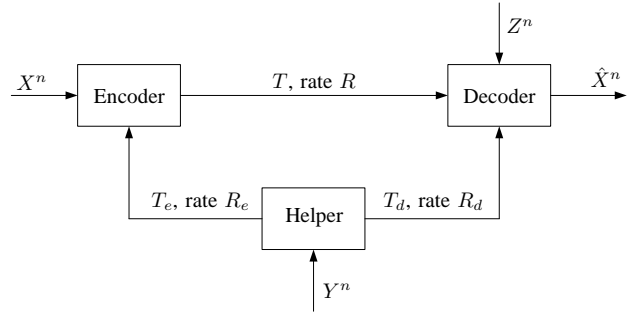}}
\caption{The rate distortion problem with decoder side
information, and independent helper rates. We assume the Markov
relation $Y - X - Z$} \label{fig:Helper_independent_rates} }
\end{figure}
We start with the definition of the general case. %It follows quite
%closely Definition~\ref{def_code_helper_side}, except that there
%are three rates involved.
\begin{definition}
\label{def_code_general} An $(n,M,M_e,M_d,D)$ code for source $X$
with side information $Y$ and different helper messages to the
encoder and decoder, consists of three encoders
\begin{eqnarray}
f_e&:& \mathcal Y^n \to \{1,2,...,M_e\} \nonumber \\
f_d&:& \mathcal Y^n \to \{1,2,...,M_d\} \nonumber \\
f &:& \mathcal X^n \times \{1,2,...,M_e\} \to \{1,2,...,M\} \nonumber \\
\end{eqnarray}
%\end{equation}
and a decoder
\begin{eqnarray}
g &:&  \{1,2,...,M\} \times \{1,2,...,M_d\} \to \hat{\cal X}^n \nonumber \\
\end{eqnarray}
such that
\begin{equation}
\mathbb{E}d(X^n,\hat X^n)\leq D.
\end{equation}
\end{definition}
To avoid cumbersome statements, we will not repeat in the sequel
the words ``... different helper messages to the encoder and
decoder," as this is the topic of this section, and should be
clear from the context. The rate pair $(R,R_e,R_d)$ of the
$(n,M,M_e,M_d,D)$ code is
\begin{eqnarray}
R&=&\frac{1}{n}\log M\nonumber \\
R_e&=&\frac{1}{n}\log M_e\nonumber\\
R_d&=&\frac{1}{n}\log M_d
\end{eqnarray}
\begin{definition}
\label{def_achievable rates_gen} Given a distortion $D$, a rate
triple $(R,R_e,R_d)$ is said to be {\it achievable} if for any
$\delta>0$, and sufficiently large $n$, there exists an
$(n,2^{n(R+\delta)},2^{n(R_e+\delta)},2^{n(R_d+\delta)},D+\delta)$
code for the source $X$ with side information $Y$.
\end{definition}
\begin{definition}
\label{def_the achievable_region_gen} {\it The (operational)
achievable region} $\mathcal R_g^O(D)$ of rate distortion with a
helper known at the encoder and decoder is the closure of the set of
all achievable rate triples at distortion $D$.
\end{definition}
Denote by ${\cal R}_g^O(R_e,R_d,D)$ the \emph{section} of ${\cal
R}_g^O(D)$ at helper rates $(R_e,R_d)$. That is, \bea {\cal
R}_g^O(R_e,R_d,D) &=& \left\{R:\ \ (R,R_e,R_d)\
\mbox{ are achievable with distortion $D$} \right\} % \nonumber\\
%& & \mbox{}\hspace*{0.2cm} \left.
%   \mbox{}  \right\}, \label{eq:R_g_section}
\eea and similarly, denote by $\mathcal R(R_1,D)$ the section of the
region ${\mathcal R_{Y-X-Z}}(D)$, defined
in~(\ref{e_R_R_1_ineq_helper_side_2})-(\ref{eq:dist2}) at helper
rate $R_1$. Recall that, according to
Theorem~\ref{t_rate_dis_helper_side}, ${\mathcal R}(R_1,D)$ consists
of all achievable source coding rates when the helper sends common
messages to the source encoder and destination at rate $R_1$. The
main result of this section is the following.
\begin{theorem}
\label{theo:R_g} For any $R_e\geq R_d$,
\begin{equation}
{\cal R}_g^O(R_e,R_d,D) = {\mathcal R}(R_d,D)
\end{equation}
\end{theorem}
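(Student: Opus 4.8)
The plan is to prove the two inclusions separately, and the first thing I would flag is that the hypothesis $R_e\geq R_d$ is needed \emph{only} for the achievability direction; the converse inclusion ${\cal R}_g^O(R_e,R_d,D)\subseteq{\mathcal R}(R_d,D)$ holds for every $R_e$, which is exactly the statement that help to the encoder beyond $R_d$ is useless.

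For achievability, ${\mathcal R}(R_d,D)\subseteq{\cal R}_g^O(R_e,R_d,D)$, I would just embed the common-message code of Theorem~\ref{t_rate_dis_helper_side}. Given $R\in{\mathcal R}(R_d,D)$, that theorem supplies a scheme in which the helper broadcasts one message $T_1=f_1(Y^n)$ of rate $R_d$ to both the source encoder and the destination, and the destination reconstructs with distortion $D$ at source rate $R$. Setting $f_e=f_d=f_1$ gives $T_e=T_d=T_1$, and since $R_e\geq R_d$ the encoder link can carry $T_e$; the induced $(n,M,M_e,M_d,D)$ code then realizes the same $(R,D)$, so $R\in{\cal R}_g^O(R_e,R_d,D)$. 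This is the sole use of $R_e\geq R_d$.

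The substance is the converse, which I would run exactly along the lines of the converse of Theorem~\ref{t_rate_dis_helper_side}, with the decoder's helper message $T_d=f_d(Y^n)$ taking the role of the common message $T_1$, and the encoder's helper message $T_e=f_e(Y^n)$ absorbed into the source message $T=f(X^n,T_e)$. For a code achieving $(R,R_e,R_d)$ at distortion $D$, I set $W_i\triangleq T$ and $U_i\triangleq(X^{i-1},Z^{n\setminus i},T_d)$. Starting from $nR_d\geq H(T_d)\geq I(Y^n;T_d|Z^n)$ and single-letterizing precisely as in~(\ref{e_conv_HT'_side}) yields $R_d\geq\frac1n\sum_i I(Y_i;U_i|Z_i)$; note that $T_e$ never enters this chain, since $T_d$ is a function of $Y^n$ alone. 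Starting from $nR\geq H(T)\geq I(X^n;T|T_d,Z^n)$ and single-letterizing as in~(\ref{e_conv_HT_side}) yields $R\geq\frac1n\sum_i I(X_i;W_i|U_i,Z_i)$. Finally $\hat X_i$, being a function of $(T,T_d,Z^n)$, is a function of $(U_i,W_i,Z_i)$, so the distortion bound is inherited.

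The only step needing genuine care is the Markov structure, and it is also the conceptual heart of the claim. The relation $U_i-Y_i-(X_i,Z_i)$ follows from the graphical test of Section~\ref{s_preliminary}: in the undirected graph of the source law augmented by $T_d$ (a function of $Y^n$), every path from $\{X_i,Z_i\}$ to $U_i$ passes through $Y_i$. For the second relation the point to appreciate is that, because $T=f(X^n,T_e)$ depends on $Y^n$ through $T_e$, one should \emph{not} expect $W_i-(X_i,U_i)-Y_i$; what does hold is $W_i-(X_i,U_i,Y_i)-Z_i$, because $Z_i$ is graph-adjacent only to $X_i$ and is blocked by conditioning on $X_i$. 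These are exactly the conditions defining the enlarged region $\overline{\mathcal R}_{X-Y-Z}(D)$ of~(\ref{e_p_xyuv_decompose2}), where $W$ may depend on $Y$. Introducing the time-sharing variable $Q$ and setting $U=(Q,U_Q)$, $W=(Q,W_Q)$, I would conclude $(R,R_d)\in\overline{\mathcal R}_{X-Y-Z}(D)$ and then invoke Lemma~\ref{l_overline_wyner_ziv} to get $(R,R_d)\in\mathcal R_{Y-X-Z}(D)$, i.e.\ $R\in{\mathcal R}(R_d,D)$. The argument thus makes precise that the encoder's private message $T_e$ is reflected only in $W$, whose surplus dependence on $Y$ is then discarded at no cost by Lemma~\ref{l_overline_wyner_ziv}, so that the achievable source rate is governed entirely by $R_d$.
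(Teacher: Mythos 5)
Your proof is correct, but it takes a genuinely different route from the paper's. The paper argues operationally: it reduces at once to the extreme case $R_e=H(Y)$ (encoder with full access to $Y^n$), invokes the block-level Markov chain $Z^n-(X^n,T_d)-Y^n$ to write $Y^n=\phi(X^n,T_d,W)$ for a uniform $W$ independent of $(X^n,T_d,Z^n)$, and thereby expresses the encoder $f(X^n,Y^n)$ as a randomized mixture of common-message encoders $f^w(X^n,T_d)$; averaging the distortion over $w$ exhibits the general scheme as a time-sharing of codes of the form already characterized by Theorem~\ref{t_rate_dis_helper_side}, so no new entropy manipulations are needed. You instead rerun the single-letter converse of Theorem~\ref{t_rate_dis_helper_side} on the general code, with $T_d$ in the role of $T_1$ and $T=f(X^n,T_e)$ absorbed into $W_i$, and you correctly observe that the two Markov relations needed, $U_i-Y_i-(X_i,Z_i)$ and $W_i-(U_i,X_i,Y_i)-Z_i$, survive the presence of $T_e$: the first because it involves only $(X^n,Y^n,Z^n,T_d)$ so $T_e$ and $T$ can be marginalized out, the second because $Z_i$'s only neighbour in the undirected graph is $X_i$, which sits in the conditioning set. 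This lands you in $\overline{\mathcal R}_{X-Y-Z}(D)$, collapsed to ${\mathcal R}_{Y-X-Z}(D)$ by Lemma~\ref{l_overline_wyner_ziv}. What the paper's route buys is brevity and reusability: once a common-message converse is available, the simulation argument transfers it wholesale, with the ``extra help to the encoder is simulatable'' intuition made explicit. What your route buys is a self-contained single-letter argument that shows exactly where $T_e$ enters (only through $W$, whose surplus dependence on $Y$ is discarded at no cost), and it delivers directly the fact --- which the paper records separately as Part 2 of Theorem~\ref{theo:R_g2} --- that the converse inclusion holds for every $R_e$, the hypothesis $R_e\geq R_d$ being used solely to carry the common message to the encoder in the (trivial) achievability step.
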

Theorem~\ref{theo:R_g} has interesting implications on the coding
strategy taken by the helper. It says that no gain in performance
can be achieved if the source encoder gets ``more help" than the
decoder at the destination (i.e., if $R_e>R_d$), and thus we may
restrict $R_e$ to be no higher than $R_d$. Moreover, in those
cases where $R_e=R_d$, optimal performance is achieved when the
helper sends to the encoder and decoder exactly the same message.
The proof of this statement uses operational arguments.

\emph{Proof of Theorem~\ref{theo:R_g}:} Clearly, the claim is
proved once we show the statement for $R_e=H(Y)$. In this
situation, we can equally well assume that the encoder has full
access to $Y$. Thus, fix a general scheme like in
Definition~\ref{def_code_general} with $R_e=H(Y)$. The encoder is
a function of the form $f(X^n,Y^n)$. Define $T_2=f_d(Y^n)$. The
Markov chain $Z - X - Y$ implies that $Z^n - (X^n,T_2) - Y^n$ also
forms a Markov chain. This implies, in turn that there exists a
function $\phi$ and a random variable $W$, uniformly distributed
in $[0,1]$ and independent of $(X^n,T_2,Z^n)$, such that \be Y^n =
\phi(X^n,T_2,W). \label{eq:def_phi} \ee Thus the source encoder
operation can be written as \bea
f(X^n,Y^n) &=& f(X^n,\phi(X^n,T_2,W))\nonumber\\
&\stackrel{\triangle}{=}&
\tilde{f}(X^n,T_2,W)\label{eq:def_tilde_f} \eea implying, in turn,
that the distortion of this scheme can be expressed as \bea
\lefteqn{ \mathbb{E} d(X^n,\hat{X}^n) = \mathbb{E}\left[
d(X^n,\hat{X}^n(\tilde{f}(X^n,T_2,W),T_2,Z^n))\right]}\nonumber\\
&\stackrel{\mbox{(a)}}{=}&\int_0^1 \mathbb{E}\left[
d(X^n,\hat{X}^n(\tilde{f}(X^n,T_2,w),T_2,Z^n))\right] dw \nonumber\\
&\stackrel{\mbox{(b)}}{=}& \int_0^1 \mathbb{E}\left[
d(X^n,\hat{X}^n(f^w(X^n,T_2),T_2,Z^n))\right] dw
\label{eq:distortion_tilde_f} \eea where (a) holds since $W$ is
independent of $(X^n,T_2,Z^n)$, and (b) by defining \be f^w(X^n,T_2)
= \tilde{f}(X^n,T_2,w). \label{eq:def_fw} \ee Note that for a given
$w$, the function $f^w$ is of the form of encoding functions where
the helper sends one message to the encoder and decoder. Therefore
we conclude that anything achievable with a scheme from
Definition~\ref{def_code_general}, is achievable by time-sharing
where the helper sends one message to the encoder and decoder.
\hfill\QED

The statement of Theorem~\ref{theo:R_g} can be extended to rates
$R_e$ slightly lower than $R_d$. This extension is based on the
simple observation that the source encoder knows $X$, which can
serve as side information in decoding the message sent by the
helper. Therefore, any message $T_2$ sent to the source decoder
can undergo a stage of binning with respect to $X$. As an extreme
example, consider the case where $R_e\geq H(Y|X)$. The source
encoder can fully  recover $Y$, hence there is no advantage in
transmitting to  the encoder at  rates higher than $H(Y|X)$; the
decoder, on the other hand, can benefit from rates in the region
$H(Y|X)<R_d<H(Y|Z)$. This rate interval is not empty due to the
Markov chain $Y - X - Z$. These observations are summarized in the
next theorem.
\begin{theorem}
\label{theo:R_g2}

\noindent
\begin{enumerate}
\item
\label{part:1_R_g2} Let $(U,V)$ achieve a point $(R,R')$ in ${\cal
R}_{Y-X-Z}(D)$, i.e., \bea
R &=& I(X;U|V,Z)\nonumber\\
R'&=& I(Y;V|Z)=I(V;Y)-I(V;Z) \label{eq:R_g2_1}\\
D &\geq& \mathbb{E}d(X,\hat{X}(U,V,Z)), \label{eq:R_g2_2}\\
 & & \ \ \ \ \ V - Y - X - Z.
\eea Then $(R,R_e,R')\in {\cal R}_g^O(D)$ for every $R_e$
satisfying \bea
 R_e &\geq& I(V;Y|Z) - I(V;X|Z)\nonumber\\
   & = &I(V;Y) - I(V;X). \label{eq:R_g2_5}
\eea
\item
Let $(R,R')$ be an outer point of ${\cal R}_{Y-X-Z}(D)$. That is, \be
(R,R')\not\in {\cal R}_{Y-X-Z}(D). \ee Then $(R,R_e,R')$ is an outer
point of ${\cal R}_g^O(D)$ for any $R_e$, i.e., \be
(R,R_e,R')\not\in {\cal R}_g^O(D)\ \ \forall\ R_e. \ee
\end{enumerate}
\end{theorem}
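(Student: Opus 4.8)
The plan is to handle the theorem's two parts separately: Part~\ref{part:1_R_g2} is an achievability claim requiring an explicit scheme, whereas Part~2 is a converse that I would deduce from Theorem~\ref{theo:R_g} together with an operational monotonicity argument. The guiding intuition is that the encoder, holding $X$, sits ``closer'' to $Y$ in the chain $V-Y-X-Z$ than the decoder, which holds only $Z$; hence less helper rate is needed to convey the same description $V^n$ to the encoder than to the decoder.

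For Part~\ref{part:1_R_g2} I would use a \emph{double-binning} of a single helper codebook. First generate one codebook of $V^n$ sequences of size $2^{nI(V;Y)}$ drawn i.i.d.\ from the marginal $p(v)$ induced by the given $(U,V)$, so that on observing $Y^n$ the helper finds a $V^n$ jointly typical with $Y^n$. The key step is that the helper bins this \emph{same} codebook in two independent ways: one partition into $2^{nR_d}$ bins tailored to decoder side information $Z$, and one into $2^{nR_e}$ bins tailored to encoder side information $X$, transmitting the appropriate bin index over each link. Standard Wyner--Ziv arguments then show that the decoder recovers $V^n$ from $(T_d,Z^n)$ whenever $R_d\ge I(V;Y)-I(V;Z)=R'$, while the encoder recovers the \emph{same} $V^n$ from $(T_e,X^n)$ whenever $R_e\ge I(V;Y)-I(V;X)$; the latter is exactly the bound~(\ref{eq:R_g2_5}), since the Markov chain yields the identity $I(V;Y|Z)-I(V;X|Z)=I(V;Y)-I(V;X)$. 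Once $V^n$ is common to both terminals, the remainder is precisely the common-helper scheme of Theorem~\ref{t_rate_dis_helper_side}: the source encoder applies Wyner--Ziv coding to $X$ with $V$ present at both ends, coding variable $U$ and decoder side information $Z$, at rate $I(X;U|V,Z)=R$, and the decoder forms $\hat X(U,V,Z)$ to meet~(\ref{eq:R_g2_2}). This establishes $(R,R_e,R')\in{\mathcal R}_g^O(D)$.

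For Part~2 I would argue by contraposition, showing ${\mathcal R}_g^O(R_e,R',D)\subseteq{\mathcal R}(R',D)$ for every $R_e$. The first ingredient is an operational monotonicity: increasing the encoder's helper rate cannot hurt, because the encoder may simply ignore extra bits, so ${\mathcal R}_g^O(R_e,R',D)\subseteq{\mathcal R}_g^O(H(Y),R',D)$ for all $R_e$. The second ingredient is Theorem~\ref{theo:R_g}: since $H(Y)\ge R'=R_d$, we have ${\mathcal R}_g^O(H(Y),R',D)={\mathcal R}(R',D)$. Chaining the two inclusions gives ${\mathcal R}_g^O(R_e,R',D)\subseteq{\mathcal R}(R',D)$ for every $R_e$; hence if $(R,R')\notin{\mathcal R}_{Y-X-Z}(D)$, i.e.\ $R\notin{\mathcal R}(R',D)$, then $R\notin{\mathcal R}_g^O(R_e,R',D)$ for all $R_e$, which is precisely the assertion $(R,R_e,R')\notin{\mathcal R}_g^O(D)$ for every $R_e$.

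The joint-typicality error analyses in Part~\ref{part:1_R_g2} are routine Wyner--Ziv calculations that I would only sketch. The step I expect to require the most care is the simultaneous correctness of the two binnings: I must verify that the encoder and the decoder, reading different bin indices of the same codebook with different side information, decode to the \emph{identical} $V^n$, so that the subsequent $U$-layer is consistent at both ends. This is exactly where the Markov structure $V-Y-X-Z$ is needed, as it guarantees that the helper's chosen $V^n$ is simultaneously jointly typical with $X^n$ and with $Z^n$ and is, with high probability, the unique such codeword in each received bin.
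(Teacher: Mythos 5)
Your proposal is correct and matches the paper's (largely omitted) argument: Part~1 via an additional binning of the helper's $V^n$ description exploiting the encoder's side information $X$ (the paper bins the decoder's message a second stage rather than binning the codebook twice independently, an immaterial difference), with the identity $I(V;Y|Z)-I(V;X|Z)=I(V;Y)-I(V;X)$ following from $V-Y-X-Z$ exactly as you note; Part~2 as a direct consequence of Theorem~\ref{theo:R_g} via monotonicity in $R_e$. Your write-up in fact supplies more detail than the paper, which declares these steps straightforward and omits them.
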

The proof of Part 1 is based on binning, as described above. In
particular, observe that $R_e$ given in~(\ref{eq:R_g2_5}) is lower
than $R'$ of~(\ref{eq:R_g2_1}) due to the Markov chain $V - Y - X
- Z$. Part 2 is a partial converse, and is a direct consequence of
Theorem~\ref{theo:R_g}. The details, being straightforward, are
omitted.

\appendices

\section{Proof of the the technique for verifying Markov relations \label{app_proof_technique}}
\emph{Proof} First let us prove that three random variables $X,Y,Z$,
with a joint distribution of the form
\begin{equation}
p(x,y,z)=f(x,y)f(y,z),
\end{equation}
satisfy the Markov chain $Y-X-Z$. Consider,
\begin{equation}
p(z|y,x)=\frac{f(x,y)f(y,z)}{f(x,y)\left(\sum_{z} f(y,z)\right)
}=\frac{ f(y,z)}{\sum_{z} f(y,z)},
\end{equation}
and since the expression does not include the argument $x$ we
conclude that $p(z|y,x)=p(z|y)$.

For the more general case, we first extend the sets $X_{\mathcal
G_{1}}$ $X_{\mathcal G_{3}}$. We start by defining
$\overline{\mathcal G}_{1}={\mathcal G}_{1}$ and $\overline{\mathcal
G}_{3}={\mathcal G}_{3}$, and then we  add to $X_{\overline{\mathcal
G}_{1}}$ and to $X_{\overline{\mathcal G}_{3}}$ all their neighbors
 that are not in $X_{\mathcal G_{2}}$ (a neighbor to a group  is a
node that is connected by one edge to the an element in the group).
We repeat this procedure till there are no more nodes to add to
$X_{\overline{\mathcal G}_{1}}$ or $X_{\overline {\mathcal G}_{3}}$.
Note that since there are no paths from $X_{\mathcal G_{1}}$ to
$X_{\mathcal G_{3}}$ that do not pass through $X_{\mathcal G_{2}}$,
then a node can not be added to both sets $X_{\overline{\mathcal
G}_{1}}$ and $X_{\overline{\mathcal G}_{3}}$. The set of nodes that
are not in $(X_{\overline{\mathcal G}_{1}}, X_{{\mathcal
G}_{2}},X_{\overline{\mathcal G}_{3}})$ is denoted as $X_{{\mathcal
G}_{0}}$.

The sets $X_{{\mathcal G}_{0}}$ and $X_{\overline{\mathcal G}_{1}}$
and $X_{\overline{\mathcal G}_{3}}$ are connected only to
$X_{{\mathcal G}_{2}}$ and not to each other, hence the joint
distribution of $(X_{{\mathcal G}_{0}},X_{\overline{\mathcal
G}_{1}}, X_{{\mathcal G}_{2}},X_{\overline{\mathcal G}_{3}})$ is of
the following form
\begin{equation}
p(X_{{\mathcal G}_{0}},X_{\overline{\mathcal G}_{1}}, X_{{\mathcal
G}_{2}}, X_{\overline{\mathcal G}_{1}} )=f(X_{{\mathcal
G}_{0}},X_{{\mathcal G}_{2}})f(X_{\overline{\mathcal
G}_{1}},X_{{\mathcal G}_{2}})f(X_{\overline{\mathcal
G}_{3}},X_{{\mathcal G}_{2}}).
\end{equation}
By marginalizing over $X_{{\mathcal G}_{0}}$ and using the claim
introduced in the first sentence of the proof we obtain the Markov
chain $X_{\overline{\mathcal G}_{1}}-X_{{\mathcal
G}_{2}}-X_{\overline{\mathcal G}_{3}}$, whcih implies $X_{{\mathcal
G}_{1}}-X_{{\mathcal G}_{2}}-X_{{\mathcal G}_{3}}$.
 \hfill \QED

\section{Proof of Lemma \ref{lemma:properties_R}\label{app_lemma_properties}}
\begin{proof}
To prove Part~\ref{lemma_properties_R_convex}, let $Q$ be a time
sharing random variable, independent of the source triple
$(X,Y,Z)$. Note that
\begin{eqnarray}
 I(Y;U|Z,Q)&\stackrel{(a)}{=}&I(Y;U,Q|Z)=I(Y;\tilde{U}|Z),  \nonumber \\
I(Z;V|U,X,Q)&=&I(Z;V|\tilde U,X), \nonumber \\
I(X;W|U,V,Z,Q)&=&I(X;W|\tilde U,V,Z),\nonumber
\end{eqnarray}
%I(X;\hat{X}|U,Q) &=& I(X;\hat{X}|\tilde{U})\nonumber \\
%I(Y;U|Q) &=& I(Y;U,Q) = I(Y;\tilde{U}),
%\end{eqnarray}
where $\tilde{U}=(U,Q)$, and in step (a) we used the fact that $Y$
is independent of $Q$. This proves the convexity.
%
%The proof of part~\ref{lemma_properties_R_SI_convex} parallels
%that of part~\ref{lemma_properties_R_convex} of
%Lemma~\ref{lemma:properties_R} and is omitted. For
%part~\ref{lemma_properties_R_SI_size}, using the support
%lemma~\cite{Csiszar81}, the random variable $V$ should have
%$|{\cal Y}|-1$ elements  to preserve $p(y)$, plus three elements
%to preserve $I(V;Y|Z)$, $I(X;U|V,Z)$, and the distortion
%constraint. Once $V$ is fixed, $U$ should have $|{\cal X}||{\cal
%V}|-1$ elements to preserve the joint distribution $p(x,v)$, and
%two more elements to preserve $I(X;U|V,Z)$ and the distortion
%constraint. This completes the proof of the lemma.

To prove Part~\ref{lemma_properties_R_size}, we invoke the support
lemma~\cite[pp. 310]{Csiszar81} three times, each time for one of
the auxiliary random variables $U,V,W$. The external random
variable $U$ must have $|{\cal Y}|-1$ letters to preserve $p(y)$
 plus five more to preserve the expressions
$I(Y;U|Z)$, $I(Z;V|U,X)$, $I(X;W|U,V,Z)$ and the distortions
$\mathbb{E}d_x(X,\hat X(U,V,Z))$ $\mathbb{E}d_z(Z,\hat Z(U,W,X))$.
Note that the joint $p(x,y,z)$ is preserved because of the Markov
form $U-Y-X-Z$, and the structure of the joint distribution given
in (\ref{e_p_twoway}) does not change. We fix $U$, which now has a
bounded cardinality, and we apply the support lemma for bounding
$V$. The external random variable $V$ must have $|{\cal U}||{\cal
Z}|-1$ letters to preserve $p(u,z)$
 plus four more to preserve the expressions
 $I(Z;V|U,X)$, $I(X;W|U,V,Z)$ and the distortions $\mathbb{E}d_x(X,\hat X(U,V,Z))$, $\mathbb{E}d_z(Z,\hat Z(U,W,X))$.
 Note that because of the Markov structure $V-(U,Z)-(X,Y)$ the
 joint distribution $p(u,z,x,y)$ does not change.
 Finally, we fix $U,V$ which now have a bounded cardinality and we apply the support lemma for bounding
$W$. The external random variable $W$ must have $|{\cal U}||{\cal
V}||{\cal X}|-1$ letters to preserve $p(u,v,x)$
 plus two more to preserve the expressions
  $I(X;W|U,V,Z)$ and the distortions  $\mathbb{E}d_z(Z,\hat Z(U,W,X))$.
 Note that because of the Markov structure $W-(U,V,X)-(Z,Y)$ the
 joint distribution $p(u,v,x,y,z)$ does not change.
\end{proof}

\section{Proof of Lemma \ref{l_wyner_ziv_side_gaussian} \label{app_proof_wyner_siz_side}}

% {\it (Gaussian instance of the Wyner-Ziv probwith side information at the encoder and decoder.)}
\label{app_lema_wyner_ziv_side_gaussian} Since $W,X,Z$ are jointly
Gaussian, we have $\mathbb E[X|W,Z]=\alpha W+\beta Z$, for some
scalars $\alpha, \beta$. Furthermore, we have
\begin{equation}
X=\alpha W+\beta Z + N,
\end{equation}
where $N$ is a Gaussian random variable independent of $(W,Z)$ with
zero mean and variance $\sigma^2_{X|W,Z}$. Since $W$ is known to the
encoder and decoder we can subtract $\alpha W$ from $X$, and then
using Wyner-Ziv coding for the Gaussian case
\cite{Wyner78_gaussian_WZ} we obtain
\begin{equation}
R(D)=\frac{1}{2}\log \frac{\sigma^2_{X|W,Z}}{D}.
\end{equation}
Obviously, one can not achieve a rate smaller than this even if $Z$
is known both to the encoder and decoder, and therefore this is the
achievable region.

\bibliographystyle{unsrt}
\bibliographystyle{IEEEtran}
%\bibliography{../GE-channel/ref}
%\bibliography{ref}

\end{document}